\documentclass[12pt,a4paper]{article}
\usepackage{amsmath,amsfonts,amssymb}
\usepackage{array}
\usepackage[caption=false,font=normalsize,labelfont=sf,textfont=sf]{subfig}
\usepackage{booktabs,caption}
\usepackage{relsize}
\usepackage[linesnumbered,ruled,vlined]{algorithm2e}
\usepackage{makecell}
\usepackage{amsthm}
\usepackage{textcomp}
\usepackage{stfloats}
\usepackage{url}
\usepackage{bm}

\usepackage{xcolor}
\usepackage{verbatim}
\usepackage{graphicx}
\usepackage{scalerel}
\usepackage{romannum}

\usepackage[left=2.00cm, right=2.00cm, top=2.00cm, bottom=2.00cm]{geometry}

\SetCommentSty{mycommfont}

\SetKwInput{KwInput}{Input}                
\SetKwInput{KwOutput}{Output}              
\newcommand\bigzero{\makebox(0,0,0){{\Huge 0}}}
\hyphenation{op-tical net-works semi-conduc-tor IEEE-Xplore}
\def\BibTeX{{\rm B\kern-.05em{\sc i\kern-.025em b}\kern-.08em
		T\kern-.1667em\lower.7ex\hbox{E}\kern-.125emX}}
\usepackage{balance}
\usepackage{lipsum}
\usepackage{orcidlink}
\usepackage[usestackEOL]{stackengine}
\usepackage[nottoc,notlot,notlof]{tocbibind}
\usepackage[utf8]{inputenc}
\newtheorem{theorem}{Theorem}
\newtheorem{corollary}{Corollary}[theorem]
\newtheorem{lemma}[theorem]{Lemma}
\newtheorem{defn}{Definition}

\newtheorem{exmp}{Example}
\newtheorem{proposition}{Proposition}
\newtheorem*{remark}{Remark}
\date{}
\begin{document}
	\title{ACD codes over skew-symmetric dualities}
	\author{Astha Agrawal\orcidlink{0000-0002-4583-5613} and R. K. Sharma\orcidlink{0000-0001-5666-4103}}
	\markboth{}%
	{ACD codes over skew-symmetric dualities}
	\maketitle
\vspace{-12mm}
\begin{center}
		\noindent {\small Department of Mathematics, Indian Institute of Technology Delhi, New Delhi 110016, India.}
\end{center}
\footnotetext[1]{{\em E-mail addresses:} \url{asthaagrawaliitd@gmail.com} (A. Agrawal), \url{rksharmaiitd@gmail.com} (R. K. Sharma).}
	\begin{abstract}
		The  applications of additive codes mainly lie in quantum error correction and quantum computing. Due to their applications in quantum codes, additive codes have	grown in importance.  In addition to this, additive codes allow the implementation of a variety of dualities. The article begins by developing the properties of Additive Complementary Dual  (ACD) codes  with respect to arbitrary dualities over finite abelian groups. Further, we introduce a subclass of non-symmetric dualities referred to as skew-symmetric dualities. Then, we  precisely count  symmetric and skew-symmetric dualities over finite fields. Two conditions have been obtained: one is a necessary and sufficient condition,  and the other  is a  necessary condition. The necessary and sufficient condition is for an additive code to be an ACD code  over arbitrary dualities. The  necessary condition is on the generator matrix of an ACD code over skew-symmetric dualities. We provide bounds for the highest possible minimum distance of ACD codes over skew-symmetric dualities. Finally, we find some new quaternary ACD codes over non-symmetric dualities with better parameters than the symmetric ones.
	\end{abstract}
	\textit{MSC 2020:} Primary 94B60; Secondary 94B99.\\
	\textbf{Keywords:} Additive codes, ACD codes, Dualities, Group Characters.

	\section{Introduction}
	Additive  codes are a broad class of codes. These codes are defined by additive groups. Additive codes are a more generalised version of linear codes, i.e., all linear codes are additive codes, but the converse need not be true; see \cite{codingtheory}. For the first time in 1973, Delsarte et al. defined additive codes in terms of association schemes, and with the help of additive codes, they gave constructions of schemes with good performance (see, for instance, \cite{algebraicapproach,associationschemes}). Additive codes have gained significance as a topic in algebraic coding theory due to their usage in quantum error correction and quantum computing \cite{quantum}.  The trace inner product, for instance, has been used to explore additive codes over $ \mathbb{F}_4$. Building quantum codes from additive codes has received a lot of attention in the literature \cite{AsymmetricQuantumCodes,quantumcodesfromselfdual,Quantumstabilizercode}. In addition to their usage in quantum computing, additive codes permit the implementation of a range of dualities \cite{Orthogonality}. For an additive code, the product of the cardinalities of the code and its dual with respect to any duality must be equal to the cardinality of the ambient space \cite{Algebraiccoding}. Dougherty et al. \cite{selfdualadditive}  defined  self-dual additive code over a finite abelian group $ \mathbb{Z}_{p^e} $ in terms of an arbitrary duality. They showed the existence of self-dual additive codes under any duality  over a finite abelian group $\mathbb{Z}_{p^e}$. In 2021,  Dougherty and Fernández-Córdoba \cite{AdditiveGcode} defined additive $ G$-codes over finite fields and  proved that if $ M $ and $ M' $ are two dualities, then $ C^M $ and $ C^{M'} $ are equivalent codes. For a variety of dualities, they demonstrated the existence of self-dual codes and established a connection between them and formally self-dual and linear self-dual codes.\\
	Linear complementary dual codes were defined by Messay \cite{lcd} in 1992. He gave a necessary and sufficient condition  for a linear code to be an LCD code. Like linear complementary dual codes (LCD for short), Dougherty et al. \cite{ACDgroupcharacter} defined the notion of additive complementary dual codes (ACD for short) for a variety of dualities. This work was focused on quaternary ACD codes over symmetric dualities.  During our search in MAGMA, we identified improved parameters for quaternary ACD codes over non-symmetric dualities. This prompted our interest in studying non-symmetric dualities. In this article, we introduce a subclass of non-symmetric dualities known as skew-symmetric dualities. We give a necessary and sufficient condition for an additive code to be an ACD code over finite fields for arbitrary dualities.
	
	The article is structured as follows: Section \ref{Preliminaries} provides the necessary definitions and notations used throughout the paper. Section \ref{ACD codes over arbitrary duality} presents results on ACD codes for arbitrary dualities over finite abelian groups and demonstrates the existence of non-trivial ACD codes of length $n$ over $\mathbb{F}_{p^e}$, where $e>2$. In Section \ref{skew-symmetric dualities}, skew-symmetric dualities are introduced. We also give an exact count of symmetric and skew-symmetric dualities over  finite fields. Section \ref{ACD code on skew-symmetric dualities} gives a necessary and sufficient condition for an additive code to be an ACD code over a finite field for any duality. In Section \ref{bounds on skew-symmetric dualities}, some bounds are derived on the highest possible minimum distance for a given length and size of an ACD code for skew-symmetric dualities.  Finally, in Section \ref{non-symmetric dualities}, we find new quaternary codes under non-symmetric dualities with better parameters than the symmetric ones.
	\section{Preliminaries}\label{Preliminaries}
	Let $\mathbb{F}_{q}$ be a finite field of order $q=p^e$, where $p$ is a prime and $e>0$. The additive group  $(\mathbb{F}_{q},+)$ is isomorphic to $\underbrace{\mathbb{Z}_p\times\mathbb{Z}_p\times\dots\times\mathbb{Z}_p}_\text{$e$ times}$.
	\begin{defn}
		An additive code $C$ of length $n$ is defined as an additive subgroup of  $G^n$, where $G$ is an  abelian group of finite order.
	\end{defn} 
	Let $C\subseteq \mathbb{F}_q^n$, then $C$ is a linear code if it is a subspace of $\mathbb{F}_q^n$, here $\mathbb{F}_q^n$ is considered as $n$ dimensional vector space over $\mathbb{F}_q$. On the other side, additive codes over $\mathbb{F}_q$ are not necessary subspaces rather they are additive subgroups of  an additive group  $(\mathbb{F}_{q}^n,+)$. In addition, additive codes of length $n$ are $\mathbb{F}_p$-linear codes over $\mathbb{F}_{p^e}$.
	\begin{defn}
		A generator matrix for an additive code $C$ over $\mathbb{F}_{p^e}$ is a matrix $\mathcal{G}$ that generates the code $C$  by  $\mathbb{F}_p$-linear combinations of  rows of $\mathcal{G}$, i.e., $C$ contains the vector $\sum\alpha_i\mathcal{G}_i$, where $\mathcal{G}_i$ is the $i$-th row of $ \mathcal{G} $ and $\alpha_i\in\mathbb{F}_p$.
	\end{defn}
	\begin{defn}
		The Hamming weight of a vector $\textbf{x}$ is denoted by   $wt(\textbf{x})$ and is equal to the number of non-zero elements in the vector.
	\end{defn}
	 The Hamming weight of $\textbf{x}-\textbf{y}$, or $wt(\textbf{x}-\textbf{y}$), is used to describe the distance between two codewords, $\textbf{x}$ and $\textbf{y}$, in $C$. The distance of an additive code $C$ is defined as $d=\min\{wt(\textbf{x})\mid \textbf{x}\neq0\}$. 
	
	Additive codes allow the range of dualities defined by the character of a finite abelian group $G$. A character of  $G$ is defined as a homomorphism from $G$ to $\mathbb{C}^*,$ the multiplicative group of complex numbers. The collection of all characters on $G$ forms a group that is known as the character group $\hat G$, i.e., $\hat G=\{\chi|\chi\  \text{is a character on G}\}$. It is worth noting that this group $\hat G$ is isomorphic to $G$.  We can define  isomorphism  from $G$ to $\hat G$, which helps to give a character table or a duality $ M $ as follows: Let $\phi$ be an isomorphism from $ G $ to $\hat G$, defined as $\phi(x_i)=\chi_{x_i}$, where $x_i\in G$ and $\chi_{x_i}$ is a character. Then $ij$-th entry of the character table $M$ is $\chi_{x_i}(x_j),\forall x_i,x_j\in G$. With the help of these dualities, we can define orthogonality on additive codes.\\
	If we consider $G=(\mathbb{F}_{p^e},+)$, then a character is defined as  $\chi: \mathbb{F}_{p^e}\to P\subset \mathbb{C}^*$ such that $\chi(x)=\xi$, where $\xi \in P$ and $P$ is the set of all $p$-th roots of unity, which is a cyclic group of order $p$. Let $\textbf{a}=(a_1,a_2,a_3,\dots,a_n)$ and $\textbf{b}=(b_1,b_2,b_3,\dots,b_n)\in \mathbb{F}_q^n$, then $\chi_\textbf{a}(\textbf{b})=\prod_{i=1}^{n}\chi_{a_i}(b_i)$.
	
	\begin{defn}
		Let $C$ be an additive code of length $n$ over a group  $G$, then the orthogonal $C^M$, for a fix duality $ M $, is defined as $C^M=\{(x_1,x_2,\dots,x_n)\in G^n\mid \prod_{i=1}^{n}\chi_{x_i}(c_i)=1, \ \forall (c_1,c_2,\dots,c_n)\in C\}$.
	\end{defn} 
	We can observe that $C^M$ is always an additive code, irrespective of $C$. Additive codes satisfy the cardinality condition for all types of dualities, i.e., $|C||C^M|=|G^n|$ see \cite{Algebraiccoding}. It is important to know that the particular duality of a group  is taken into account when defining this orthogonal. That is, if we change the duality, then  the orthogonal of code may also change. In fact, in one duality, a code can be the same as its dual but not in another.  If $C$ is an additive code over a finite abelian group $ G $ with a fixed duality $ M $, then $(C^M)^M=C$, where $M$ is a symmetric duality, that is,  $M=M^T$ or $\chi_x(y)=\chi_y(x)$, for all $x,y\in G$. However, many dualities are not symmetric, for which $M\neq M^T$ or $\chi_x(y)\neq\chi_y(x)$ for some $x,y\in G$. Hence, we can divide the dualities into two classes, i.e., symmetric  and non-symmetric dualities. 
	
	\begin{exmp}\label{duality over F_4}
		Let $G=(\mathbb{F}_9,+)=\left\langle 1,\omega \right\rangle=\{0,1,2,\omega,\omega+1,\omega+2,2\omega,2\omega+1,2\omega+2\}$ be a finite abelian group. Let's see an example of symmetric and non-symmetric duality over the additive group of $\mathbb{F}_9$. 
			\begin{center}
			\begin{table}[htbp]
			\begin{center}
				\begin{tabular}{|c|c|c|c|c|c|c|c|c|c|}
					\hline
					$ \bm{N_1} $&0&1&2&$ \omega $&$ \omega+1 $&$ \omega+2 $&$ 2\omega $&$ 2\omega+1 $&$ 2\omega+2 $\\
					\hline
					0&1&1&1&1&1&1&1&1&1\\
					\hline
					1&1 &$\alpha$&   1&    $\alpha $ & $\alpha^2$ & $\alpha^2$ &   1&$ \alpha^2$&   $\alpha$\\
					\hline
					2&	1 &   1 &$\alpha^2$ & $\alpha$ & $\alpha$&   1& $\alpha$ &$\alpha^2 $& $\alpha^2$\\
					\hline
					$ \omega $&1  & $\alpha$ & $\alpha$ & 1& $\alpha$& $\alpha^2$ &$\alpha^2$& 1& $\alpha^2$\\
					\hline
					$ \omega+1 $&	1 & $\alpha^2$& $\alpha$& $\alpha$& 1& $\alpha$ &$\alpha^2$& $\alpha^2$ &1\\
					\hline
					$ \omega+2 $&1 & $\alpha^2$ &  1&  $\alpha^2$ & $\alpha$& $\alpha$ &  1& $\alpha$& $\alpha^2$\\
					\hline
					$ 2\omega $&	1 &  1 &$\alpha$ & $\alpha^2$ & $\alpha^2$ &  1& $\alpha^2$& $\alpha$ &$\alpha$\\
					\hline
					$ 2\omega+1 $&1 & $\alpha^2$& $\alpha^2$&  1& $\alpha^2$& $\alpha$& $\alpha$ &1&$\alpha$\\
					\hline
					$ 2\omega+2 $&1  & $\alpha$& $\alpha^2$  &$\alpha^2$ & 1& $\alpha^2$& $\alpha$& $\alpha$ & 1\\
					\hline
				\end{tabular}\vspace{0.5cm}
				
				\begin{tabular}{|c|c|c|c|c|c|c|c|c|c|}
					\hline
					$\bm{N_2}$&$0$&$1$&$2$&$\omega$&$\omega+1$&$\omega+2$&$2\omega$&$2\omega+1$&$2\omega+2$\\
					\hline
					$0$&$1$  &$1$  &$1$  &$1$  & $1$ & $1 $&$1$  &$1 $ &$1$  \\
					\hline
					$1$&	1&   1&   $\alpha$&  $\alpha^2$&  $\alpha^2$&   1& $\alpha^2$&    $\alpha$&   $\alpha$  \\
					\hline
					$2$&	1& $\alpha^2$&   $\alpha$&  $\alpha$&  1& $\alpha$& $\alpha^2$&  $\alpha^2$& 1 \\
					\hline
					$\omega$&1& $\alpha$& 1& $\alpha$& $\alpha^2$& $\alpha^2$& 1& $\alpha^2$& $\alpha$  \\
					\hline
					$\omega+1$&1& $\alpha$& $\alpha$& 1& $\alpha$& $\alpha^2$& $\alpha^2$& 1& $\alpha^2$ \\
				\hline
				$ \omega+2 $&1&   1& $\alpha^2$&  $\alpha$&  $\alpha$&   1& $\alpha$&  $\alpha^2$& $\alpha^2$ \\
				\hline
				$ 2\omega $&1& $\alpha$& $\alpha^2$&  $\alpha^2$& 1& $\alpha^2$& $\alpha$& $\alpha$& 1 \\
				\hline
				$ 2\omega+1 $&	1& $\alpha^2$& 1&  $\alpha^2$& $\alpha$& $\alpha$& 1&  $\alpha$& $\alpha^2$ \\
				\hline
				$ 	2\omega+2 $&	1& $\alpha^2$& $\alpha^2$&  1&  $\alpha^2$& $\alpha$& $\alpha$&  1& $\alpha$\\
				\hline
				\end{tabular}			
					
					\end{center}
			\end{table}
		\end{center}
		In the above table, `$\alpha$' denotes the primitive 3rd root of unity. We can observe that $N_1$ is a symmetric duality, and $N_2$ is a non-symmetric duality over $\mathbb{F}_9$.
	\end{exmp}
	\begin{defn}
		Let $C$ be an additive code over a group $G$. If $C\subseteq C^M$, then $C$ is said to be a self-orthogonal code with respect to duality $M$, and if $C=C^M$, then $C$ is said to be a self-dual code with respect to duality $M$.
	\end{defn}
	\begin{defn}
		An additive code $C$ over a group $G$ is said to be an  additive complementary dual code (ACD) with respect to the duality $M$ if $C\cap C^M=\{0\}$.
	\end{defn}
	The  following lemma presents a common approach for constructing bigger ACD codes by utilizing smaller ACD codes.
	\begin{lemma}\cite{ACDgroupcharacter}\label{cross}
		Let $ C $ and $ D $ be ACD codes of lengths $ n  $ and
		$ m $, respectively, over a group $ G $ with respect to duality $ M $. Then $ C\times D  $ is an ACD code with respect to duality $ M $ of length $ n + m $.
	\end{lemma}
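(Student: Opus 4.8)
The plan is to identify the $M$-dual of the product code explicitly and then collapse the ACD condition to the two factor conditions. First I would fix notation: write a generic codeword of $C\times D$ as $(\textbf{c},\textbf{d})$ with $\textbf{c}\in C\subseteq G^n$ and $\textbf{d}\in D\subseteq G^m$, and a generic element of the ambient space $G^{n+m}$ as $(\textbf{x},\textbf{y})$ with $\textbf{x}\in G^n$, $\textbf{y}\in G^m$. Because the defining character of $(\textbf{x},\textbf{y})$ evaluated at $(\textbf{c},\textbf{d})$ is a product over all $n+m$ coordinates, it factors across the two blocks as $\chi_{\textbf{x}}(\textbf{c})\,\chi_{\textbf{y}}(\textbf{d})$, which is the computation that makes the whole argument go through.

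The central claim I would establish is $(C\times D)^M=C^M\times D^M$. The inclusion $\supseteq$ is immediate, since if $\textbf{x}\in C^M$ and $\textbf{y}\in D^M$ then both factors above equal $1$ and hence so does their product. For $\subseteq$ I would use that $C$ and $D$, being additive codes, are subgroups and so contain $\textbf{0}$. Given $(\textbf{x},\textbf{y})\in (C\times D)^M$, specializing to $\textbf{d}=\textbf{0}$ forces $\chi_{\textbf{y}}(\textbf{0})=1$ (a character sends the identity to $1$), leaving $\chi_{\textbf{x}}(\textbf{c})=1$ for every $\textbf{c}\in C$, so that $\textbf{x}\in C^M$; symmetrically, specializing to $\textbf{c}=\textbf{0}$ yields $\textbf{y}\in D^M$.

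With the dual identified, the ACD property drops out from the elementary set identity $(A\times B)\cap(A'\times B')=(A\cap A')\times(B\cap B')$. Applying it gives
$$(C\times D)\cap (C\times D)^M=(C\times D)\cap(C^M\times D^M)=(C\cap C^M)\times(D\cap D^M).$$
Since $C$ and $D$ are ACD with respect to $M$, each factor on the right is $\{\textbf{0}\}$, whence the intersection is $\{\textbf{0}\}$ and $C\times D$ is ACD of length $n+m$, as required.

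I expect the only step with any content to be the dual identity $(C\times D)^M=C^M\times D^M$; the decoupling of the two coordinate blocks hinges on inserting the zero codeword, which is legitimate precisely because additive codes are subgroups. Everything after that is bookkeeping. As an optional consistency check one can invoke the cardinality rule $|C||C^M|=|G|^n$, which forces $|(C\times D)^M|=|C^M||D^M|$ and matches the claimed product structure.
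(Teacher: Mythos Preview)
Your argument is correct. Note, however, that the paper does not actually supply a proof of this lemma: it is quoted from \cite{ACDgroupcharacter} and used as a black box, so there is no in-paper proof to compare against. Your approach---establishing $(C\times D)^M=C^M\times D^M$ via the coordinatewise factorisation of the character pairing and the specialisation to the zero codeword in each block, then intersecting---is the standard route and is exactly what one would expect the cited proof to do.
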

	Now, we recall the following results from \cite{Dualities_AbelianGroup}.
		\begin{theorem}\cite[Theorem 3.1]{Dualities_AbelianGroup}
		Given a finite abelian group $G$ and a duality $M$ on $G$, $M^T $ is also a duality on $G$, where $M^T$ refers to the transpose of the character table corresponding to $M$.  
	\end{theorem}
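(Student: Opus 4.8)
The plan is to produce, from the isomorphism $\phi\colon G\to\hat G$ underlying $M$, a second isomorphism $\psi\colon G\to\hat G$ whose associated character table is exactly $M^T$. Recall that $M_{ij}=\chi_{x_i}(x_j)$, so $(M^T)_{ij}=\chi_{x_j}(x_i)$; thus I must exhibit, for each $x_i$, a character $\psi_{x_i}\in\hat G$ with $\psi_{x_i}(x_j)=\chi_{x_j}(x_i)$ for all $j$, and then check that $x_i\mapsto\psi_{x_i}$ is an isomorphism. Showing that the resulting character table coincides with $M^T$ is then immediate by construction.

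First I would define $\psi_{x_i}\colon G\to\mathbb{C}^*$ by $\psi_{x_i}(x_j)=\chi_{x_j}(x_i)$ and verify that it is a character. This is the step where the hypothesis that $\phi$ is a homomorphism is essential: since $\phi(x_j+x_k)=\chi_{x_j+x_k}$ equals $\phi(x_j)\phi(x_k)=\chi_{x_j}\chi_{x_k}$ in $\hat G$, evaluating at $x_i$ gives $\chi_{x_j+x_k}(x_i)=\chi_{x_j}(x_i)\chi_{x_k}(x_i)$, that is, $\psi_{x_i}(x_j+x_k)=\psi_{x_i}(x_j)\psi_{x_i}(x_k)$. Hence $\psi_{x_i}\in\hat G$.

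Next I would show that $\psi\colon x_i\mapsto\psi_{x_i}$ is a homomorphism $G\to\hat G$. For any $x_j$, using that each individual $\chi_{x_j}$ is itself a character, $\psi(x_i+x_k)(x_j)=\chi_{x_j}(x_i+x_k)=\chi_{x_j}(x_i)\chi_{x_j}(x_k)=\bigl(\psi(x_i)\psi(x_k)\bigr)(x_j)$, so $\psi(x_i+x_k)=\psi(x_i)\psi(x_k)$. For injectivity, suppose $\psi(x_i)$ is the trivial character, so $\chi_{x_j}(x_i)=1$ for every $x_j\in G$. Because $\phi$ is surjective, the $\chi_{x_j}$ exhaust all of $\hat G$, so $x_i$ is annihilated by every character of $G$; by non-degeneracy of the character pairing on a finite abelian group this forces $x_i=0$. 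Thus $\psi$ is injective, and since $|G|=|\hat G|<\infty$ it is an isomorphism. By construction the character table of $\psi$ has $ij$-entry $\psi(x_i)(x_j)=\chi_{x_j}(x_i)=(M^T)_{ij}$, so $M^T$ is a duality.

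The main obstacle I anticipate is conceptual rather than computational: recognizing that the single hypothesis that $\phi$ is an isomorphism supplies both ingredients needed, namely multiplicativity in the first slot of the pairing (to make each $\psi_{x_i}$ a character and $\psi$ a homomorphism) and non-degeneracy (to get injectivity). Once these are isolated, the verifications are routine; one should simply take care that the bi-multiplicativity of $(x,y)\mapsto\chi_x(y)$ is being invoked symmetrically, once in each slot.
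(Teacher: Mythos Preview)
Your argument is correct: you define $\psi(x)(y)=\chi_y(x)$, verify it lands in $\hat G$ using that $\phi$ is a homomorphism, verify $\psi$ is a homomorphism using that each $\chi_y$ is a character, and obtain injectivity from surjectivity of $\phi$ together with the fact that characters of a finite abelian group separate points. This is the standard proof.

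Note, however, that the paper does not supply its own proof of this statement; it is quoted from \cite[Theorem 3.1]{Dualities_AbelianGroup} as background. So there is no ``paper's proof'' to compare against here---your write-up simply fills in what the paper takes for granted.
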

	\begin{remark}
		Let $\chi$ and $\chi'$ represent the duality $M$ and $M^T$, then we have $\chi'_x(y)=\chi_y(x),$ for all $x,y\in G$.
	\end{remark}
	\begin{theorem}\cite[Theorem 3.3]{Dualities_AbelianGroup}
		Given a finite abelian group $G$, if $C$ is an additive code, then $(C^M)^{M^T}=(C^{M^T})^{M}=C$, for any duality $M$ on $G$.
	\end{theorem}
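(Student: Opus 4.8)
The plan is to establish the single containment $C \subseteq (C^M)^{M^T}$ directly from the definitions, and then upgrade it to equality using the cardinality condition $|C|\,|C^M| = |G^n|$, which the excerpt records as valid for \emph{every} duality. The whole argument hinges on the transpose identity from the preceding remark: if $\chi$ represents $M$ and $\chi'$ represents $M^T$, then $\chi'_x(y) = \chi_y(x)$ for all $x,y \in G$.

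First I would unwind the two-fold orthogonal. A vector $y = (y_1,\dots,y_n)$ lies in $(C^M)^{M^T}$ exactly when $\prod_{i=1}^n \chi'_{y_i}(x_i) = 1$ for every $x = (x_1,\dots,x_n) \in C^M$; applying the remark's identity rewrites this as $\prod_{i=1}^n \chi_{x_i}(y_i) = 1$ for all $x \in C^M$. Now take any $c = (c_1,\dots,c_n) \in C$. By the very definition of $C^M$, each $x \in C^M$ satisfies $\prod_{i=1}^n \chi_{x_i}(c_i) = 1$. Substituting $y = c$ into the membership condition above, this is precisely what is required, so $c \in (C^M)^{M^T}$ and hence $C \subseteq (C^M)^{M^T}$.

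To obtain equality I would invoke the cardinality condition twice: once for the duality $M$, giving $|C|\,|C^M| = |G^n|$, and once for the duality $M^T$ applied to the code $C^M$, giving $|C^M|\,|(C^M)^{M^T}| = |G^n|$. Dividing these yields $|(C^M)^{M^T}| = |C|$, and a containment between finite sets of equal size is an equality, so $(C^M)^{M^T} = C$. The companion identity $(C^{M^T})^M = C$ follows by symmetry: the first recalled theorem guarantees that $M^T$ is again a duality, and $(M^T)^T = M$, so replacing $M$ by $M^T$ throughout the argument above gives the second equality verbatim.

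The only point demanding genuine care — and the sole place an error could hide — is the transpose bookkeeping encoded in $\chi'_x(y) = \chi_y(x)$; it is what lets the defining condition of $C^M$ line up exactly with the membership condition for $(C^M)^{M^T}$. Once that identity is applied in the correct order, both inclusions become tautological and the cardinality condition closes the gap, so I do not anticipate a substantive obstacle.
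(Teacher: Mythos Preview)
Your argument is correct: the containment $C\subseteq (C^M)^{M^T}$ via the transpose identity $\chi'_x(y)=\chi_y(x)$, followed by the cardinality condition applied to both $M$ and $M^T$, is exactly the standard route, and the symmetry step for the second equality is clean. Note, however, that the paper does not supply its own proof of this theorem; it is quoted as \cite[Theorem~3.3]{Dualities_AbelianGroup} and used as an input, so there is no in-paper argument to compare against.
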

		\begin{remark}
		For symmetric dualities,  it directly follows that $(C^M)^M=C$.
	\end{remark}
	\section{ACD codes over arbitrary dualities}\label{ACD codes over arbitrary duality}
In this section, we establish some properties of additive codes  over arbitrary dualities. Additionally, we prove that for any duality $M$ over a finite field,  an ACD code of any length $n$ exists.

	\begin{theorem}\label{equality}
		Let $C$ and $C'$ be  additive codes, and $M$ be any duality over the finite abelian group $G$. Then we have the following equalities:	\begin{enumerate}
			\item $ (C+C')^M=C^M\cap C'^M $.\label{equality1}
			\item $ (C\cap C')^M=C^M+C'^M $.\label{equality2}
		\end{enumerate}
	\end{theorem}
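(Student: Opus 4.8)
The plan is to prove both equalities by establishing two set inclusions each, relying on the fact that every character is a homomorphism and that the assignment $x\mapsto\chi_x$ is itself a homomorphism $G\to\hat G$; concretely, $\chi_{x+y}=\chi_x\chi_y$ and $\chi_x(a+b)=\chi_x(a)\chi_x(b)$ for all $x,y,a,b\in G$. I will also use the monotonicity of the orthogonal operator: if $A\subseteq B$ are additive codes, then $B^M\subseteq A^M$, which is immediate from the definition of $C^M$ as the set of vectors annihilating $C$ under the character pairing.

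For part~\ref{equality1}, the inclusion $(C+C')^M\subseteq C^M\cap C'^M$ follows from monotonicity, since $C\subseteq C+C'$ and $C'\subseteq C+C'$ force $(C+C')^M\subseteq C^M$ and $(C+C')^M\subseteq C'^M$. For the reverse inclusion I would take $x\in C^M\cap C'^M$ and an arbitrary element $c+c'\in C+C'$ with $c\in C$, $c'\in C'$; componentwise multiplicativity of each $\chi_{x_i}$ gives $\prod_i\chi_{x_i}(c_i+c'_i)=\prod_i\chi_{x_i}(c_i)\prod_i\chi_{x_i}(c'_i)=1\cdot 1=1$, so $x\in(C+C')^M$.

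For part~\ref{equality2}, the easy inclusion $C^M+C'^M\subseteq(C\cap C')^M$ is proved symmetrically: if $x=y+z$ with $y\in C^M$, $z\in C'^M$ and $w\in C\cap C'$, then using $\chi_{y_i+z_i}=\chi_{y_i}\chi_{z_i}$ one gets $\prod_i\chi_{y_i+z_i}(w_i)=\prod_i\chi_{y_i}(w_i)\prod_i\chi_{z_i}(w_i)=1$. The main obstacle is the reverse inclusion, which does not come from a direct pairing argument. I would close it by a counting argument: using the cardinality condition $|A||A^M|=|G^n|$, the subgroup identity $|C+C'|\,|C\cap C'|=|C|\,|C'|$, and part~\ref{equality1} (which gives $|C^M\cap C'^M|=|(C+C')^M|=|G^n|/|C+C'|$), one computes
\[
|C^M+C'^M|=\frac{|C^M|\,|C'^M|}{|C^M\cap C'^M|}=\frac{|G^n|\,|C+C'|}{|C|\,|C'|}=\frac{|G^n|}{|C\cap C'|}=|(C\cap C')^M|,
\]
so an inclusion of equal finite sets is an equality.

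Alternatively, part~\ref{equality2} can be deduced from part~\ref{equality1} without counting: applying part~\ref{equality1} to the codes $C^{M^T}$ and $C'^{M^T}$ and invoking $(X^{M^T})^M=X$ yields $(C^{M^T}+C'^{M^T})^M=C\cap C'$; applying $(\cdot)^{M^T}$ together with the double-dual theorem then gives $(C\cap C')^{M^T}=C^{M^T}+C'^{M^T}$, and since $M\mapsto M^T$ is an involution on the set of dualities this is exactly part~\ref{equality2} after renaming. I expect the bookkeeping between $M$ and $M^T$ to be the only delicate point of this second route, whereas the counting route is fully self-contained once the subgroup product formula is in hand.
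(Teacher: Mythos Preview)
Your proof is correct. Part~\ref{equality1} is handled exactly as in the paper. For part~\ref{equality2}, your \emph{alternative} route is essentially the paper's own argument: the paper applies part~\ref{equality1} with $M$ replaced by $M^T$ and $C,C'$ replaced by $C^M,C'^M$, obtaining $(C^M+C'^M)^{M^T}=C\cap C'$, and then invokes the double-dual identity $(X^{M^T})^M=X$ to conclude $C^M+C'^M=(C\cap C')^M$ directly---slightly cleaner than your version, which proves the statement for $M^T$ and then appeals to the involutivity of transposition. Your \emph{primary} route via cardinalities, however, is genuinely different: it trades the double-dual theorem for the subgroup product formula together with $|A||A^M|=|G^n|$. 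This buys you a more elementary, self-contained argument (you never need $(C^M)^{M^T}=C$), at the cost of importing a finite-group counting identity; the paper's route is shorter once the double-dual result from~\cite{Dualities_AbelianGroup} is on the table.
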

	\begin{proof}
		\begin{enumerate}
			\item Let $\textbf{h}=(h_1,h_2,\dots,h_n)\in (C+C')^M$, i.e., $\prod_{i=1}^{n}\chi_{h_i}(d_i)=1, \ \ \forall (d_1,d_2,\dots,d_n)\in C+C'.$ It follows that, $\prod_{i=1}^{n}\chi_{h_i}(c_i+0)=1, \ \ \forall (c_1,c_2,\dots,c_n)\in C$ and $\prod_{i=1}^{n}\chi_{h_i}(0+c'_i)=1, \ \ \forall (c'_1,c'_2,\dots,c'_n)\in C'.$ As a result, $\textbf{h}\in C^M\cap C'^M.$ Conversely, let $\textbf{h}=(h_1,h_2,\dots,h_n)\in C^M\cap C'^M$, i.e., $\prod_{i=1}^{n}\chi_{h_i}(c_i)=1, \ \ \forall (c_1,c_2,\dots,c_n)\in C$ and $\prod_{i=1}^{n}\chi_{h_i}(c'_i)=1, \ \forall (c'_1,c'_2,\dots,c'_n) \in C'.$ Further, $\prod_{i=1}^{n}\chi_{h_i}(c_i)\prod_{i=1}^{n}\chi_{h_i}(c'_i)=1$ which implies that $\prod_{i=1}^{n}\chi_{h_i}(c_i+c'_i)=1$. Hence, $\textbf{h}\in (C+C')^M$. This shows that $ (C+C')^M=C^M\cap C'^M $. 
			\item The first equality is true for all dualities and all additive codes. Thus, replace $M$ with $M^T$ and $C$ with $C^M$ in the first equality. We get, $ (C^M+C'^M)^{M^T}={(C^M)}^{M^T}\cap {(C'^M)}^{M^T}=C\cap C'$. This shows that $ C^M+C'^M =(C\cap C')^M.$
		\end{enumerate}
	\end{proof}

	\begin{proposition}\label{C ACD M and MT}
		An additive code $C$ is an ACD code with respect to duality $M$ if and only if it is also  an ACD  code with respect to $M^T$. 
	\end{proposition}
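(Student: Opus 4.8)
The plan is to convert the ACD condition into a statement about \emph{sums} of codes and then exploit the fact that dualizing swaps sums and intersections. Recall that the cardinality condition $|C|\,|C^M|=|G^n|$ holds for every duality, and since $|C^M|=|C^{M^T}|=|G^n|/|C|$, the index formula for subgroups of the abelian group $G^n$, namely $|C+C'|\,|C\cap C'|=|C|\,|C'|$, shows that $C\cap C^M=\{0\}$ is equivalent to $C+C^M=G^n$, and likewise that $C\cap C^{M^T}=\{0\}$ is equivalent to $C+C^{M^T}=G^n$. Hence it suffices to prove that $C+C^M=G^n$ if and only if $C+C^{M^T}=G^n$, which I will establish by proving the two implications independently and symmetrically.

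For the forward direction I would assume $C$ is ACD with respect to $M$, i.e.\ $C\cap C^M=\{0\}$, and apply the $M^T$-duality to both sides. Since $\{0\}^{M^T}=G^n$, and since part~(\ref{equality2}) of Theorem~\ref{equality}, applied with the duality $M^T$, gives $(C\cap C^M)^{M^T}=C^{M^T}+(C^M)^{M^T}$, the double-dual identity $(C^M)^{M^T}=C$ recalled above collapses the right-hand side to $C^{M^T}+C$. This yields $C+C^{M^T}=G^n$, which by the cardinality reduction of the first paragraph is exactly $C\cap C^{M^T}=\{0\}$. Thus $C$ is ACD with respect to $M^T$.

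The converse is obtained by running the same argument with the roles of $M$ and $M^T$ interchanged: starting from $C\cap C^{M^T}=\{0\}$ and applying the $M$-duality, part~(\ref{equality2}) of Theorem~\ref{equality} gives $(C\cap C^{M^T})^{M}=C^{M}+(C^{M^T})^{M}$, and the identity $(C^{M^T})^{M}=C$ reduces this to $C+C^{M}=G^n$, whence $C\cap C^M=\{0\}$.

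The computation itself is short, so I do not expect a serious obstacle; the only point requiring care is the bookkeeping of which duality is applied at each step, together with the observation that the double-dual theorem must be used in \emph{both} of its forms $(C^M)^{M^T}=C$ and $(C^{M^T})^{M}=C$, so that the two directions are genuinely symmetric and neither secretly assumes symmetry of $M$. A secondary point worth making explicit is the justification that the cardinality condition lets one pass freely between ``the intersection is trivial'' and ``the sum is the whole space''; this is exactly where the subgroup index formula enters, and it is the only ingredient not already isolated as a named result earlier in the excerpt.
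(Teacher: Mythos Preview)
Your proof is correct and uses the same machinery as the paper's: Theorem~\ref{equality}(\ref{equality2}), the double-dual identity $(C^M)^{M^T}=(C^{M^T})^M=C$, and the cardinality condition to pass between ``trivial intersection'' and ``sum is everything''. The only cosmetic difference is the direction in which you apply the duality: you dualize the \emph{hypothesis} $C\cap C^M=\{0\}$ with $M^T$ to obtain $C+C^{M^T}=G^n$, whereas the paper dualizes the \emph{target} expression $C\cap C^{M^T}$ with $M$ and shows its dual equals $C^M+C=G^n$; the two routes are formally equivalent, and your write-up is in fact more explicit about the index-formula step that the paper leaves tacit.
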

	\begin{proof}
		If $C$ is an ACD code with respect to duality $M$ then from \ref{equality2} of Theorem \ref{equality}, we have $(C\cap C^{M^T})^M=C^M+C=G^n$. This means that $  C\cap C^{M^T}=\{\textbf{0}\}$. Conversely, if $ C $ is  an ACD code  with respect to $M^T$,  then $ (C\cap C^M )^{M^T}= C^{M^T}+C=G^n$. This implies that $  C\cap C^{M}=\{\textbf{0}\}$.
	\end{proof}
	\begin{proposition}\label{C^{M^T} is ACD}
		If $C$ is an ACD code with respect to duality $M$, then $C^M$  and $C^{M^T}$ are ACD codes with respect to duality $M^T$ and $M$, respectively.
	\end{proposition}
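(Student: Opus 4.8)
The plan is to reduce both claims to the double-dual identity $(C^M)^{M^T}=(C^{M^T})^{M}=C$ recalled from \cite{Dualities_AbelianGroup}, together with the hypothesis $C\cap C^M=\{\textbf{0}\}$ and its companion $C\cap C^{M^T}=\{\textbf{0}\}$ supplied by Proposition \ref{C ACD M and MT}. In other words, I would not attempt any direct computation with characters; everything follows from the involutive behaviour of the two dualities and the equivalence already established.

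First I would show that $C^M$ is ACD with respect to $M^T$, i.e. that $C^M\cap (C^M)^{M^T}=\{\textbf{0}\}$. Substituting the identity $(C^M)^{M^T}=C$ rewrites the left-hand side as $C^M\cap C$, which is exactly $\{\textbf{0}\}$ by the hypothesis that $C$ is ACD with respect to $M$. This disposes of the first half with essentially no work.

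The second half is symmetric. To see that $C^{M^T}$ is ACD with respect to $M$, I would form $C^{M^T}\cap (C^{M^T})^{M}$ and use the other half of the double-dual identity, $(C^{M^T})^{M}=C$, to rewrite it as $C^{M^T}\cap C$. The one point needing care is that the hypothesis only literally provides $C\cap C^M=\{\textbf{0}\}$; to conclude that $C^{M^T}\cap C=\{\textbf{0}\}$ I invoke Proposition \ref{C ACD M and MT}, which guarantees that $C$ is simultaneously an ACD code with respect to $M$ and with respect to $M^T$, hence $C\cap C^{M^T}=\{\textbf{0}\}$.

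I do not expect a genuine obstacle here: once the double-dual identity is in hand, each case is a single intersection rewrite. The only step where a gap could creep in is the second part, where one must remember to pass through Proposition \ref{C ACD M and MT} so that the relevant intersection is known to vanish; skipping it would leave the triviality of $C\cap C^{M^T}$ unjustified, since only ACD-ness with respect to $M$ is assumed at the outset.
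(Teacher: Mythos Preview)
Your proposal is correct and follows essentially the same route as the paper: invoke the double-dual identity $(C^M)^{M^T}=(C^{M^T})^{M}=C$ to rewrite each intersection, and appeal to Proposition~\ref{C ACD M and MT} for the triviality of $C\cap C^{M^T}$. The paper's only cosmetic difference is that it records $C\cap C^{M^T}=\{\bm{0}\}$ up front before handling both cases in one line.
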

	\begin{proof}
		Let $C$ be an ACD code with respect to duality $M$, i.e., $C\cap C^M=\{\bm{0}\}$. By the above Proposition \ref{C ACD M and MT}, we note that $C\cap C^{M^T}=\{\bm{0}\}$. Then, we have $C^M\cap (C^M)^{M^T}=C^M\cap C=\{\bm{0}\}$ and  $C^{M^T}\cap (C^{M^T})^M=C^{M^T}\cap C=\{\bm{0}\}$. This completes the proof.
	\end{proof}
	\begin{proposition}
		An additive code $C$ is self-dual (self-orthogonal)  with respect to  duality $M$ if and only if it is also self-dual (self-orthogonal)  with respect to  duality $M^T$.
	\end{proposition}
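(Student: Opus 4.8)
The plan is to reduce both equivalences to two elementary facts: the orthogonal operation is inclusion-reversing, and the double-orthogonal identity $(C^M)^{M^T}=(C^{M^T})^M=C$ recalled above. The inclusion-reversing property, namely that $A\subseteq B$ implies $B^M\subseteq A^M$, follows at once from the definition of the orthogonal: if a vector satisfies the character-product condition $\prod_{i=1}^n\chi_{x_i}(b_i)=1$ for every element of $B$, then in particular it satisfies it for every element of the subset $A$, so it lies in $A^M$. I would record this as a one-line preliminary observation, valid for any duality.

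For the self-orthogonal case I would argue as follows. Suppose $C$ is self-orthogonal with respect to $M$, so $C\subseteq C^M$. Applying the $M^T$-orthogonal to both sides and using inclusion-reversal gives $(C^M)^{M^T}\subseteq C^{M^T}$; rewriting $(C^M)^{M^T}=C$ by the recalled identity yields $C\subseteq C^{M^T}$, so $C$ is self-orthogonal with respect to $M^T$. The converse is perfectly symmetric: starting from $C\subseteq C^{M^T}$, apply the $M$-orthogonal and use $(C^{M^T})^M=C$ to obtain $C\subseteq C^M$.

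For the self-dual case the argument is even more direct and uses only the recalled identity, with no appeal to inclusion-reversal. If $C=C^M$, then applying the $M^T$-orthogonal to both sides gives $C^{M^T}=(C^M)^{M^T}=C$, so $C$ is self-dual with respect to $M^T$; the converse follows by interchanging the roles of $M$ and $M^T$ and using $(C^{M^T})^M=C$.

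I expect no serious obstacle here. The only point requiring a moment's care is recognizing that one must pair $M$ with $M^T$ (rather than $M$ with itself) when applying the orthogonal, since for non-symmetric dualities the single identity $(C^M)^M=C$ fails and only the mixed double-orthogonal identity holds. Once that pairing is in place, both equivalences are immediate, and the self-orthogonal statement subsumes the self-dual one as the special case of equality.
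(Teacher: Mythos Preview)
Your proposal is correct and follows essentially the same route as the paper: both use the mixed double-orthogonal identity $(C^M)^{M^T}=(C^{M^T})^M=C$ to handle the self-dual case directly, and for the self-orthogonal case your argument via inclusion-reversal is precisely what justifies the paper's bare assertion ``if $C\subseteq C^M$, then $C\subseteq C^{M^T}$,'' which it states without explanation. Your write-up is simply a more explicit version of the paper's terse proof.
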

	\begin{proof}
		If $C=C^M$, then it is clear that $C^{M^T}=(C^M)^{M^T}=C.$ Also, if $C\subseteq C^M$, then $C\subseteq C^{M^T}$. Conversely,  	if $C=C^{M^T},$ then it is clear that $C^{M}=(C^{M^T})^{M}=C.$ Also, if $C\subseteq C^{M^T}$, then $C\subseteq C^{M}$.
	\end{proof}
	
	\begin{lemma}\label{ChiACD}
		An additive code $C=\{0,x,2x,3x,\dots,(p-1)x\}$, for some  $x\in\mathbb{F}_{p^e}$ is an ACD code  if and only if   $\chi_x(x)\neq1$. 
	\end{lemma}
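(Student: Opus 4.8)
The plan is to exploit the fact that $C$ has prime order, so that the intersection $C\cap C^M$, being a subgroup of $C$, is forced to be either $\{0\}$ or all of $C$. First I would dispose of the degenerate case $x=0$, where $C=\{0\}$ is vacuously ACD, and henceforth assume $x\neq 0$. Since $(\mathbb{F}_{p^e},+)$ is an elementary abelian $p$-group, every nonzero element has additive order $p$, so $C=\{0,x,2x,\dots,(p-1)x\}$ genuinely has $p$ elements and its only subgroups are $\{0\}$ and $C$. Because $C\cap C^M$ is always a subgroup of $C$, this yields the clean dichotomy: $C$ is ACD (i.e. $C\cap C^M=\{0\}$) if and only if $C\not\subseteq C^M$.

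The key computational input is the bi-additivity of the pairing $(a,b)\mapsto\chi_a(b)$. Since $\phi:G\to\hat G$ is a group isomorphism we have $\chi_{a+b}=\chi_a\chi_b$, and since each $\chi_a$ is itself a character we have $\chi_a(b+c)=\chi_a(b)\chi_a(c)$. Iterating both identities over the $\mathbb{F}_p$-multiples of $x$ gives the single controlling formula $\chi_{\alpha x}(\beta x)=\chi_x(x)^{\alpha\beta}$ for all integers $\alpha,\beta$. Recalling that $\chi_x(x)$ lies in $P$, the cyclic group of $p$-th roots of unity, it follows that $\chi_x(x)^{\alpha}=1$ with $1\le\alpha\le p-1$ already forces $\chi_x(x)=1$.

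For the ``if'' direction I would argue that if $\chi_x(x)\neq 1$, then taking the codeword $x\in C$ shows immediately that $x\notin C^M$, hence $C\not\subseteq C^M$; by the prime-order dichotomy, $C\cap C^M=\{0\}$ and $C$ is ACD. For the converse I would use contraposition: if $\chi_x(x)=1$, the displayed formula gives $\chi_{\alpha x}(\beta x)=1$ for every $\alpha,\beta$, so every codeword is orthogonal to every codeword, whence $C\subseteq C^M$; then $C\cap C^M=C\neq\{0\}$ and $C$ is not ACD.

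The only genuinely delicate point is deriving the bilinear formula $\chi_{\alpha x}(\beta x)=\chi_x(x)^{\alpha\beta}$ cleanly from the two homomorphism properties, together with remembering to exclude $x=0$ so that the subgroup dichotomy applies; once these are in place, both implications reduce to a direct substitution.
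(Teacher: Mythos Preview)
Your proof is correct and follows essentially the same route as the paper: both directions rest on the bilinear identity $\chi_{\alpha x}(\beta x)=\chi_x(x)^{\alpha\beta}$, and your converse is verbatim the paper's. In the forward direction the paper takes an arbitrary $z=mx\in C\cap C^M$ and uses $\chi_{mx}(x)=\chi_x(x)^m=1$ to force $m\equiv 0$, whereas you invoke the prime-order subgroup dichotomy and simply note that $x\notin C^M$; this is a minor repackaging rather than a different idea.
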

	\begin{proof}
		Let $\chi_x(x)=\xi\neq1$, where $\xi$ is any primitive $p$-th root of unity. Let $z\in C\cap C^M$ then $z=mx$, for some $0\leq m\leq (p-1)$ and $\chi_{(mx)}(nx)=1$, for all $0\leq n\leq (p-1).$ Therefore, we have  $\chi_{(mx)}(x)=1$, which gives that $\xi^m=1.$ Thus, $m\equiv 0\mod p$. This shows that $z=0$. Hence, $C$ is an ACD code. Conversely, suppose $\chi_x(x)=1$ then $\chi_{mx}(nx)=(\chi_x(x))^{nm}=1$, for all $0\leq n,m\leq (p-1).$ It is a contradiction for $C$ to be an ACD code.
	\end{proof}
	
	\begin{lemma}\label{chi-x-y-not1}
		For any duality, if $\chi_x(x)=1, \ \forall x \in \mathbb{F}_q$ then $ \chi_y(x)= (\chi_x(y))^{-1}, \ \forall y \in \mathbb{F}_q .$
	\end{lemma}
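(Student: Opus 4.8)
The plan is to exploit the bi-multiplicative structure of the pairing $B(x,y):=\chi_x(y)$ together with the alternating hypothesis $B(x,x)=1$. First I would record two elementary facts about $B$. Since each $\chi_x$ is a character of $(\mathbb{F}_q,+)$, it is additive in its argument, so $B(x,y+y')=\chi_x(y+y')=\chi_x(y)\chi_x(y')=B(x,y)B(x,y')$. On the other hand, because the assignment $\phi\colon x\mapsto\chi_x$ is a group isomorphism from $G$ onto $\hat G$, and the operation in $\hat G$ is pointwise multiplication of characters, we have $\chi_{x+x'}=\chi_x\chi_{x'}$; hence $B(x+x',y)=\chi_{x+x'}(y)=\chi_x(y)\chi_{x'}(y)=B(x,y)B(x',y)$. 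Thus $B$ is bi-multiplicative in both variables.

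Next I would feed the hypothesis into this structure. The assumption $\chi_x(x)=1$ for all $x$ says precisely that $B(x,x)=1$ for every $x\in\mathbb{F}_q$, i.e. that $B$ is alternating. The key step is then to evaluate $B$ on the diagonal at the element $x+y$ and expand using bi-multiplicativity:
\[ 1=B(x+y,\,x+y)=B(x,x)\,B(x,y)\,B(y,x)\,B(y,y)=B(x,y)\,B(y,x), \]
where the two end factors equal $1$ by the alternating property. Reading this off gives $B(y,x)=B(x,y)^{-1}$, that is $\chi_y(x)=(\chi_x(y))^{-1}$, which is the claim.

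The argument is short, and the only point that needs care is the bi-multiplicativity in the \emph{first} variable: one must invoke that $\phi$ is an isomorphism (not merely a bijection) so that $\chi_{x+x'}=\chi_x\chi_{x'}$ holds as an identity of characters. This is exactly the defining property of the duality $M$ arising from the isomorphism $G\cong\hat G$ recalled in Section~\ref{Preliminaries}. Once that is secured, the rest is the one-line diagonal expansion above, and nothing field-specific beyond the character-theoretic setup is used, so the same statement in fact holds verbatim over an arbitrary finite abelian group $G$.
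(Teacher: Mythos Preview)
Your proof is correct and follows exactly the paper's approach: the paper's one-line argument is to write $\chi_{x+y}(x+y)=1$ and expand to get $\chi_x(y)\chi_y(x)=1$, which is precisely your diagonal expansion of $B(x+y,x+y)$. Your version is simply more explicit about the bi-multiplicativity (in particular in the first variable, via the homomorphism property of $\phi$) that the paper takes for granted.
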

	\begin{proof}
		Let $x,y\in \mathbb{F}_q$, then $\chi_{x+y}(x+y)=1$, this implies that $\chi_x(y)\chi_y(x)=1.$
	\end{proof} 
	\begin{theorem}\label{C-M-M- equl to C}
		Let $M$ be a duality such that $\chi_x(x)=1,\ \forall x \in \mathbb{F}_{q}$. If $C$ is an additive code over $ \mathbb{F}_{q}$, then $(C^M)^M=C$.
	\end{theorem}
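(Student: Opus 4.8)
The plan is to reduce the statement to the already-established identity $(C^M)^{M^T}=C$, which holds for every duality $M$ by the cited Theorem 3.3. Thus it suffices to prove that, under the hypothesis $\chi_x(x)=1$ for all $x\in\mathbb{F}_q$, the two orthogonals coincide, i.e. $C^M=C^{M^T}$ for every additive code $C$. Once this is shown, substituting into the general identity gives $(C^M)^M=(C^M)^{M^T}=C$, as required.

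To prove $C^M=C^{M^T}$, I would compare the defining conditions of the two orthogonals directly. Write $\chi$ for the characters attached to $M$ and $\chi'$ for those attached to $M^T$; by the remark following the transpose theorem, $\chi'_x(y)=\chi_y(x)$. Then I apply Lemma \ref{chi-x-y-not1}, whose hypothesis is exactly $\chi_x(x)=1$ for all $x$: it yields $\chi_y(x)=(\chi_x(y))^{-1}$ for all $x,y\in\mathbb{F}_q$. Combining these, for any $\mathbf{x}=(x_1,\dots,x_n)$ and any codeword $\mathbf{c}=(c_1,\dots,c_n)\in C$,
\[
\prod_{i=1}^{n}\chi'_{x_i}(c_i)=\prod_{i=1}^{n}\chi_{c_i}(x_i)=\prod_{i=1}^{n}\bigl(\chi_{x_i}(c_i)\bigr)^{-1}=\left(\prod_{i=1}^{n}\chi_{x_i}(c_i)\right)^{-1}.
\]

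Since a complex number equals $1$ precisely when its inverse equals $1$, the condition $\prod_{i}\chi'_{x_i}(c_i)=1$ for all $\mathbf{c}\in C$ is equivalent to $\prod_{i}\chi_{x_i}(c_i)=1$ for all $\mathbf{c}\in C$. Hence $\mathbf{x}\in C^{M^T}$ if and only if $\mathbf{x}\in C^{M}$, giving $C^{M}=C^{M^T}$, and the proof concludes as above.

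I expect the only real content to be the invocation of Lemma \ref{chi-x-y-not1} to pass from $M$ to $M^T$; the remainder is bookkeeping on the product of character values. The mild subtlety worth flagging is that $M$ itself need \emph{not} be symmetric under this hypothesis, so one cannot simply assert $M=M^T$. Rather, the weaker fact that $M$ and $M^T$ induce the \emph{same} orthogonality relation is what is needed, and this holds because the defining equation $\prod_i\chi_{x_i}(c_i)=1$ is insensitive to replacing each character value by its inverse.
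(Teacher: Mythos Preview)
Your proof is correct and follows essentially the same route as the paper: both arguments reduce to the identity $(C^M)^{M^T}=C$ by using Lemma~\ref{chi-x-y-not1} to show that the $M$- and $M^T$-orthogonality conditions coincide, via the same chain $\prod_i\chi'_{x_i}(c_i)=\prod_i\chi_{c_i}(x_i)=\bigl(\prod_i\chi_{x_i}(c_i)\bigr)^{-1}$. The only cosmetic difference is that you isolate the general fact $D^M=D^{M^T}$ for every additive code $D$ and then specialize to $D=C^M$, whereas the paper carries out the identical computation directly at the level of $(C^M)^M$.
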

	\begin{proof}
		Let $C$ be an additive code over $\mathbb{F}_q$ and $M$ be a duality such that $\chi_x(x)=1$ for all $x\in \mathbb{F}_q$. We have $(C^M)^M=\{(x_1,x_2,\dots,x_n)\in \mathbb{F}^n_q| \prod_{i=1}^{n}\chi_{x_i}(c_i)=1, \forall (c_1,c_2,\dots,c_n)\in C^M\}$. Let $(x_1,x_2,\dots,x_n)\in (C^M)^M$ if and only if
		\begin{align*}
			& \prod_{i=1}^{n}\chi_{x_i}(c_i)=1, \forall (c_1,c_2,\dots,c_n)\in C^M\\
			\iff &\prod_{i=1}^{n}(\chi_{c_i}(x_i))^{-1}=1, \forall (c_1,c_2,\dots,c_n)\in C^M \ \ 
			\text{(by the above Lemma)}\\
			\iff &\left( \prod_{i=1}^{n}\chi_{c_i}(x_i)\right) ^{-1}=1, \forall (c_1,c_2,\dots,c_n)\in C^M \\
			\iff & \prod_{i=1}^{n}\chi_{c_i}(x_i)=1, \forall (c_1,c_2,\dots,c_n)\in C^M \\
			\iff & \prod_{i=1}^{n}\chi'_{x_i}(c_i)=1, \forall (c_1,c_2,\dots,c_n)\in C^M \ \text{($\chi'$ represents the duality $M^T$)}\\
			\iff & (x_1,x_2,\dots,x_n)\in (C^M)^{M^T}=C.
		\end{align*}
		This completes the proof.
	\end{proof}
	Let $x,y\in\mathbb{F}_q$ and $M$ be a duality over $\mathbb{F}_q$, then $x$ is orthogonal to $y$ if and only if $\chi_x(y)=1$. We say an element $x\in \mathbb{F}_q$ is a self-orthogonal element if $\chi_x(x)=1$.
	\begin{theorem}
		For any duality $M,$ there  exists a non-trivial ACD code of length one over $\mathbb{F}_{p^e
		},\ e>2.$
	\end{theorem}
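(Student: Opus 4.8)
The plan is to reduce everything to the two lemmas already in hand, Lemma~\ref{ChiACD} and Lemma~\ref{chi-x-y-not1}, together with the fact that the map $\phi\colon G\to\hat G$ defining the duality is an isomorphism, so that for every nonzero $x$ the character $\chi_x$ is nontrivial. First I would pin down what ``non-trivial ACD code of length one'' means: a code of length one is an additive subgroup of $\mathbb{F}_{p^e}$, that is, an $\mathbb{F}_p$-subspace, and it is non-trivial precisely when its $\mathbb{F}_p$-dimension $k$ satisfies $0<k<e$, since the subspaces of dimension $0$ and $e$ are the zero code and the whole space, both of which are ACD for trivial reasons. The argument then naturally splits according to whether all the diagonal values $\chi_x(x)$ equal $1$.

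In the first case, suppose there is some $x\neq 0$ with $\chi_x(x)\neq 1$. Then Lemma~\ref{ChiACD} tells us at once that $C=\{0,x,2x,\dots,(p-1)x\}$ is an ACD code. This $C$ is the one-dimensional $\mathbb{F}_p$-subspace spanned by $x$, so $|C|=p$, and since $e>2$ we have $0<1<e$; hence $C$ is non-trivial and this case is finished.

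I expect the genuine difficulty to lie in the remaining case, where $\chi_x(x)=1$ for every $x\in\mathbb{F}_{p^e}$. Here Lemma~\ref{ChiACD} shows that \emph{every} one-dimensional subspace is self-orthogonal, so no length-one ACD code of dimension one can exist and a two-dimensional construction is forced. In this case I would fix any $x\neq 0$; because $\phi$ is an isomorphism, $\chi_x$ is non-trivial, so there is a $y$ with $\chi_x(y)=\zeta\neq 1$ for some primitive $p$-th root of unity $\zeta$. By Lemma~\ref{chi-x-y-not1} we then also get $\chi_y(x)=\zeta^{-1}\neq 1$, and $x,y$ are $\mathbb{F}_p$-linearly independent, since $y=cx$ would give $\chi_x(y)=\chi_x(x)^c=1$. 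Set $C=\{ax+by\mid a,b\in\mathbb{F}_p\}$, a two-dimensional subspace with $|C|=p^2$. For $w=ax+by$, the bi-additivity of $(u,v)\mapsto\chi_u(v)$ together with $\chi_x(x)=\chi_y(y)=1$ yields $\chi_w(x)=\chi_y(x)^b$ and $\chi_w(y)=\chi_x(y)^a$; as $\chi_x(y)$ and $\chi_y(x)$ have order $p$, the requirements $\chi_w(x)=\chi_w(y)=1$ force $a\equiv b\equiv 0$. Since the dual of a length-one code is $C^M=\{w:\chi_w(c)=1\ \forall c\in C\}=\{w:\chi_w(x)=\chi_w(y)=1\}$, this gives $C\cap C^M=\{0\}$, so $C$ is ACD; and $0<2<e$ because $e>2$, so $C$ is non-trivial.

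The decisive point is thus the second case: the hypothesis $e>2$ is exactly what guarantees that the two-dimensional, symplectic-type code $C$ sits properly inside $\mathbb{F}_{p^e}$. This also explains why the bound cannot be relaxed, for when $e=2$ and $\chi_x(x)\equiv 1$ every nonzero proper subspace is one-dimensional and self-orthogonal, so no non-trivial length-one ACD code exists; this is what compels the separate treatment of the two cases.
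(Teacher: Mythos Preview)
Your proof is correct and follows essentially the same approach as the paper: both split into the case where some $x$ has $\chi_x(x)\neq 1$ (invoking Lemma~\ref{ChiACD} to get a one-dimensional ACD code) and the case where all elements are self-orthogonal (constructing the two-dimensional code $\langle x,y\rangle$ with $\chi_x(y)\neq 1$ and using Lemma~\ref{chi-x-y-not1} to force $C\cap C^M=\{0\}$). Your version is slightly more explicit in justifying the $\mathbb{F}_p$-linear independence of $x$ and $y$ and in unpacking why $e>2$ is needed, points the paper leaves implicit or defers to a remark.
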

	\begin{proof}
		Let us consider an element $x\in\mathbb{F}_{p^e}$ such that $x$ is not self-orthogonal, i.e., $\chi_x(x)\neq1$, then  by Lemma \ref{ChiACD}, the code $C=\left\langle x\right\rangle =\{0,x,2x,3x,\dots,(p-1)x\}$ is an ACD code. Now, assume that all the elements of $\mathbb{F}_{p^e}$ are self-orthogonal with respect to duality $M$, i.e., $\chi_x(x)=1, \ \forall x \in \mathbb{F}_{p^e}$. Then take an element $y\in \mathbb{F}_{p^e}$ such that $ \chi_x(y)=\xi\neq1,$ where $\xi$ is any primitive $p$-th root of unity (there exists such $ y $, since a non-zero element can not be orthogonal to the whole space). Construct a code $C=\left\langle x,y \right\rangle=\{nx+my\mid0\leq n,m\leq (p-1)\}.$ Now, we will show that $C$ is an ACD code of length 1.
		Suppose $z\in C\cap C^M$ then $z=n_1x+m_1y$, for some $0\leq n_1,m_1\leq (p-1)$ and $\chi_{(n_1x+m_1y)}(nx+my)=1$, for all $0\leq n,m\leq (p-1).$ Therefore, $\chi_{(n_1x+m_1y)}(x)=(\chi_x(x))^{n_1}
		(\chi_y(x))^{m_1}=(\chi_y(x))^{m_1}=1$.  Also,  $\chi_{(n_1x+m_1y)}(y)=(\chi_x(y))^{n_1}
		(\chi_y(y))^{m_1}=(\chi_x(y))^{n_1}=1$. By  Lemma \ref{chi-x-y-not1}, $ \chi_y(x)=\xi^{-1} $. It follows that  $m_1=0=n_1$.  Hence, $C$ is an ACD code of length 1 over $\mathbb{F}_{p^e}, e>2$.
	\end{proof}
	
	\begin{remark}
		We can observe that in the above theorem, $e$ should be greater than 2 because in  $\mathbb{F}_{p^2}$, the code $C=\left\langle x,y\right\rangle $  is the whole space $\mathbb{F}_{p^2}$, which is not a non-trivial ACD code. 
	\end{remark}
	\begin{corollary}
		For any duality $M,$ there  exists a  non-trivial ACD code of length $ n $ over $\mathbb{F}_{p^e},\ e>2.$
	\end{corollary}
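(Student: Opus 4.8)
The plan is to bootstrap from the length-one case just established up to arbitrary length $n$ by repeatedly taking direct products, using Lemma~\ref{cross} as the engine. First, by the preceding theorem, for any duality $M$ over $\mathbb{F}_{p^e}$ with $e>2$ there exists a non-trivial ACD code $C$ of length one with respect to $M$. I would fix one such $C$ and form the $n$-fold product $\underbrace{C\times C\times\cdots\times C}_{n}$, which I denote $C^{\times n}$.

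Next, I would argue by induction on $n$. The base case $n=1$ is exactly the previous theorem. For the inductive step, assuming $C^{\times(n-1)}$ is an ACD code of length $n-1$ with respect to $M$, I would apply Lemma~\ref{cross} to the pair $C^{\times(n-1)}$ and $C$ — both ACD codes with respect to the same duality $M$ over $\mathbb{F}_{p^e}$ — to conclude that $C^{\times(n-1)}\times C=C^{\times n}$ is an ACD code of length $n$ with respect to $M$. Since the duality on the ambient space $\mathbb{F}_{p^e}^n$ is precisely the one induced coordinatewise by $M$ through $\chi_{\mathbf{a}}(\mathbf{b})=\prod_{i=1}^{n}\chi_{a_i}(b_i)$, the hypotheses of Lemma~\ref{cross} are met at each stage of the induction.

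Finally, I would verify non-triviality. The length-one code $C$ has cardinality either $p$ or $p^2$ (according to the two cases in the previous theorem), so $|C^{\times n}|$ equals $p^{n}$ or $p^{2n}$; in either case, because $e>2$, this value is strictly smaller than $|\mathbb{F}_{p^e}^n|=p^{en}$ and strictly larger than $1$. Hence $C^{\times n}$ is neither the zero code nor the whole space, i.e., it is a genuinely non-trivial ACD code of length $n$ over $\mathbb{F}_{p^e}$.

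I do not anticipate a serious obstacle here, as the whole argument is a clean induction resting on two results already available in the excerpt. The only points requiring any care are confirming that the coordinatewise duality on $\mathbb{F}_{p^e}^n$ is indeed the object to which Lemma~\ref{cross} applies, and checking that non-triviality is preserved under products — both of which fall out immediately from the cardinality count once the hypothesis $e>2$ is invoked.
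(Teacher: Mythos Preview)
Your proposal is correct and follows essentially the same approach as the paper: invoke the preceding length-one theorem and then apply Lemma~\ref{cross} iteratively to reach length $n$. The paper's proof is just those two steps stated tersely, while you have spelled out the induction and the non-triviality check explicitly.
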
	
	\begin{proof}
		By the above theorem, we have a code $ C $ of length one over $\mathbb{F}_{p^e},\ e>2.$ Further, the proof follows from Lemma \ref{cross}.
	\end{proof}
	
	\section{Skew-symmetric dualities }\label{skew-symmetric dualities}
	This section targets to establish the existence of a new class of dualities over $ \mathbb{F}_{p^{2e}} $. The signature property of this class is that every element is self-orthogonal. We begin by providing the exact count of  symmetric dualities over  $ \mathbb{F}_{p^e} $.
	\begin{theorem}\label{symduality}
			The total number of symmetric dualities over $\mathbb{F}_{p^{e}}$ for any $e\geq 1$ is equal to 	\begin{equation*}
			\begin{cases}
				p^{k(k-1)}(p-1)(p^3-1)\dots(p^{2k-1}-1) & \text{if e = 2k-1},\\
				p^{k(k+1)}(p-1)(p^3 - 1)\dots(p^{2k-1}-1)& \text{if e = 2k}.\\
			\end{cases}       
		\end{equation*}	
	\end{theorem}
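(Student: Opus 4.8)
The plan is to convert the problem into counting matrices over $\mathbb{F}_p$. Since $(\mathbb{F}_{p^{e}},+)\cong\mathbb{Z}_p^{e}$ is an $e$-dimensional vector space over $\mathbb{F}_p$ and every character takes values in the group of $p$-th roots of unity, I would fix a primitive $p$-th root of unity $\xi$ and identify $\hat G$ with $\mathbb{F}_p^{e}$ by sending $v\in\mathbb{F}_p^{e}$ to the character $z\mapsto\xi^{\,v^{T}z}$. Under this identification a duality is exactly an isomorphism $\phi\colon G\to\hat G$, so it is given by a unique invertible matrix $B\in\mathrm{GL}_{e}(\mathbb{F}_p)$ via $\chi_x(y)=\xi^{\,x^{T}By}$. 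Comparing exponents in $\mathbb{F}_p$, the duality is symmetric, i.e.\ $\chi_x(y)=\chi_y(x)$ for all $x,y$, if and only if $x^{T}By=x^{T}B^{T}y$ for all $x,y$, that is $B=B^{T}$. Hence the symmetric dualities over $\mathbb{F}_{p^{e}}$ are in bijection with the invertible symmetric $e\times e$ matrices over $\mathbb{F}_p$; writing $N_e$ for their number (so $N_0=1$ and $N_1=p-1$), the theorem becomes a closed form for $N_e$.

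The key step is a recursion obtained by a symmetric bordering argument on the top-left entry $a=A_{11}$ of an invertible symmetric matrix $A=\left(\begin{smallmatrix} a & b^{T}\\ b & A'\end{smallmatrix}\right)$, with $b\in\mathbb{F}_p^{\,e-1}$ and $A'$ symmetric of size $e-1$. If $a\neq 0$, then a symmetric Gaussian elimination congruence shows $A$ is invertible exactly when its Schur complement $A'-a^{-1}bb^{T}$ is invertible; as $A'$ ranges over all symmetric $(e-1)\times(e-1)$ matrices for fixed $a,b$, this expression sweeps over all of them, so the number of such $A$ is $(p-1)\,p^{\,e-1}N_{e-1}$. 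If $a=0$, invertibility forces $b\neq 0$; since $\mathrm{GL}_{e-1}(\mathbb{F}_p)$ acts transitively on nonzero $b$ through $A'\mapsto P^{T}A'P,\ b\mapsto P^{T}b$ while preserving the shape and invertibility of $A$, I may fix $b=(1,0,\dots,0)^{T}$, whereupon rows and columns $1,2$ form a hyperbolic pair with invertible $2\times2$ pivot block whose Schur complement is precisely the free lower $(e-2)\times(e-2)$ block. This gives $p^{\,e-1}N_{e-2}$ completions for each of the $p^{\,e-1}-1$ admissible borders $b$. Adding the two cases yields
\begin{equation*}
 N_e=(p-1)\,p^{\,e-1}N_{e-1}+(p^{\,e-1}-1)\,p^{\,e-1}N_{e-2}.
\end{equation*}

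Finally I would prove the stated formula by strong induction from this recursion, treating the parities $e=2k-1$ and $e=2k$ separately against the inductive values of $N_{e-1}$ and $N_{e-2}$; this is routine algebra once the recursion is in hand, and it is equivalent to the cleaner one-step form $N_e=p^{e}N_{e-1}$ for $e$ even and $N_e=(p^{e}-1)N_{e-1}$ for $e$ odd, from which the exponents $k(k-1),k(k+1)$ and the factors $(p^{2i-1}-1)$ read off directly. I expect the main obstacle to be the isotropic case $a=0$: one must verify that the number of completions is genuinely independent of the nonzero border $b$, and that the hyperbolic-plane Schur complement decouples cleanly into a free block together with an invertible $(e-2)\times(e-2)$ matrix. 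A reassuring feature of this route is that neither case uses $p$ odd, so the argument—and hence the formula—is uniform in $p$, with the even/odd dichotomy arising solely from iterating the recursion rather than from any case split on the underlying characteristic.
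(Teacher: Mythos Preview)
Your reduction to counting invertible symmetric $e\times e$ matrices over $\mathbb{F}_p$ is exactly what the paper does: both arguments fix an $\mathbb{F}_p$-basis of $(\mathbb{F}_{p^e},+)$, encode a duality as $\chi_x(y)=\xi^{x^{T}By}$ for $B\in\mathrm{GL}_e(\mathbb{F}_p)$, and observe that symmetry of the duality is equivalent to $B=B^{T}$. The difference is only in what happens next. The paper simply \emph{cites} the formula for the number of invertible symmetric matrices over $\mathbb{F}_p$ as a known exercise, whereas you actually \emph{derive} it via the bordering/Schur-complement recursion
\[
N_e=(p-1)p^{\,e-1}N_{e-1}+(p^{\,e-1}-1)p^{\,e-1}N_{e-2},
\]
and then telescope it to the one-step forms $N_{2k}=p^{2k}N_{2k-1}$ and $N_{2k-1}=(p^{2k-1}-1)N_{2k-2}$. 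Your treatment of the isotropic case $a=0$ is correct: once $b$ is normalised to $e_1$ by the $\mathrm{GL}_{e-1}$ action, the $2\times2$ pivot $\bigl(\begin{smallmatrix}0&1\\1&a_{22}\end{smallmatrix}\bigr)$ has determinant $-1$, and a direct computation shows $Q^{T}P^{-1}Q=0$, so the Schur complement is exactly the free $(e-2)\times(e-2)$ block and the $e-1$ entries in the second row/column are genuinely unconstrained. Thus your proof is self-contained where the paper's is not; the price is a page of linear algebra that the paper outsources to a reference.
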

	\begin{proof}
		Let the generators of $ \mathbb{F}_{p^{e}} $ be $x_1,x_2,\dots,x_{e}$ such that $\mathbb{F}_{p^{e}}=\left\langle  x_1,x_2,\dots,x_{e} \right\rangle$. We define a map $\phi:\mathbb{F}_{p^{e}}\to \widehat{\mathbb{F}_{p^{e}}}$ such that $\phi(\sum_{i=1}^{e}n_ix_i)=\prod_{i=1}^{e}\left( \chi_{x_i}\right)^{n_i}=\chi_{\sum_{i=1}^{e}n_ix_i}$ (say), where $\chi_{x_i}$\textquotesingle s are characters for all $ i=1,2,\dots, e$ such that $\chi_{x_i}(x_j)=\xi^{k_{ij}} \text{\ and\ } \chi_{x_i}(x_j)=\chi_{x_j}(x_i), \ \forall 1\leq i,j\leq e$, where $\xi$ is the primitive $p$-th root of unity and $ 0\leq k_{ij}\leq (p-1) $. Clearly, $\phi$ is a homomorphism.  Let $x\in ker(\phi)$, i.e., $\chi_x=\chi_0$. This gives that $\chi_x(z)=1, \ \forall z\in \mathbb{F}_{p^{e}}$. Since $x\in\mathbb{F}_{p^{e}}=\left\langle x_1,x_2,\dots,x_{e}\right\rangle $ which implies that $\chi_x(z)=\chi_{(\sum_{i=1}^{e}n_ix_i)}(z)=1, \ \forall z\in \mathbb{F}_{p^{e}}.$ Therefore, $\chi_{(\sum_{i=1}^{e}n_ix_i)}(x_j)=1, \  \forall j=1,2,\dots, e$. Now, we have
		\begin{align*}
			\chi_{(\sum_{i=1}^{e}n_ix_i)}(x_j)=&\prod_{i=1}^{e}\left(\chi_{x_i}(x_j)\right)^{n_i}
			=\prod_{i=1}^{e}\xi^{k_{ij}n_i}\\
			=&\xi^{\left( \sum_{i=1}^{e}k_{ij}n_i\right)} =1.
		\end{align*}
		This gives  $  \sum_{i=1}^{e}k_{ij}n_i\equiv 0 \mod p,$ for all $j=1,2,\dots e$. It is equivalent to find a solution of the following system of equations\\
			$$\begin{bmatrix}
			k_{11}&k_{12}&k_{13}&\dots&k_{1e}\\
			k_{12}&k_{22}&k_{23}&\dots&k_{2e}\\
			k_{13}&k_{23}&k_{33}&\dots&k_{3e}\\
			\vdots&\vdots&\vdots&\ddots&\vdots\\
			k_{1e}&k_{2e}&k_{3e}&\dots&k_{ee}
		\end{bmatrix}_{e\times e}  \begin{bmatrix}
			n_1\\n_2\\n_3\\ \vdots\\n_{e}
		\end{bmatrix}\equiv\begin{bmatrix}
			0\\0\\0\\ \vdots\\0
		\end{bmatrix}\mod p.$$
			The system of equations has only a trivial solution under mod p if and only if the above symmetric matrix is invertible over $\mathbb{F}_p$. For invertible matrices, if $x\in ker(\phi)$ then $x=0$, which shows that $\phi$ is an isomorphism.
		Hence, $\phi$ gives a symmetric duality if and only if the above $e\times e$ symmetric matrix is invertible  over $\mathbb{F}_p$. The number of $e\times e$ invertible symmetric matrices over $\mathbb{F}_p$ are (see \cite[Exercise 198]{invertiblesymmetric})
		\begin{equation}\label{sym}
			\begin{cases}
				p^{k(k-1)}(p-1)(p^3-1)\dots(p^{2k-1}-1) & \text{if e = 2k-1},\\
				p^{k(k+1)}(p-1)(p^3 - 1)\dots(p^{2k-1}-1)& \text{if e = 2k}.\\
			\end{cases}       
		\end{equation}
		This completes the proof.	
		
	\end{proof}

	\begin{exmp}
		Let $p=3$ and $e=2$, then there are total $(3^2-1)(3^2-3)=48$ dualities over $\mathbb{F}_{3^2}$. In which, $3^{1(1+1)}(3-1)=18$ are symmetric dualities  and 30 are non-symmetric dualities over $\mathbb{F}_{3^2}$.
	\end{exmp}
	It is quite interesting to study the class of non-symmetric dualities since it is such a huge class of dualities.   First of all, in the following theorem, we show the existence of a non-symmetric duality over $\mathbb{F}_{p^{2e}}$ such that every element is self-orthogonal. 
	\begin{theorem}
		Over $\mathbb{F}_{p^{2e}}, \ p>2$ and for any $e\geq 1$, there exist non-symmetric dualities  such that $\chi_{x}(x)=1, \ \forall x \in \mathbb{F}_{p^{2e}}$.
	\end{theorem}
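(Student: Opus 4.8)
The plan is to reduce all three requirements — that $M$ be a genuine duality, that every element be self-orthogonal, and that $M$ be non-symmetric — to linear-algebraic conditions on the defining matrix $K = (k_{ij})$ over $\mathbb{F}_p$, and then to write down one matrix satisfying all three at once. Following the parametrization used in the proof of Theorem~\ref{symduality}, a duality on $\mathbb{F}_{p^{2e}} \cong \mathbb{Z}_p^{2e}$ is given by characters $\chi_{x_i}$ with $\chi_{x_i}(x_j) = \xi^{k_{ij}}$, and the associated homomorphism $\phi$ is an isomorphism — equivalently $M$ is a duality — exactly when the $2e \times 2e$ matrix $K$ is invertible over $\mathbb{F}_p$. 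I would therefore phrase everything in terms of $K \in M_{2e}(\mathbb{F}_p)$ and track which algebraic property encodes each demand.

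First I would evaluate $\chi_x(x)$ on a general element $x = \sum_i n_i x_i$. Multiplicativity gives $\chi_x(x) = \xi^{\sum_{i,j} k_{ij} n_i n_j} = \xi^{\,\mathbf{n}^T K \mathbf{n}}$, so the self-orthogonality requirement $\chi_x(x) = 1$ for every $x$ is precisely the vanishing of the quadratic form $Q(\mathbf{n}) = \mathbf{n}^T K \mathbf{n}$ on all of $\mathbb{F}_p^{2e}$. A short polarization then pins $K$ down: evaluating $Q$ on each standard basis vector $\mathbf{e}_i$ forces $k_{ii} = 0$, and evaluating on $\mathbf{e}_i + \mathbf{e}_j$ forces $k_{ij} + k_{ji} = 0$. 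Hence $Q \equiv 0$ if and only if $K$ is alternating, i.e. $k_{ii} = 0$ and $k_{ij} = -k_{ji}$.

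With this dictionary the problem becomes purely existential: I need a $2e \times 2e$ alternating matrix over $\mathbb{F}_p$ that is invertible. I would exhibit the explicit block-diagonal matrix with $e$ copies of the symplectic block $\left(\begin{smallmatrix} 0 & 1 \\ -1 & 0 \end{smallmatrix}\right)$ on the diagonal; it is alternating by inspection and has determinant $1$, hence is invertible, and so defines a genuine duality $M$ in which every element is self-orthogonal. The even size $2e$ is exactly what makes this possible, since an alternating matrix of odd order is always singular — which is also why the theorem is stated only for $\mathbb{F}_{p^{2e}}$.

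The last point — and the place where the hypothesis $p > 2$ is genuinely used — is to verify that the resulting duality is non-symmetric. If the chosen $K$ were symmetric we would have $k_{ij} = k_{ji}$ together with $k_{ij} = -k_{ji}$, forcing $2k_{ij} = 0$ and hence $k_{ij} = 0$ for all $i, j$ because $p > 2$; but then $K = 0$ is not invertible, a contradiction. Therefore $K \neq K^T$, so $M \neq M^T$, and $M$ is non-symmetric. I expect the translation of the self-orthogonality condition into ``$K$ alternating'' to be the conceptual heart of the argument; once that is in place, the construction and the non-symmetry check are routine, the role of $p > 2$ being precisely to prevent alternating and symmetric matrices from coinciding (as they would in characteristic two, where the construction would instead yield a symmetric duality).
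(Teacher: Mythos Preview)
Your proposal is correct and follows essentially the same approach as the paper: both reduce the existence question to finding an invertible skew-symmetric (alternating) matrix $K \in GL(2e,\mathbb{F}_p)$, with the duality condition corresponding to invertibility and the self-orthogonality condition corresponding to $K$ being alternating. The differences are presentational rather than substantive: you \emph{derive} the alternating condition via polarization of the quadratic form $\mathbf{n}^T K \mathbf{n}$ and then exhibit one explicit block-diagonal symplectic matrix, whereas the paper \emph{imposes} the skew-symmetric structure from the outset, verifies $\chi_x(x)=1$ by direct expansion, and then appeals to the known count of invertible skew-symmetric matrices to conclude existence (and in fact obtains the exact number $p^{e(e-1)}(p-1)(p^3-1)\cdots(p^{2e-1}-1)$ of such dualities). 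You are also more explicit than the paper about why $p>2$ forces non-symmetry.
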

	\begin{proof}
		Let the generators of $ \mathbb{F}_{p^{2e}} $ be $x_1,x_2,\dots,x_{2e}$ such that $\mathbb{F}_{p^{2e}}=\left\langle x_1,x_2,\dots,x_{2e} \right\rangle$. We define a map $\phi:\mathbb{F}_{p^{2e}}\to \widehat{\mathbb{F}_{p^{2e}}}$ such that $\phi(\sum_{i=1}^{2e}n_ix_i)=\prod_{i=1}^{2e}\left( \chi_{x_i}\right)^{n_i}=\chi_{\sum_{i=1}^{2e}n_ix_i}$ (say), where $\chi_{x_i}$\textquotesingle s are characters for all $ i=1,2,\dots, 2e$ such that $\chi_{x_i}(x_i)=1,\ \chi_{x_i}(x_j)=\xi^{k_{ij}}, \ \forall i<j\leq 2e$ and $\chi_{x_i}(x_j)=\xi^{-k_{ji}}, \ \forall 1\leq j<i$ for each $ i=1,2,\dots, 2e$, where $  \xi  $ is the primitive $ p$-th root of unity and $ 0\leq k_{ij}\leq (p-1) $. \\
		The following system of equations results from continuing along the same path as the Theorem \ref{symduality},
				$$\begin{bmatrix}
			0&-k_{12}&-k_{13}&\dots&-k_{1(2e)}\\
			k_{12}&0&-k_{23}&\dots&-k_{2(2e)}\\
			k_{13}&k_{23}&0&\dots&-k_{3(2e)}\\
			\vdots&\vdots&\vdots&\ddots&\vdots\\
			k_{1(2e)}&k_{2(2e)}&k_{3(2e)}&\dots&0
		\end{bmatrix}  \begin{bmatrix}
			n_1\\n_2\\n_3\\ \vdots\\n_{2e}
		\end{bmatrix}\equiv\begin{bmatrix}
			0\\0\\0\\ \vdots\\0
		\end{bmatrix}\mod p.$$
		The above system of equations has only a trivial solution under mod p if and only if the above skew-symmetric matrix is invertible over $\mathbb{F}_p$. Then, $\phi$ is an isomorphism.  Now, it remains to show that  $\chi_{x}(x)=1,\  \forall x \in \mathbb{F}_{p^{2e}}$. Let $x=\sum_{i=1}^{2e}n_ix_i$, then $ \mathlarger{\mathlarger {\chi}}_ {(\sum_{i=1}^{2e}n_ix_i)}\left(  \sum_{i=1}^{2e}n_ix_i\right)=\prod_{i,j=1}^{2e}\left( \chi_{x_i}(x_j)\right)^{n_in_j}\left(\chi_{x_j}(x_i)\right)^{n_jn_i}=\prod_{i,j=1}^{2e}(\xi^{k_{ij}})^{n_in_j}(\xi^{-k_{ij}})^{n_jn_i}=1$. Hence, we get the required duality. Now, it is  known from \cite{invertibleskew-symmetric} and \cite[Exercise 199]{invertiblesymmetric} that the number of  symmetric matrices in $GL(2e-1,p)$ equals the number of skew-symmetric matrices in  $GL(2e,p)$. Thus, as a result of equation \ref{sym}, we have $ 	p^{e(e-1)}(p-1)(p^3-1)\dots(p^{2e-1}-1) $ dualities over $\mathbb{F}_{p^{2e}}$ such that $\chi_x(x)=1$, for all $x\in\mathbb{F}_{p^{2e}}$.
	\end{proof}
	\begin{remark}
		\begin{enumerate}
			\item In the above theorem, we specifically consider the case of an even number $2e$. This is because a skew-symmetric matrix is invertible if and only if the order of the matrix is even. 
			\item Over $\mathbb{F}_{p^{2e}}, \ p>2$, there are exactly $p^{e(e-1)}(p-1)(p^3-1)\dots(p^{2e-1}-1) $ non-symmetric dualities for which $\chi_x(x)=1, \ \forall x\in \mathbb{F}_{p^{2e}}.$
			\item Over $\mathbb{F}_{2^{2e}}$, if $\chi_x(x)=1$, for all $x\in\mathbb{F}_{2^{2e}}$, then corresponding duality will be symmetric by Lemma \ref{chi-x-y-not1} (since -1 is self inverse). Hence, there are exactly $2^{e(e-1)}(2-1)(2^3-1)\dots(2^{2e-1}-1) $  symmetric dualities such that $\chi_x(x)=1, \ \forall x\in \mathbb{F}_{2^{2e}}.$
		\end{enumerate}
	\end{remark} 
	\begin{theorem}
		There  is no duality over $ \mathbb{F}_{p^{2e+1}},$ such that  $\chi_{x}(x)=1, \ \forall x \in \mathbb{F}_{p^{2e+1}}$.
	\end{theorem}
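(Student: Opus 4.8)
The plan is to mirror the matrix-theoretic framework developed in the proof of Theorem \ref{symduality}. First I would fix generators $x_1,\dots,x_{2e+1}$ of $\mathbb{F}_{p^{2e+1}}$ and define the candidate map $\phi$ by $\chi_{x_i}(x_j)=\xi^{k_{ij}}$ with $0\leq k_{ij}\leq p-1$, collecting the exponents into a matrix $K=(k_{ij})$ over $\mathbb{F}_p$. Exactly as in Theorem \ref{symduality}, the kernel computation shows that $\phi$ is an isomorphism --- equivalently, that these data define a genuine duality --- if and only if $K$ is invertible over $\mathbb{F}_p$. So it suffices to show that no invertible $K$ is compatible with the self-orthogonality hypothesis.

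Next I would translate the hypothesis $\chi_x(x)=1$ into a condition on $K$. Writing $x=\sum_i n_i x_i$ and expanding, $\chi_x(x)=\prod_{i,j}\left(\chi_{x_i}(x_j)\right)^{n_in_j}=\xi^{\sum_{i,j}k_{ij}n_in_j}$, so the hypothesis is equivalent to the quadratic form $\sum_{i,j}k_{ij}n_in_j$ vanishing modulo $p$ for every $(n_1,\dots,n_{2e+1})\in\mathbb{F}_p^{2e+1}$. Evaluating this form at a single standard basis vector forces $k_{ii}\equiv 0$, and evaluating at a sum of two basis vectors forces $k_{ij}+k_{ji}\equiv 0$; hence $K$ is skew-symmetric with zero diagonal. (Equivalently, one reaches the same conclusion directly from Lemma \ref{chi-x-y-not1}, which yields $\chi_{x_j}(x_i)=\left(\chi_{x_i}(x_j)\right)^{-1}$, i.e.\ $k_{ji}\equiv -k_{ij}$.)

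Finally I would invoke the parity obstruction for skew-symmetric matrices of odd order. Since $K$ is skew-symmetric of size $2e+1$, we have $\det K=\det K^{T}=\det(-K)=(-1)^{2e+1}\det K=-\det K$, whence $2\det K\equiv 0\pmod p$ and therefore $\det K\equiv 0$. Thus $K$ is singular, contradicting the invertibility required for $\phi$ to be an isomorphism. Consequently no duality over $\mathbb{F}_{p^{2e+1}}$ can satisfy $\chi_x(x)=1$ for all $x$, which is the claim.

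I expect the only delicate point to be the characteristic-two case $p=2$, where the step ``$2\det K\equiv 0\Rightarrow\det K\equiv 0$'' is vacuous; there one instead uses the standard fact that an alternating matrix (skew-symmetric with zero diagonal) of odd order is singular over any field, for instance via the Pfaffian, which is defined only in even size. Everything else is a direct transcription of the computation already carried out in Theorem \ref{symduality}, so the heart of the argument is really just the two observations that the self-orthogonality condition makes $K$ skew-symmetric and that an odd-order skew-symmetric matrix cannot be invertible.
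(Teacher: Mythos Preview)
Your proposal is correct and follows essentially the same approach as the paper: reduce the existence of such a duality to the invertibility of a skew-symmetric matrix of odd order, then use that no such invertible matrix exists. The paper's proof is a one-line invocation of this fact (relying on the setup from the preceding theorem), whereas you spell out the translation to the matrix $K$, the derivation of skew-symmetry from the self-orthogonality hypothesis, and the characteristic-two subtlety, but the underlying argument is the same.
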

	\begin{proof}
		There does not exists an invertible skew-symmetric matrix of order $2e+1$. Consequently, over $\mathbb{F}_{p^{2e+1}}$,  there is no duality  such that $\chi_{x}(x)=1,\ \forall x \in \mathbb{F}_{p^{2e+1}}$.
	\end{proof}
	We are now defining a subclass of non-symmetric dualities  where each element is self-orthogonal. These dualities will be known as  skew-symmetric dualities, and this specific class  is represented by $\mathcal{A}$, throughout the paper. \\\\
	\textbf{Note:} For $p=2$, this class $\mathcal{A}$ is  the set of symmetric dualities such that $\chi_x(x)=1$, for all $x\in \mathbb{F}_{2^{2e}}$.\\
	
	\begin{exmp}
		Possible skew-symmetric dualities over the additive group $(\mathbb{F}_{3^2},+)=\left\langle 1,\omega \right\rangle=\{0,1,2,\omega,\omega+1,\omega+2,2\omega,2\omega+1,2\omega+2\}$ are as follows.
		\begin{enumerate}
			\item $ M_1 $ is defined as, $\phi(1)=\chi_1,\ \phi(\omega)=\chi_\omega$, where $\chi_1(1)=1,\ \chi_1(\omega)=\alpha$ and $\chi_\omega(1)=\alpha^2,\ \chi_\omega(\omega)=1$ and  $\alpha$ is 3rd root of unity.
			\item $M_2$ is defined as,  $\phi'(1)=\chi'_1,\ \phi'(\omega)=\chi'_\omega$, where $\chi'_1(1)=1,\ \chi'_1(\omega)=\alpha^2$ and $\chi'_\omega(1)=\alpha,\ \chi'_\omega(\omega)=1$ and $\alpha$ is 3rd root of unity.
		\end{enumerate} 
		We can completely define these dualities by the following character tables:\\
				
					\begin{center}
				\begin{tabular}{|c|c|c|c|c|c|c|c|c|c|}
					\hline
					$ \bm{M_1} $&0&1&2&$ \omega $&$ \omega+1 $&$ \omega+2 $&$ 2\omega $&$ 2\omega+1 $&$ 2\omega+2 $\\
					\hline
					0&1&1&1&1&1&1&1&1&1\\
					\hline
					1&1&1&1&$ \alpha $&$ \alpha $&$ \alpha $&$ \alpha^2 $&$ \alpha^2 $&$ \alpha^2 $\\
					\hline
					2&1&1&1&$ \alpha^2 $&$ \alpha^2 $&$ \alpha^2 $&$ \alpha $&$ \alpha $&$ \alpha $\\
					\hline
					$ \omega $&1&$ \alpha^2 $&$ \alpha $&1&$ \alpha^2 $&$ \alpha $&1&$ \alpha^2 $&$ \alpha $\\
					\hline
					$ \omega+1 $&1&$ \alpha^2 $&$ \alpha $&$ \alpha $&1&$ \alpha^2 $&$ \alpha^2 $&$ \alpha $&1\\
					\hline
					$ \omega+2 $&1&$ \alpha^2 $&$ \alpha $&$ \alpha^2 $&$ \alpha $&1&$ \alpha $&1&$ \alpha^2 $\\
					\hline
					$ 2\omega $&1&$ \alpha $&$ \alpha^2 $&1&$ \alpha $&$ \alpha^2 $&1&$ \alpha $&$ \alpha^2 $\\
					\hline
					$ 2\omega+1 $&1&$ \alpha $&$ \alpha^2 $&$ \alpha $&$ \alpha^2 $&1&$ \alpha^2 $&1&$ \alpha $\\
					\hline
					$ 2\omega+2 $&1&$ \alpha $&$ \alpha^2 $&$ \alpha^2 $&1&$ \alpha $&$ \alpha $&$ \alpha^2 $&1\\
					\hline
				\end{tabular}\vspace{0.2cm}
				\begin{tabular}{|c|c|c|c|c|c|c|c|c|c|}
					\hline
					$\bm{M_2}$&$0$&$1$&$2$&$\omega$&$\omega+1$&$\omega+2$&$2\omega$&$2\omega+1$&$2\omega+2$\\
					\hline
					$0$&$1$  &$1$  &$1$  &$1$  & $1$ & $1 $&$1$  &$1 $ &$1$  \\
					\hline
					$1$& $1$ &$1$ &$1$  &$\alpha^2 $ & $\alpha^2$ &$\alpha^2 $ &$\alpha$  &$\alpha$  &$\alpha$  \\
					\hline
					$2$&$1$  &$1$  &$1$  &$\alpha$  &$\alpha$  &$\alpha$  &$\alpha^2$  &$\alpha^2$  &$\alpha^2$  \\
					\hline
					$\omega$& $1$ &$\alpha$  &$\alpha^2$  &$1$  &$\alpha$  &$\alpha^2$  &$1$  &$\alpha$  &$\alpha^2$  \\
					\hline
					$\omega+1$& $1$ &$\alpha$  &$\alpha^2 $ &$\alpha^2 $ &$1 $ &$\alpha $ &$\alpha$  &$\alpha^2 $ &$1 $ \\
					\hline
					$ \omega+2 $& $ 1 $ & $ \alpha $ & $ \alpha^2 $ &$ \alpha $  &$ \alpha^2 $  &$ 1 $  &$ \alpha^2 $  &$ 1 $  &$ \alpha $  \\
					\hline
					$ 2\omega $& $ 1 $ &$ \alpha^2 $  &$ \alpha $  &$ 1 $  & $ \alpha^2 $ &$ \alpha $  &$ 1 $  &$ \alpha^2 $  &$ \alpha $  \\
					\hline
					$ 2\omega+1 $& $ 1 $ &$ \alpha^2 $  &$ \alpha $  &$ \alpha^2 $  & $ \alpha $ & $ 1 $ & $ \alpha $ &$ 1 $  &$ \alpha^2 $  \\
					\hline
					$ 	2\omega+2 $&$ 1 $  &$ \alpha^2 $  &$ \alpha $  &$ \alpha $  &$ 1 $  &$ \alpha^2 $  &$ \alpha^2 $  &$ \alpha $  & $ 1 $\\
					\hline
				\end{tabular}
			\end{center}
			
			\end{exmp}
	\section{ACD codes over skew-symmetric dualities  }\label{ACD code on skew-symmetric dualities}
	Let $C$ be an additive code over  $\mathbb{F}_q$ and $\mathcal{G}$ be an $m\times n$ generator matrix of $C$ with entries over $\mathbb{F}_q$. Let $\mathcal{G}_i$ be the $i$-th row of the matrix $\mathcal{G}$. Then  $ij$-th entry of the matrix $ \mathcal{G}\odot_M\mathcal{G}^T $ is defined as $(\mathcal{G}\odot_M\mathcal{G}^T)_{ij}=\chi_{\mathcal{G}_i}(\mathcal{G}_j)=\xi^{k_{ij}}$, where $\xi$ is the primitive $p$-th root of unity and $0\leq k_{ij}\leq p-1$. Define the matrix $\log_{\xi}(\mathcal{G}\odot_M\mathcal{G}^T)$ as follows:
	\begin{align*}
		(\log_{\xi}(\mathcal{G}\odot_M\mathcal{G}^T))_{ij}=&\log_{\xi}(\mathcal{G}\odot_M\mathcal{G}^T)_{ij}=\log_{\xi}(\xi^{k_{ij}})\equiv k_{ij}\mod p, 
	\end{align*}
	where $\log_\xi$ represents the principal logarithm of complex numbers to the base $\xi$. 
	Here, we shall provide a necessary and sufficient condition that must be met for an additive code $C$ to be an ACD code concerning  arbitrary dualities.
	\begin{theorem}\label{iff condition on ACD code}
		Let $\mathcal{G}$ be an  $m\times n$ generator matrix of an additive  code $C$ over $\mathbb{F}_{p^e}$. Then $C$ is an ACD code  with respect to the duality $M$ if and only if the matrix $\log_{\xi}(\mathcal{G}\odot_M\mathcal{G}^T)$ is invertible over $\mathbb{F}_{p}$ , where $\xi$ is the primitive $p$-th root of unity.
	\end{theorem}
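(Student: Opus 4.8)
The plan is to parametrise the codewords of $C$ by their vectors of $\mathbb{F}_p$-coefficients and to translate the condition ``$c\in C^M$'' into a homogeneous linear system over $\mathbb{F}_p$ whose coefficient matrix is exactly $\log_{\xi}(\mathcal{G}\odot_M\mathcal{G}^T)$. Write $K=\log_{\xi}(\mathcal{G}\odot_M\mathcal{G}^T)$, so that $K=(k_{ij})$ with $\chi_{\mathcal{G}_i}(\mathcal{G}_j)=\xi^{k_{ij}}$. Since $\mathcal{G}$ is a generator matrix, its rows $\mathcal{G}_1,\dots,\mathcal{G}_m$ are $\mathbb{F}_p$-linearly independent, so the map $\alpha=(\alpha_1,\dots,\alpha_m)\mapsto c_\alpha:=\sum_{i=1}^m\alpha_i\mathcal{G}_i$ is a bijection from $\mathbb{F}_p^m$ onto $C$; in particular $c_\alpha=0$ if and only if $\alpha=0$.

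First I would record two elementary facts about characters, both consequences of $\phi$ being a homomorphism: for a fixed $\mathbf{a}$ the assignment $\mathbf{b}\mapsto\chi_{\mathbf{a}}(\mathbf{b})$ is a homomorphism, and $\chi_{\sum_i\alpha_i\mathcal{G}_i}=\prod_i\chi_{\mathcal{G}_i}^{\alpha_i}$. Using the second, for each generator $\mathcal{G}_j$ I compute
$$\chi_{c_\alpha}(\mathcal{G}_j)=\prod_{i=1}^m\chi_{\mathcal{G}_i}(\mathcal{G}_j)^{\alpha_i}=\prod_{i=1}^m\xi^{\alpha_i k_{ij}}=\xi^{\sum_{i=1}^m\alpha_i k_{ij}}.$$
Using the first fact, $\chi_{c_\alpha}(d)=1$ for every $d\in C$ is equivalent to $\chi_{c_\alpha}(\mathcal{G}_j)=1$ for $j=1,\dots,m$, because every $d$ is an $\mathbb{F}_p$-combination of the generators. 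Hence $c_\alpha\in C^M$ if and only if $\sum_{i=1}^m\alpha_i k_{ij}\equiv 0\pmod p$ for all $j$, that is, if and only if $\alpha^{T}K=0$ over $\mathbb{F}_p$.

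With this the theorem is nearly immediate. Every element of $C\cap C^M$ has the form $c_\alpha$ with $\alpha^{T}K=0$, and by the bijection above it is the zero vector exactly when $\alpha=0$. Therefore $C\cap C^M=\{\mathbf{0}\}$ holds if and only if the homogeneous system $\alpha^{T}K=0$ (equivalently $K^{T}\alpha=0$) has only the trivial solution over $\mathbb{F}_p$, which is precisely the assertion that $K^{T}$, and hence $K=\log_{\xi}(\mathcal{G}\odot_M\mathcal{G}^T)$, is invertible over $\mathbb{F}_p$.

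I expect the main technical point to be the passage from the defining condition of $C^M$ (a product over all $n$ coordinates, quantified over all of $C$) to the finite $m\times m$ linear system: one must justify that testing the generators suffices and that the products of complex $p$-th roots of unity collapse to sums of exponents modulo $p$, which is exactly where the $\log_{\xi}$ bookkeeping and the homomorphism property of $\phi$ do the work. A secondary point worth stating explicitly is the $\mathbb{F}_p$-linear independence of the rows of $\mathcal{G}$, without which the implication $c_\alpha=0\iff\alpha=0$ could fail and the equivalence would break down.
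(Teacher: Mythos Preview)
Your argument is correct and follows essentially the same route as the paper: parametrise codewords by their coefficient vectors, reduce the membership condition $c_\alpha\in C^M$ to the vanishing of $\chi_{c_\alpha}$ on the generators, collapse the resulting products of $p$-th roots of unity to the linear system $K^{T}\alpha=0$ over $\mathbb{F}_p$, and conclude via invertibility of $K$. Your write-up is in fact slightly more careful than the paper's, since you make explicit both why testing on generators suffices and why the $\mathbb{F}_p$-linear independence of the rows of $\mathcal{G}$ is needed for the implication ``nontrivial solution $\Rightarrow$ nonzero element of $C\cap C^M$''.
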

	\begin{proof}	
	Let $\textbf{v}\in C\cap C^M$, then $\textbf{v}=\sum_{i=1}^{m}n_i\mathcal{G}_i$ and $\chi_{\textbf{v}}(\mathcal{G}_j)=1,\ \forall 1\leq j \leq m.$ We have $\chi_{(\sum_{i=1}^{m}n_i\mathcal{G}_i)}(\mathcal{G}_j)=\prod_{i=1}^{m}(\chi_{\mathcal{G}_i}(\mathcal{G}_j))^{n_i}=\prod_{i=1}^{m}\xi^{k_{ij}n_i}=\xi^{\sum_{i=1}^{m}{k_{ij}n_i}}=1$, where $\xi$ is the primitive $p$-th root of unity and $0\leq k_{ij}\leq p-1$. It gives that  $ \sum_{i=1}^{m}{k_{ij}n_i}\equiv 0\mod p, \ \forall 1\leq j \leq m$. Therefore, we have the following system of equations 
		
		\begin{center}
			$\begin{bmatrix}
				k_{11}&k_{21}&k_{31}&\dots&k_{m1}\\
				k_{12}&k_{22}&k_{32}&\dots&k_{m2}\\
				k_{13}&k_{23}&k_{33}&\dots&k_{m3}\\
				\vdots&\vdots&\vdots&\ddots&\vdots\\
				k_{1m}&k_{2m}&k_{3m}&\dots&k_{mm}
			\end{bmatrix}  \begin{bmatrix}
				n_1\\n_2\\n_3\\ \vdots\\n_{m}
			\end{bmatrix}\equiv\begin{bmatrix}
				0\\0\\0\\ \vdots\\0
			\end{bmatrix} \mod p.$
		\end{center}
		In particular, we have
		$$(\log_{\xi}(\mathcal{G}\odot_M\mathcal{G}^T))^T\begin{bmatrix}
			n_1\\n_2\\n_3\\ \vdots\\n_{m}
		\end{bmatrix}\equiv\begin{bmatrix}
			0\\0\\0\\ \vdots\\0
		\end{bmatrix} \mod p.$$
		The above system of equations has only a trivial solution if and only if the matrix $ \log_{\xi}(\mathcal{G}\odot_M\mathcal{G}^T)$ is invertible over $\mathbb{F}_p$. This argument completes the proof.
	\end{proof}
	\begin{exmp}
		Let $C$ be an additive code over $\mathbb{F}_{3^2}$ generated by a matrix $\mathcal{G}=\begin{bmatrix}
			1&\omega\\
			\omega+1&0
		\end{bmatrix}$.  We have  $\mathcal{G}\odot_{M_1}\mathcal{G}^T=\begin{bmatrix}
			1&\alpha\\
			\alpha^2&1
		\end{bmatrix}$, where $\alpha$ is the 3-rd root of unity and $\log_{\alpha}(\mathcal{G}\odot_{M_1}\mathcal{G}^T)=\begin{bmatrix}
			0&1\\
			-1&0
		\end{bmatrix}$ is  an invertible matrix over $\mathbb{F}_3$. Then $C=\{(0,0),(1,\omega),(\omega+1,0),(\omega+2,\omega),(2\omega,\omega),(\omega,2\omega),(2\omega+1,2\omega),(2,2\omega),(2\omega+2,0)\}$.  The dual of the code $C$ with respect to duality $M_1$ is $C^{M_1}=\{(0,0),(0,\omega),(0,2\omega),(\omega+1,1),(\omega+1,\omega+1),(\omega+1,2\omega+1),(2\omega+2,2),(2\omega+2,\omega+2),(2\omega+2,2\omega+2)\}$. It is clear that $C\cap C^{M_1}=\{(0,0)\}$. Hence, $C$ is an ACD code for duality $M_1$. 
	\end{exmp}
	\begin{exmp}
		Let $C$ be an additive code over $\mathbb{F}_{3^2}$ generated by a matrix $\mathcal{G}=\begin{bmatrix}
			1&\omega\\
			\omega &1
		\end{bmatrix}$. Then $C=\{(0,0),(1,\omega),(\omega,1),(2,2\omega),(2\omega,2),(1+\omega,1+\omega),(1+2\omega,\omega+2),(2+\omega,2\omega+1),(2+2\omega,2+2\omega)\}$. It is easy to see that $C=C^{M_1}.$ Hence, it is not an ACD code. Also, we observe that $\log_{\alpha}(\mathcal{G}\odot_{M_1}\mathcal{G}^T)=\begin{bmatrix}
			0&0\\
			0&0
		\end{bmatrix}$ is a non-invertible matrix.
	\end{exmp}
	
	The following theorem states that if the number of generators of an additive code  $C$  over $\mathbb{F}_{p^{2e}}$ is odd, it cannot satisfy the properties required for being an ACD code under any duality of the class $\mathcal{A}$. 
	\begin{theorem}
		For any duality $M\in \mathcal{A}$, if $C$ is an ACD code with the generator matrix $\mathcal{G}$ of order $m\times n$ over $\mathbb{F}_{p^{2e}}, $ then  $m$ is even.
	\end{theorem}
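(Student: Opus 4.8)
The plan is to reduce the statement to a structural fact about the matrix $K:=\log_{\xi}(\mathcal{G}\odot_M\mathcal{G}^T)$ over $\mathbb{F}_p$, whose $ij$-th entry is the exponent $k_{ij}$ with $\chi_{\mathcal{G}_i}(\mathcal{G}_j)=\xi^{k_{ij}}$. By Theorem \ref{iff condition on ACD code}, $C$ is an ACD code with respect to $M$ if and only if $K$ is invertible over $\mathbb{F}_p$. Hence it suffices to prove that $K$ can be invertible only when its order $m$ is even, and the whole argument rests on identifying exactly which symmetry $K$ inherits from the assumption $M\in\mathcal{A}$.

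First I would read off two constraints forced by $M\in\mathcal{A}$. Since every element is self-orthogonal, $\chi_{\mathcal{G}_i}(\mathcal{G}_i)=1=\xi^{0}$, so all diagonal entries vanish, $k_{ii}\equiv 0\bmod p$. Next, because $\chi_x(x)=1$ for all $x\in\mathbb{F}_{p^{2e}}$, Lemma \ref{chi-x-y-not1} gives $\chi_{\mathcal{G}_j}(\mathcal{G}_i)=(\chi_{\mathcal{G}_i}(\mathcal{G}_j))^{-1}$, whence $k_{ji}\equiv -k_{ij}\bmod p$. Thus $K$ is an \emph{alternating} matrix over $\mathbb{F}_p$: it is skew-symmetric and has zero diagonal.

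The crux is then the classical fact that an alternating matrix over any field has even rank; equivalently, an invertible alternating matrix must have even order. I would establish this by the standard symplectic reduction: if $K\neq 0$, pick an entry $k_{ij}\neq 0$, and use the corresponding pair of rows and columns to clear the remaining entries of those two rows and columns by simultaneous row/column operations (congruence), which preserve both the alternating property and the rank. This splits off a nonsingular block $\begin{bmatrix} 0 & k_{ij}\\ -k_{ij} & 0\end{bmatrix}$ and leaves a smaller alternating matrix on the complementary indices, on which one recurses. The rank is therefore a sum of $2$'s, hence even, so full rank $m$ forces $m$ to be even.

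I expect the main obstacle to be the case $p=2$. There ``skew-symmetric'' coincides with ``symmetric'', so the quick determinant argument $\det K=\det K^{T}=\det(-K)=(-1)^{m}\det K$, which kills $\det K$ for odd $m$ when $p\neq 2$, collapses entirely. The remedy is precisely to lean on the vanishing diagonal, i.e. the alternating property, rather than on skew-symmetry alone: the symplectic reduction above uses only that $K$ is alternating and so applies uniformly in every characteristic, including $p=2$, where the Note identifies $\mathcal{A}$ with the symmetric dualities having $\chi_x(x)=1$ and hence zero diagonal. This uniform treatment is what makes the conclusion hold for all $M\in\mathcal{A}$.
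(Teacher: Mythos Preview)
Your proposal is correct and follows essentially the same approach as the paper: identify $K=\log_{\xi}(\mathcal{G}\odot_M\mathcal{G}^T)$ as skew-symmetric with zero diagonal, invoke Theorem~\ref{iff condition on ACD code} to get invertibility, and conclude $m$ is even since alternating matrices of odd order are singular. The paper's proof is terser and simply cites the fact that odd-order skew-symmetric matrices are not invertible; your version is more careful in separating the characteristic $2$ case and in sketching the symplectic reduction behind the even-rank property, which is a genuine improvement in rigor but not a different strategy.
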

	\begin{proof}
		Let $\mathcal{G}_i$ be the $i$-th row of the matrix $\mathcal{G}$.	For any duality $M\in\mathcal{A}$, we have  $\chi_{\mathcal{G}_i}(\mathcal{G}_j)=\begin{cases}
			1,& \text{\ if } \ i=j,\\
			\xi^{k_{ij}}, & \text{ if } \ i< j,  \\
			\xi^{-k_{ji}}, &\text{ if } \ i>j  
		\end{cases}. $ Consequently, one can deduce that the matrix $ \log_{\xi}(\mathcal{G}\odot_M\mathcal{G}^T) $ is skew-symmetric of order $m\times m$. Hence, by the Theorem \ref{iff condition on ACD code} and the fact that odd order skew-symmetric matrices are not invertible, $m$ is even. 
	\end{proof}
	The next two corollaries provide construction techniques for generating ACD codes  over skew-symmetric dualities. These insights offer valuable tools for constructing ACD codes in this particular context.
	\begin{corollary}\label{ACD1}
		Let $C$ be an additive code over $\mathbb{F}_{p^{2e}}$ with generator matrix $\mathcal{G}$ of order $2s\times n$. Let $\mathcal{G}_i$ be the $i$-th row of the generator matrix $\mathcal{G}$ such that $\chi_{\mathcal{G}_i}(\mathcal{G}_j)=\xi$, for all $1\leq i<j\leq 2s\ \text{and}\  M\in \mathcal{A}$, where $\xi$ is the primitive $p$-th root of unity. Then $C$ is an ACD code.  
	\end{corollary}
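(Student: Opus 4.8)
The plan is to reduce everything to the invertibility criterion of Theorem~\ref{iff condition on ACD code}: since $C$ is ACD with respect to $M$ if and only if $\log_{\xi}(\mathcal{G}\odot_M\mathcal{G}^T)$ is invertible over $\mathbb{F}_p$, it suffices to write this $2s\times 2s$ matrix down explicitly under the stated hypotheses and to show that its determinant is nonzero in $\mathbb{F}_p$.

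First I would read off the entries of $\log_{\xi}(\mathcal{G}\odot_M\mathcal{G}^T)$. Because $M\in\mathcal{A}$ we have $\chi_{\mathcal{G}_i}(\mathcal{G}_i)=1$, so the diagonal is $0$. For $i<j$ the hypothesis $\chi_{\mathcal{G}_i}(\mathcal{G}_j)=\xi$ makes the $(i,j)$ entry equal to $1$. For $i>j$ I would invoke Lemma~\ref{chi-x-y-not1} (applicable since $\chi_x(x)=1$ for every $x$), giving $\chi_{\mathcal{G}_i}(\mathcal{G}_j)=(\chi_{\mathcal{G}_j}(\mathcal{G}_i))^{-1}=\xi^{-1}$, so the $(i,j)$ entry is $-1$. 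Thus $\log_{\xi}(\mathcal{G}\odot_M\mathcal{G}^T)$ is precisely the skew-symmetric matrix $J$ of order $2s$ having $0$ on the diagonal, $+1$ above it and $-1$ below it.

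The technical core is to prove $\det(J)\neq 0$ over $\mathbb{F}_p$, and I expect in fact $\det(J)=1$ in every characteristic. The cleanest route is the Pfaffian. Expanding $\mathrm{Pf}(J)$ along the first row and using that all superdiagonal entries of the first row equal $1$, one obtains $\mathrm{Pf}(J_{2s})=\bigl(\sum_{j=2}^{2s}(-1)^{j}\bigr)\,\mathrm{Pf}(J_{2s-2})$, because deleting any complementary pair of indices $\{1,j\}$ leaves a smaller matrix with the very same ``$+1$ above, $-1$ below'' pattern. The alternating sum has an odd number of terms beginning and ending with $+1$, so it collapses to $1$; together with the base case $\mathrm{Pf}(J_2)=1$ this yields $\mathrm{Pf}(J_{2s})=1$ and hence $\det(J)=\mathrm{Pf}(J)^2=1$. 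Alternatively one can subtract each row of $J$ from the preceding one, reducing $J$ to an upper-bidiagonal matrix together with one full bottom row, and then expand along the last column.

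With $\det(J)=1\neq 0$ in $\mathbb{F}_p$, the matrix $\log_{\xi}(\mathcal{G}\odot_M\mathcal{G}^T)$ is invertible over $\mathbb{F}_p$, so Theorem~\ref{iff condition on ACD code} immediately gives that $C$ is an ACD code. The one step demanding care is the determinant evaluation: I must check that removing a complementary index pair genuinely preserves the structure of $J$ so that the Pfaffian recursion is valid, and that the sign sum $\sum_{j=2}^{2s}(-1)^{j}$ equals $1$ and not $0$, which is exactly what guarantees invertibility for every even order and every prime $p$.
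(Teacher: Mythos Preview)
Your argument is correct, but it is organised differently from the paper's. The paper does \emph{not} pass through Theorem~\ref{iff condition on ACD code}: it takes a vector $\mathbf{v}=\sum_i n_i\mathcal{G}_i\in C\cap C^M$, writes the orthogonality conditions $\chi_{\mathbf{v}}(\mathcal{G}_j)=1$ as the system $\sum_{i<j}n_i-\sum_{i>j}n_i\equiv 0\pmod p$ for $j=1,\dots,2s$, and observes directly that this system forces all $n_i=0$ (subtracting consecutive equations gives $n_j+n_{j+1}\equiv 0$, and feeding the resulting alternating relations back into the first equation yields $n_1=0$). Your route instead identifies $\log_\xi(\mathcal{G}\odot_M\mathcal{G}^T)$ with the explicit skew-symmetric matrix $J$ having $+1$ above and $-1$ below the diagonal, and establishes $\det J=1$ via the Pfaffian recursion; the key observation that deleting any pair of indices from $J$ reproduces the same pattern, together with $\sum_{j=2}^{2s}(-1)^j=1$, is valid and makes the recursion go through. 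The paper's argument is slightly more elementary (no Pfaffians), while yours gives the sharper information $\det J=1$ uniformly in $p$ and connects the corollary more visibly to the general criterion of Theorem~\ref{iff condition on ACD code}. One small remark: Lemma~\ref{chi-x-y-not1} is stated for field elements, so when you invoke it for rows $\mathcal{G}_i,\mathcal{G}_j$ you are implicitly using its componentwise extension $\chi_{\mathbf{y}}(\mathbf{x})=\prod_k\chi_{y_k}(x_k)=\prod_k\chi_{x_k}(y_k)^{-1}=\chi_{\mathbf{x}}(\mathbf{y})^{-1}$; this is immediate but worth saying.
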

	\begin{proof}
		Let $\textbf{v}\in C\cap C^M$, then $\textbf{v}=\sum_{i=1}^{2s}n_i\mathcal{G}_i$ and $\chi_{\sum_{i=1}^{2s}n_i\mathcal{G}_i}(\mathcal{G}_j)=1, \ \forall 1\leq j \leq 2s.$ Thus, $\chi_{\sum_{i=1}^{2s}n_i\mathcal{G}_i}(\mathcal{G}_j)=\prod_{i=1}^{2s}\chi_{\mathcal{G}_i}(\mathcal{G}_j)^{n_i}=\prod_{i=1}^{j-1}\xi^{n_i}\prod_{i=j+1}^{2s}\xi^{-n_i}=\xi^{\sum_{i=1}^{j-1}n_i-\sum_{i=j+1}^{2s}n_i}=1$. This implies that $\sum_{i=1}^{j-1}n_i-\sum_{i=j+1}^{2s}n_i\equiv 0\mod p, \ \forall 1\leq j \leq 2s.$ This system of equations has only a trivial solution. Hence, the code generated by $\mathcal{G}$ is an ACD code.
	\end{proof}
	\begin{exmp}
		Let $\mathcal{G}=\begin{bmatrix}
			1 & \omega &0\\
			1&1&\omega\\
			0&1&1\\
			2\omega&\omega& \omega+1
		\end{bmatrix}$ be a generator matrix of a code over $\mathbb{F}_{3^2}$. Consider the dualities $M_1$ and $M_2$. Since $\chi_{\mathcal{G}_i}(\mathcal{G}_j)_{M_1}=\alpha^2$, for all $1\leq i<j\leq 4$, and $\chi_{\mathcal{G}_i}(\mathcal{G}_j)_{M_2}=\alpha$, for all $1\leq i<j\leq 4$, where $\alpha$ is the primitive $ 3 $rd root of unity. Therefore, matrix $\mathcal{G}$ satisfies the condition of Corollary \ref{ACD1}. Now, one can deduce that  $\log_{\alpha}(\mathcal{G}\odot_{M_1}\mathcal{G}^T)=\begin{bmatrix}
			0&2&2&2\\-2&0&2&2\\-2&-2&0&2\\-2&-2&-2&0
		\end{bmatrix}$ and $\log_{\alpha}(\mathcal{G}\odot_{M_2}\mathcal{G}^T)=\begin{bmatrix}
			0&1&1&1\\-1&0&1&1\\-1&-1&0&1\\-1&-1&-1&0
		\end{bmatrix}$. These matrices are invertible over $\mathbb{F}_{3}$. In view of the Theorem \ref{iff condition on ACD code}, the code generated by $\mathcal{G}$ is ACD code for both dualities $M_1$ and $M_2$.	
	\end{exmp}
	
	\begin{corollary}\label{ACD2}
		Let $C$ be an additive code over $\mathbb{F}_{p^{2e}}$ with generator matrix $\mathcal{G}$ of order $2s\times n$. Let $\mathcal{G}_i$ be the $i$-th row of the matrix $ \mathcal{G} $ and $S=\{(1,2),(3,4),(5,6),\dots,(2s-1,2s)\}$ such that $\chi_{\mathcal{G}_i}(\mathcal{G}_j)=\begin{cases}
			\xi_i & (i,j)\in S,\\
			\xi^{-1}_j & (j,i)\in S,\\
			1 & O/w
		\end{cases}$,  where $\xi_i$\textquotesingle s are any primitive $p$-th roots of unity for all $i=1,3,5,\dots,2s-1$ and $M\in\mathcal{A}$. Then $C$ is an ACD code.
	\end{corollary}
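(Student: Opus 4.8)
The plan is to reduce the claim to the invertibility criterion of Theorem \ref{iff condition on ACD code} and then exploit the very sparse structure that the hypothesis imposes on $\mathcal{G}\odot_M\mathcal{G}^T$. Concretely, by Theorem \ref{iff condition on ACD code}, $C$ is an ACD code with respect to $M$ if and only if the $2s\times 2s$ matrix $\log_{\xi}(\mathcal{G}\odot_M\mathcal{G}^T)$ is invertible over $\mathbb{F}_p$, so it suffices to establish nonsingularity of this matrix.

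First I would read off the entries of $\mathcal{G}\odot_M\mathcal{G}^T$ directly from the hypothesis. Since $M\in\mathcal{A}$, the diagonal entries are $\chi_{\mathcal{G}_i}(\mathcal{G}_i)=1$, and by the definition of $S$ the only off-diagonal entries different from $1$ occur at the positions $(2t-1,2t)$ and $(2t,2t-1)$ for $t=1,\dots,s$, where they equal $\xi_{2t-1}$ and $\xi_{2t-1}^{-1}$ respectively (the latter being forced by Lemma \ref{chi-x-y-not1}). Since the pairs in $S$ are already consecutive, no reordering is needed, and $\mathcal{G}\odot_M\mathcal{G}^T$ is block diagonal with $s$ blocks of the form $\left[\begin{smallmatrix} 1 & \xi_{2t-1} \\ \xi_{2t-1}^{-1} & 1\end{smallmatrix}\right]$, all remaining entries being $1$.

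Next I would pass to logarithms with a single fixed primitive $p$-th root $\xi$. Writing each $\xi_{2t-1}=\xi^{a_t}$ with $a_t\in\{1,\dots,p-1\}$ (so $a_t\not\equiv 0\bmod p$, because $\xi_{2t-1}$ is itself a primitive root), each block of $\log_{\xi}(\mathcal{G}\odot_M\mathcal{G}^T)$ becomes the $2\times 2$ skew-symmetric matrix $\left[\begin{smallmatrix} 0 & a_t \\ -a_t & 0\end{smallmatrix}\right]$, whose determinant is $a_t^2\neq 0$ in $\mathbb{F}_p$. Hence $\log_{\xi}(\mathcal{G}\odot_M\mathcal{G}^T)$ is block diagonal with $s$ invertible blocks and is therefore invertible over $\mathbb{F}_p$, and the conclusion follows from Theorem \ref{iff condition on ACD code}.

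The computation is essentially routine, so I do not anticipate a serious obstacle; the one point that needs care is that the roots $\xi_{2t-1}$ are permitted to differ from pair to pair, so one must commit to a single base $\xi$ and record each as a nonzero exponent $a_t$ rather than assuming a common value across blocks. As an alternative to the matrix argument, one could mirror the proof of Corollary \ref{ACD1} directly: for $\textbf{v}=\sum_{i=1}^{2s}n_i\mathcal{G}_i\in C\cap C^M$, evaluating $\chi_{\textbf{v}}(\mathcal{G}_{2t-1})=1$ forces $n_{2t}\equiv 0$ and $\chi_{\textbf{v}}(\mathcal{G}_{2t})=1$ forces $n_{2t-1}\equiv 0$ for every $t$, whence $\textbf{v}=\textbf{0}$ and $C\cap C^M=\{\textbf{0}\}$.
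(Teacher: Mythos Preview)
Your proposal is correct. Both the main argument (via Theorem \ref{iff condition on ACD code}) and the alternative you sketch at the end are valid.

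The paper actually takes your \emph{alternative} route: it works directly with an element $\textbf{v}=\sum_{i=1}^{2s}n_i\mathcal{G}_i\in C\cap C^M$, evaluates $\chi_{\textbf{v}}(\mathcal{G}_j)=\prod_i\chi_{\mathcal{G}_i}(\mathcal{G}_j)^{n_i}$ for each $j$, and reads off $n_{j-1}\equiv 0$ (for $j$ even) and $n_{j+1}\equiv 0$ (for $j$ odd), exactly as you describe in your last sentence. Your primary argument is a genuinely different packaging: you invoke the general invertibility criterion of Theorem \ref{iff condition on ACD code}, observe that $\log_\xi(\mathcal{G}\odot_M\mathcal{G}^T)$ is block diagonal with $2\times 2$ blocks $\left[\begin{smallmatrix}0&a_t\\-a_t&0\end{smallmatrix}\right]$, and compute each block determinant as $a_t^2\neq 0$. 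This buys a cleaner structural picture (and makes the connection to the ``Note'' following the corollary explicit), at the cost of routing through an earlier theorem; the paper's direct computation is shorter and self-contained. One small redundancy: you appeal to Lemma \ref{chi-x-y-not1} to get $\chi_{\mathcal{G}_{2t}}(\mathcal{G}_{2t-1})=\xi_{2t-1}^{-1}$, but this is already part of the hypothesis (the case $(j,i)\in S$), so the lemma is not needed here.
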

	
	\begin{proof}
		Let $\textbf{v}\in C\cap C^M$, then $\textbf{v}=\sum_{i=1}^{2s}n_i\mathcal{G}_i$ and $\chi_{\sum_{i=1}^{2s}n_i\mathcal{G}_i}(\mathcal{G}_j)=1, \ \forall 1\leq j \leq 2s.$ Thus, $\chi_{\sum_{i=1}^{2s}n_i\mathcal{G}_i}(\mathcal{G}_j)=\prod_{i=1}^{2s}\chi_{\mathcal{G}_i}(\mathcal{G}_j)^{n_i}=\begin{cases}
			\xi_{j-1}^{n_{j-1}} & j\ \text{is even},\\
			\xi_j^{-n_{j+1}} & j \ \text{is odd}
		\end{cases}=1$. This implies $n_{j-1}\equiv 0\mod p$, when $j$ is even, and $-n_{j+1}\equiv 0\mod p$, when $j$ is odd. Hence, $C\cap C^M=\{0\}.$
	\end{proof}
\noindent	\textbf{Note:} For a clear view of Corollary \ref{ACD2}, one can deduce that if $\mathcal{G}$ is a generator matrix of an additive code $C$ such that $$\mathcal{G}\odot_{M}\mathcal{G}^T=\begin{bmatrix}
		1&\xi_1&1&1&\dots&1&1\\
		\xi_{1}^{-1}&1&1&1&\dots&1&1\\
		1&1&1&\xi_3&\dots&1&1\\
		1&1&\xi_{3}^{-1}&1&\dots&1&1\\
		\vdots&\vdots&\vdots&\vdots&\ddots&\vdots&\vdots&\\
		1&1&1&1&\dots&1&\xi_{2s-1}\\
		1&1&1&1&\dots&\xi_{2s-1}^{-1}&1\\
	\end{bmatrix}$$
	Then $C$ is an ACD code.
	\begin{exmp}
		Let $\mathcal{G}=\begin{bmatrix}
			1&\omega&2&0&0\\
			\omega&\omega&1&0&0\\
			0&0&0&1&\omega\\
			0&0&0&2\omega&\omega
		\end{bmatrix}$ be a generator matrix for an additive code $C$ over $ \mathbb{F}_{3^2}$. Consider the duality $M_1$ and $M_2$. We have $\mathcal{G}\odot_{M_1}\mathcal{G}^T=\begin{bmatrix}
			1&\alpha&1&1\\
			\alpha^2&1&1&1\\
			1&1&1&\alpha^2\\
			1&1&\alpha&1
		\end{bmatrix}$ and $\mathcal{G}\odot_{M_2}\mathcal{G}^T=\begin{bmatrix}
			1&\alpha^2&1&1\\
			\alpha&1&1&1\\
			1&1&1&\alpha\\
			1&1&\alpha^2&1
		\end{bmatrix}$, where $\alpha$ is the primitive $ 3 $rd root of unity. Hence, it satisfy the conditions of Corollary \ref{ACD2}. Now, one can deduce that $\log_{\alpha}(\mathcal{G}\odot_{M_1}\mathcal{G}^T)=\begin{bmatrix}
			0&1&0&0\\
			2&0&0&0\\
			0&0&0&2\\
			0&0&1&0
		\end{bmatrix}$ and $\log_{\alpha}(\mathcal{G}\odot_{M_2}\mathcal{G}^T)=\begin{bmatrix}
			0&2&0&0\\
			1&0&0&0\\
			0&0&0&1\\
			0&0&2&0
		\end{bmatrix}$. These matrices are invertible over $\mathbb{F}_3$. Hence, by  Theorem \ref{iff condition on ACD code}, the code is ACD code for both dualities $M_1$ and $M_2$.
	\end{exmp}

	\section{Bounds of ACD codes on skew-symmetric dualities }\label{bounds on skew-symmetric dualities}
	\begin{defn}
		Let $ [n,p^k,d]_A $ denote the additive code of length $ n $ with cardinality $p^k$ and  distance $ d $ over the finite field $ \mathbb{F}_{p^e}$.  We define the highest possible distance of   ACD codes over $\mathbb{F}_{p^e}$ with respect to the duality $M$ as $ACD_{{p^e},M}[n,k]=\max\{d\mid\ [n,p^k,d]_A  \ \text{ACD code exists}\}$.
	\end{defn}
	The Singleton bound for an $ [n,p^k,d]_A$  additive code over $\mathbb{F}_{p^e}$ is $d\leq n-\lceil\frac{k}{e}\rceil+1.$ If a code attains the Singleton bound, then it is known as Maximum Distance Separable (MDS) code.
	We will say the two vectors $\textbf{u},\textbf{v}\in \mathbb{F}_{p^e}^n$ are $p$-linearly independent  in the sense that they are linearly independent over $\mathbb{F}_p$.
	\begin{lemma}\label{chi(x)(y) eq xi}
		Let $ \mathbb{F}_{p^{2e}},\ e\geq 1 $ be a finite field, then there exist $x,y\in\mathbb{F}_{p^{2e}}$ such that $\chi_{x}(y)=\xi\neq1$  for all dualities $M\in \mathcal{A}$, where $\xi$ is any primitive $p$-th root of unity.
	\end{lemma}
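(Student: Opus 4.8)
The plan is to exploit the non-degeneracy of the duality, namely the fact that the map $\phi$ defining $M$ is an isomorphism, which is precisely what makes $M \in \mathcal{A}$ a duality in the first place. Since the argument is identical for every $M \in \mathcal{A}$, I would fix one such duality together with its characters $\chi$, and fix the prescribed primitive $p$-th root of unity $\xi$.

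First I would choose any non-zero element $x \in \mathbb{F}_{p^{2e}}$, which exists because $p^{2e} \geq 4$. Because $M$ is a duality, $\phi$ is an isomorphism, so $x \notin \ker(\phi)$ and hence $\chi_x = \phi(x)$ is a non-trivial character. Equivalently, a non-zero element cannot be orthogonal to the whole space, so there is some $y_0 \in \mathbb{F}_{p^{2e}}$ with $\chi_x(y_0) = \xi_0 \neq 1$. Since $\chi_x(y_0)$ is a $p$-th root of unity and $p$ is prime, the value $\xi_0$ is automatically a primitive $p$-th root of unity; this already produces a pair with $\chi_x(y_0) \neq 1$.

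To hit the prescribed $\xi$, I would then use that the group of $p$-th roots of unity is cyclic of order $p$ and that $\xi_0$ generates it, so $\xi = \xi_0^{b}$ for a unique $b \in \{1, 2, \ldots, p-1\}$. Setting $y = b y_0$ (the $\mathbb{F}_p$-scalar multiple of $y_0$) and using that $\chi_x$ is a homomorphism gives $\chi_x(y) = \chi_x(y_0)^{b} = \xi_0^{b} = \xi$, which is the required pair $x, y$.

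There is no genuine obstacle here: everything rests on the single structural fact that the duality is non-degenerate, equivalently that the defining skew-symmetric matrix $[k_{ij}]$ is invertible over $\mathbb{F}_p$ and therefore non-zero, forcing some $k_{ij} \neq 0$ and hence some $\chi_{x_i}(x_j) = \xi^{k_{ij}} \neq 1$. The only point to verify is the harmless numerical claim that a non-trivial $p$-th root of unity is primitive, which holds because $p$ is prime. I would present this matrix viewpoint as a short alternative remark, since it directly exhibits the desired pair $(x_i, x_j)$ straight from the skew-symmetric data defining $M \in \mathcal{A}$, and then conclude by the scaling step above to match an arbitrary target $\xi$.
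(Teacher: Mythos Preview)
Your proposal is correct and follows essentially the same idea as the paper's proof: both arguments reduce to the non-degeneracy of the duality (a non-zero element cannot be orthogonal to the whole space), which forces some $\chi_x(y)\neq 1$. The paper phrases this via the generators $x_1,\dots,x_{2e}$ and a contradiction, while you pick an arbitrary non-zero $x$ directly; your additional $\mathbb{F}_p$-scaling step $y=by_0$ to hit a prescribed primitive $p$-th root $\xi$ is a nice touch that the paper leaves implicit.
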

	\begin{proof} Let the generators of $\mathbb{F}_{p^{2e}},\ e\geq1 $ be $x_1,x_2,\cdots,x_{2e}$ such that  $\mathbb{F}_{p^{2e}}=\left\langle x_1,x_2,\cdots,x_{2e}\right\rangle $. If, for some $ x_i$ and $x_j $, $ \chi_{x_i}(x_j)=\xi\neq1$, for $i\neq j$, then we are done. If $\chi_{x_i}(x_j)=1$, for all $1\leq i,j\leq 2e$, then $\chi_{x_i}(x)=1,\ \forall x\in \mathbb{F}_{p^{2e}}$. It follows that $x_i=0$. This gives a contradiction.
	\end{proof}
	\begin{lemma}\label{chi x y eq to 1}
		Let $\mathbb{F}_{p^{2e}},\ e>1 $ be a finite field, then there exist $x,y\in \mathbb{F}_{p^{2e}}$ such that $x$ and  $y$ are $p$-linearly independent and $\chi_x(y)=1$, for all dualities $M\in \mathcal{A}.$
	\end{lemma}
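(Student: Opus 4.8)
The plan is to translate the statement into a question about a non-degenerate alternating bilinear form over $\mathbb{F}_p$ and then settle it by a dimension count. Fix an arbitrary duality $M\in\mathcal{A}$ together with a generating set $x_1,\dots,x_{2e}$ of $\mathbb{F}_{p^{2e}}$ over $\mathbb{F}_p$. As in the construction of the skew-symmetric dualities, write $\chi_{x_i}(x_j)=\xi^{k_{ij}}$, where the matrix $K=(k_{ij})$ is skew-symmetric over $\mathbb{F}_p$ (so $k_{ii}=0$ and $k_{ji}=-k_{ij}$) and invertible over $\mathbb{F}_p$. For $x=\sum_i n_i x_i$ and $y=\sum_j m_j x_j$, expanding the character exactly as in Theorem \ref{iff condition on ACD code} yields $\chi_x(y)=\xi^{\sum_{i,j}k_{ij}n_im_j}=\xi^{\mathbf{n}^{T}K\mathbf{m}}$. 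Hence $\chi_x(y)=1$ is equivalent to the single linear condition $\mathbf{n}^{T}K\mathbf{m}\equiv 0\pmod p$.

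Next I would fix $x=x_1$, i.e. $\mathbf{n}=\mathbf{e}_1$, and examine the set $W=\{y\in\mathbb{F}_{p^{2e}}\mid \chi_{x_1}(y)=1\}$. Under the coordinate identification, $W$ is the solution set of the single homogeneous equation $(\mathbf{e}_1^{T}K)\mathbf{m}\equiv 0\pmod p$, whose coefficient vector is the first row of $K$. Since $K$ is invertible this row is nonzero, so the equation is nontrivial and $W$ is an $\mathbb{F}_p$-subspace of dimension exactly $2e-1$. Moreover $k_{11}=0$ gives $\chi_{x_1}(x_1)=\xi^{0}=1$, so $x_1\in W$.

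Finally I would extract the two required elements from $W$. Because $e>1$ we have $\dim_{\mathbb{F}_p}W=2e-1\geq 3$, so $W$ strictly contains the one-dimensional line $\mathbb{F}_p x_1$. Choosing any $y\in W\setminus \mathbb{F}_p x_1$ gives $p$-linearly independent elements $x_1,y$ with $\chi_{x_1}(y)=1$, which is what we want; as $M$ was an arbitrary member of $\mathcal{A}$, the conclusion holds for every duality in the class.

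The main obstacle is not producing some $y$ with $\chi_{x_1}(y)=1$ — the self-orthogonality built into $\mathcal{A}$ already supplies $y=x_1$ — but guaranteeing that $y$ can be chosen $p$-linearly independent from $x$. This is precisely where the hypothesis $e>1$ is needed: the constraint $\chi_{x_1}(y)=1$ cuts the ambient space down to dimension $2e-1$, which exceeds $\dim \mathbb{F}_p x_1=1$ exactly when $e\geq 2$. For $e=1$ the subspace $W$ collapses to the line $\mathbb{F}_p x_1$ and no independent companion exists, which is why Lemma \ref{chi(x)(y) eq xi} can afford $e\geq 1$ while the present statement requires $e>1$.
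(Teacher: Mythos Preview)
Your argument is correct. You translate the condition $\chi_x(y)=1$ into the vanishing of the alternating bilinear form $\mathbf{n}^TK\mathbf{m}$ and then observe that the ``orthogonal'' of $x_1$ is a hyperplane of dimension $2e-1$ containing $x_1$ itself; since $2e-1\geq 3>1$ when $e>1$, an independent companion $y$ must exist.

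The paper takes a more hands-on route: it first checks whether some pair of basis vectors $x_i,x_j$ already satisfies $\chi_{x_i}(x_j)=1$; if not, it picks three distinct indices $j,k,l$, writes $\chi_{x_j}(x_l)=\xi^{a}$ and $\chi_{x_k}(x_l)=\xi^{b}$ with $a,b\neq 0$, and then explicitly forms $(-ba^{-1})x_j+x_k$, which is orthogonal to $x_l$ and manifestly $p$-independent from it. Your dimension count is cleaner in that it avoids the case split and yields the result in one stroke; the paper's version, on the other hand, is fully constructive and pinpoints an explicit pair in terms of the character values. Both arguments use the hypothesis $e>1$ in the same way---to have at least three $\mathbb{F}_p$-dimensions to work with---and both cover $p=2$ without change.
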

	\begin{proof}
		Let the generators of $\mathbb{F}_{p^{2e}},\ e>1 $ be $x_1,x_2,\cdots,x_{2e}$ such that  $\mathbb{F}_{p^{2e}}=\left\langle x_1,x_2,\cdots,x_{2e}\right\rangle $. If, for some $ x_i$ and $x_j $, $ \chi_{x_i}(x_j)=1$, for $i\neq j$, then we are done. Let $\xi$ be the primitive $p$-th root of unity. Let $ j\neq k\neq l $ and  $\chi_{x_j}{(x_l)}=\xi^a$, $\chi_{x_k}{(x_l)}=\xi^b$, for some $0<a,b<p$, then $\chi_{(-ba^{-1})x_j+x_k}(x_l)=\chi_{x_j}(x_l)^{(-ba^{-1})}\chi_{x_k}(x_l)=\xi^{a(-ba^{-1})}\xi^b=1$.  
	\end{proof}
	\begin{lemma}\label{chi x y can not be 1 for e eq 1}
		Let $\mathbb{F}_{p^{2}} $ be a finite field, then there does not exist $x,y\in \mathbb{F}_{p^2}$ such that $x$ and  $y$ are $p$-linearly independent and $\chi_x(y)=1$, for any duality $M\in \mathcal{A}.$
	\end{lemma}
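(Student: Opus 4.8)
The plan is to work directly in the generator basis and reduce the question to a single nonvanishing determinant over $\mathbb{F}_p$. Since $e=1$, we have $2e=2$, so write $\mathbb{F}_{p^2}=\left\langle x_1,x_2\right\rangle$, and let $M\in\mathcal{A}$ be an arbitrary skew-symmetric duality. By the definition of $\mathcal{A}$ the associated matrix $\log_\xi$ of the pairing is the $2\times 2$ skew-symmetric matrix $\begin{bmatrix}0 & k_{12}\\ -k_{12} & 0\end{bmatrix}$, and it must be invertible over $\mathbb{F}_p$ for $\phi$ to be an isomorphism. Its determinant is $k_{12}^2$, so invertibility is equivalent to $k_{12}\not\equiv 0 \pmod p$. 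This is the one structural fact I will lean on throughout.

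Next I would take arbitrary $x,y\in\mathbb{F}_{p^2}$ and expand them as $x=a_1x_1+a_2x_2$ and $y=b_1x_1+b_2x_2$ with $a_i,b_i\in\mathbb{F}_p$. Two such elements are $p$-linearly independent exactly when the coefficient vectors $(a_1,a_2)$ and $(b_1,b_2)$ are linearly independent over $\mathbb{F}_p$, i.e.\ when $a_1b_2-a_2b_1\not\equiv 0 \pmod p$. The core computation uses multiplicativity of characters together with the skew-symmetric rule $\chi_{x_1}(x_1)=\chi_{x_2}(x_2)=1$, $\chi_{x_1}(x_2)=\xi^{k_{12}}$, $\chi_{x_2}(x_1)=\xi^{-k_{12}}$:
\begin{align*}
\chi_x(y)=\prod_{i,j}\chi_{x_i}(x_j)^{a_ib_j}
=\xi^{k_{12}a_1b_2}\,\xi^{-k_{12}a_2b_1}
=\xi^{\,k_{12}(a_1b_2-a_2b_1)}.
\end{align*}
Thus the entire pairing collapses to the scalar $\xi^{k_{12}(a_1b_2-a_2b_1)}$, where the exponent is read modulo $p$ since $\xi$ is a primitive $p$-th root of unity.

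Finally I would conclude by the contrapositive. Because $\xi$ has order $p$, the equation $\chi_x(y)=1$ holds if and only if $k_{12}(a_1b_2-a_2b_1)\equiv 0 \pmod p$; and since $k_{12}\not\equiv 0$, this forces $a_1b_2-a_2b_1\equiv 0 \pmod p$, i.e.\ $x$ and $y$ are $p$-linearly dependent. Hence if $x,y$ are $p$-linearly independent then $\chi_x(y)=\xi^{k_{12}(a_1b_2-a_2b_1)}\neq 1$, which is exactly the claim. I expect the only genuinely delicate point to be the bookkeeping of exponents modulo $p$ and the justification that invertibility of the skew-symmetric matrix is equivalent to $k_{12}\neq 0$; everything else is a short bilinear computation. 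It is worth emphasizing that this is precisely where the hypothesis $e=1$ bites: with only the single off-diagonal parameter $k_{12}$ there is no room to form a nontrivial orthogonal combination, in sharp contrast to the construction in Lemma \ref{chi x y eq to 1}, which exploited a third generator $x_l$ available only when $e>1$.
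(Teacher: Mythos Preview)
Your argument is correct and follows essentially the same route as the paper: expand $x$ and $y$ in the generator basis $\{x_1,x_2\}$, use the skew-symmetric relations to reduce $\chi_x(y)$ to $\xi^{k_{12}(a_1b_2-a_2b_1)}$, and conclude that $\chi_x(y)=1$ forces the $2\times 2$ coefficient determinant to vanish modulo $p$. The paper's proof is terser (it simply writes $\chi_{x_1}(x_2)=\xi\neq 1$ rather than tracking the parameter $k_{12}$ separately), but the underlying computation and the key observation that $k_{12}\not\equiv 0\pmod p$ are identical.
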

	\begin{proof}
		Let the generators of $\mathbb{F}_{p^{2}}$ be $x_1$ and $x_2$ such that  $\mathbb{F}_{p^{2}}=\left\langle x_1,x_2\right\rangle $. For any duality $M\in \mathcal{A}$, $\chi_{x_1}(x_2)=\xi\neq 1$, where $\xi$ is the primitive  $p$-th root of unity. If  $  \mathlarger{\chi}_{(l_1x_1+l_2x_2)}(m_1x_1+m_2x_2)=1 $, for some $ 0<l_1,l_2,m_1,m_2<p$, then we have $l_1m_2-l_2m_1\equiv 0 \mod p$. This implies  that  $ l_1x_1+l_2x_2$ and $m_1x_1+m_2x_2 $ are $p$-linearly dependent.
	\end{proof} 
	In the next theorem, we shall provide the lower bound on the highest possible minimum distance of ACD codes concerning the class $\mathcal{A}$ over $\mathbb{F}_{p^{2e}}$.
	\begin{theorem}\label{ACDbound}
		Let $\mathbb{F}_{p^{2e}},\  p\neq 2$ be a finite field, and $s,\ n\in \mathbb{N}$. If $s\leq n$, then  $ACD_{p^{2e},M}[n,2s]\geq\lfloor\frac{n}{s}\rfloor$, for all dualities $M\in \mathcal{A}$, where $ \lfloor \cdot \rfloor $ denotes the largest integer less than or equal to $(\cdot)$.
	\end{theorem}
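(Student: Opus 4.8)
The plan is to construct, for each fixed $M\in\mathcal{A}$, an explicit $[n,p^{2s},\lfloor n/s\rfloor]_A$ ACD code, which immediately yields the claimed lower bound. Writing $\ell=\lfloor n/s\rfloor$ and $n=s\ell+r$ with $0\le r<s$, I would assemble the code as a cross product (Lemma \ref{cross}) of $s$ blocks, where $r$ of the blocks have length $\ell+1$ and the remaining $s-r$ have length $\ell$; each block will be a two-generator code of size $p^2$, so the cross product has size $(p^2)^s=p^{2s}$ and length $r(\ell+1)+(s-r)\ell=n$. Since the minimum distance of a cross product is the minimum of the block distances (the blocks having disjoint supports), and since at least one block has length exactly $\ell$ (because $r<s$), it suffices to make every block an ACD code whose minimum distance equals its own length, which is $\ge\ell$. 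Note $\ell\ge 1$ is guaranteed by the hypothesis $s\le n$.

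For a block of length $t\in\{\ell,\ell+1\}$ I would first invoke Lemma \ref{chi(x)(y) eq xi} to fix $x,y\in\mathbb{F}_{p^{2e}}$ with $\chi_x(y)=\xi\neq1$. Observe that $\chi_x(y)\neq1$ already forces $x$ and $y$ to be $p$-linearly independent: if $y=cx$ with $c\in\mathbb{F}_p$ then $\chi_x(y)=\chi_x(x)^c=1$ by the self-orthogonality characterizing $\mathcal{A}$. The block generator matrix $\mathcal{G}=\begin{bmatrix}\mathbf{g}_1\\ \mathbf{g}_2\end{bmatrix}$ will have every column drawn from $\{(x,y)^{T},(x,2y)^{T}\}$. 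Because $p\neq2$, both $y$ and $2y$ are $p$-linearly independent from $x$, so every column is a $p$-linearly independent pair; consequently every nonzero $\mathbb{F}_p$-combination $a\mathbf{g}_1+b\mathbf{g}_2$ is nonzero in each coordinate, the two rows are $\mathbb{F}_p$-independent (so the block genuinely has size $p^2$), and the block has full Hamming weight $t\ge\ell$. This is one of the two points where the hypothesis $p\neq2$ is genuinely used.

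The ACD property of each block I would verify through Theorem \ref{iff condition on ACD code}. Since $M\in\mathcal{A}$, the matrix $\log_{\xi}(\mathcal{G}\odot_M\mathcal{G}^T)$ is the $2\times2$ skew-symmetric matrix $\left[\begin{smallmatrix}0&k\\-k&0\end{smallmatrix}\right]$ with $k=\log_\xi\chi_{\mathbf{g}_1}(\mathbf{g}_2)$ (the off-diagonal entries being negatives of one another by Lemma \ref{chi-x-y-not1} applied coordinatewise), and this is invertible over $\mathbb{F}_p$ exactly when $\chi_{\mathbf{g}_1}(\mathbf{g}_2)\neq1$. Here $\chi_{\mathbf{g}_1}(\mathbf{g}_2)=\prod_k\chi_{g_{1,k}}(g_{2,k})$, where each factor equals $\chi_x(y)=\xi$ for an $(x,y)$-column and $\chi_x(2y)=\xi^{2}$ for an $(x,2y)$-column. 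I would then choose the columns according to $t\bmod p$: if $p\nmid t$, take all columns equal to $(x,y)^T$, giving product $\xi^{t}\neq1$; if $p\mid t$, replace exactly one column by $(x,2y)^T$, giving product $\xi^{t+1}=\xi\neq1$. Either way $\chi_{\mathbf{g}_1}(\mathbf{g}_2)\neq 1$ and the block is ACD.

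Assembling the $s$ blocks via Lemma \ref{cross} then produces an ACD code with respect to $M$ of length $n$, size $p^{2s}$, and minimum distance $\ell=\lfloor n/s\rfloor$, establishing $ACD_{p^{2e},M}[n,2s]\ge\lfloor n/s\rfloor$. The only real obstacle is arranging $\chi_{\mathbf{g}_1}(\mathbf{g}_2)\neq1$ for block lengths divisible by $p$ while simultaneously keeping full weight; the $(x,2y)$-swap resolves this, and it is precisely the step that requires $p\neq2$, so the hypothesis enters exactly where it is needed.
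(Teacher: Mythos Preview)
Your proof is correct and follows essentially the same strategy as the paper: a block-diagonal generator matrix whose $2$-row blocks are individually shown to be ACD via the $2\times 2$ invertibility criterion, with a column tweak in the case $p\mid t$ to force $\chi_{\mathbf g_1}(\mathbf g_2)\neq 1$. The only cosmetic differences are that the paper pads the last block with zero columns (rather than distributing the remainder $r$ as extra columns across $r$ blocks) and, in the divisible case, swaps one column to $(y,x)^T$ (yielding exponent $t-2$) instead of your $(x,2y)^T$ (yielding exponent $t+1$); both devices exploit $p\neq 2$ in the same way.
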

	\begin{proof}
		Let $C$ be an additive code generated by a matrix $\mathcal{G}$, where
		\setlength{\arraycolsep}{2pt}
		\renewcommand{\arraystretch}{0.9}
		$\mathcal{G}=\begin{bmatrix}
			B_1& & &&\\
			&B_2&&\bigzero &\\
			&&\ddots&&\\
			&\mbox{\Huge 0}&&B_{s-1}&\\
			&&&&B_s
		\end{bmatrix}_{2s\times n}$,  and $B_i$\textquotesingle s are block matrices of appropriate size. Now, by Lemma \ref{chi(x)(y) eq xi}, we have $x$ and $y\in \mathbb{F}_{p^{2e}}$ such that $\chi_{x}(y)=\xi\neq 1$, where $\xi$ is any primitive $p$-th root of unity.\\\\
		\textbf{Case 1:}	If $\lfloor\frac{n}{s}\rfloor\neq mp$, then choose $B_i=\begin{pmatrix}
			x&x&\cdots&x\\
			y & y &\cdots &y
		\end{pmatrix}_{2\times\lfloor\frac{n}{s}\rfloor}$ for $i=1,2,\cdots,s-1$, and  $B_s=\begin{pmatrix}
			x&x&\cdots&x&0&\cdots&0\\
			y & y &\cdots &y&0&\cdots&0
		\end{pmatrix}_{2\times\left( \lfloor\frac{n}{s}\rfloor+n\mod s\right)}$. Since $\left( \chi_x(y)\right)^{\lfloor\frac{n}{s}\rfloor}=\xi^{\lfloor\frac{n}{s}\rfloor}\neq 1$, therefore, by Corollary\ref{ACD2}, $C$ is an ACD code with $d=\lfloor\frac{n}{s}\rfloor.$\\\\
		\textbf{Case 2:} If $\lfloor\frac{n}{s}\rfloor= mp$, then choose $B_i=\begin{pmatrix}
			x&x&\cdots&x&y\\
			y & y &\cdots &y& x
		\end{pmatrix}_{2\times\lfloor\frac{n}{s}\rfloor}$ for $i=1,2,\cdots,s-1$, and  $B_s=\begin{pmatrix}
			x&x&\cdots&x&y&0&\cdots&0\\
			y & y &\cdots &y&x &0&\cdots&0
		\end{pmatrix}_{2\times\left( \lfloor\frac{n}{s}\rfloor+n\mod s\right)}$.  Since $\left( \chi_x(y)\right)^{\lfloor\frac{n}{s}\rfloor-1} \chi_y(x)=\left(\chi_x(y)\right)^{\lfloor\frac{n}{s}\rfloor-2}=\xi^{\lfloor\frac{n}{s}\rfloor-2}\neq 1$, therefore, by Corollary\ref{ACD2}, $C$ is an ACD code with $d=\lfloor\frac{n}{s}\rfloor.$\\ 
	\end{proof}

	\begin{theorem}\label{ACD[n,2n-2] eq 2}
		For all dualities  $M\in \mathcal{A}$, we have $ACD_{p^2,M}[n,2n-2]=2$. 
	\end{theorem}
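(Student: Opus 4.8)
The plan is to prove the two inequalities $ACD_{p^2,M}[n,2n-2]\le 2$ and $ACD_{p^2,M}[n,2n-2]\ge 2$ separately. The upper bound is immediate from the Singleton bound recalled above: any additive $[n,p^{2n-2},d]_A$ code over $\mathbb{F}_{p^2}$ satisfies $d\le n-\lceil (2n-2)/2\rceil+1=2$, and this holds a fortiori for ACD codes. The whole content therefore lies in exhibiting, for each $M\in\mathcal A$, one ACD code of length $n$ and size $p^{2n-2}$ whose minimum distance equals $2$.

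To organize the construction I would first translate the duality into a bilinear form. For $M\in\mathcal A$ write $\chi_x(y)=\xi^{B(x,y)}$, where $\xi$ is the primitive $p$-th root of unity; since $\phi$ is a homomorphism and each $\chi_x$ is a character, $B\colon \mathbb{F}_{p^2}\times\mathbb{F}_{p^2}\to\mathbb{F}_p$ is $\mathbb{F}_p$-bilinear, it is non-degenerate because $\phi$ is an isomorphism, and it is alternating ($B(x,x)=0$) precisely because every element is self-orthogonal. Extending coordinatewise to $\mathcal B(\mathbf u,\mathbf v)=\sum_{i=1}^n B(u_i,v_i)$ gives a non-degenerate alternating form on the $2n$-dimensional $\mathbb{F}_p$-space $\mathbb{F}_{p^2}^n$ with $\chi_{\mathbf u}(\mathbf v)=\xi^{\mathcal B(\mathbf u,\mathbf v)}$, so that $C^M=C^{\perp_{\mathcal B}}$ for every additive code $C$. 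In this symplectic language (which is exactly what the Gram matrix $\log_\xi(\mathcal G\odot_M\mathcal G^T)$ of Theorem \ref{iff condition on ACD code} records, its $(i,j)$ entry being $B(\mathcal G_i,\mathcal G_j)$), $C$ is ACD iff $\mathcal B$ restricted to $C$ is non-degenerate, equivalently iff $\mathcal B$ restricted to $C^{\perp_{\mathcal B}}$ is non-degenerate.

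For the construction I would fix $\mathbf a=(a_1,\dots,a_n)$ with all $a_i\in\mathbb{F}_{p^2}^{*}$, set $P=\mathbb{F}_{p^2}\mathbf a=\{t\mathbf a: t\in\mathbb{F}_{p^2}\}$ (a $2$-dimensional $\mathbb{F}_p$-subspace) and take $C=P^{\perp_{\mathcal B}}$; then $\dim_{\mathbb{F}_p}C=2n-2$, i.e. $|C|=p^{2n-2}$. A weight-one word $c\,\mathbf e_i$ (with $\mathbf e_i$ the $i$-th unit vector) lies in $C$ iff $B(c,\pi_i(P))=0$, where $\pi_i$ is the $i$-th projection; since $a_i\ne 0$ we have $\pi_i(P)=\mathbb{F}_{p^2}$, and non-degeneracy of $B$ forces $c=0$. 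Hence $C$ has no weight-one codeword, so $d(C)\ge 2$; combined with the Singleton bound $d\le 2$ and $C\ne\{0\}$ this yields $d(C)=2$. It remains to choose $\mathbf a$ so that $\mathcal B|_P$ is non-degenerate, which by the previous paragraph is exactly the ACD condition.

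The decisive computation, and the step I expect to be the main obstacle, is evaluating $\mathcal B|_P$. For the (unique up to scalar) alternating form on the $2$-dimensional space $\mathbb{F}_{p^2}$ one has $B(ax,ay)=\det(\mu_a)\,B(x,y)=N(a)\,B(x,y)$, where $\mu_a$ is multiplication by $a$ and $N\colon\mathbb{F}_{p^2}\to\mathbb{F}_p$ is the field norm. Therefore $\mathcal B(t\mathbf a,s\mathbf a)=\big(\sum_{i=1}^n N(a_i)\big)B(t,s)$, so $\mathcal B|_P$ is non-degenerate iff $\sum_{i=1}^n N(a_i)\ne 0$ in $\mathbb{F}_p$. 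Since $N$ is surjective onto $\mathbb{F}_p^{*}$ and $|\mathbb{F}_p^{*}|\ge 2$ for $p\ne 2$, I can pick, for instance, $a_2=\dots=a_n=1$ and $a_1$ with $N(a_1)\notin\{0,-(n-1)\}$, guaranteeing $\sum_i N(a_i)\ne 0$. With this choice $C$ is a distance-$2$ ACD code of size $p^{2n-2}$, giving $ACD_{p^2,M}[n,2n-2]\ge 2$ and hence equality. Equivalently one may feed a symplectic basis of this $C$ (its hyperbolic pairs) into Corollary \ref{ACD2}; the content is the same norm identity. The only delicate point is the existence of a valid $\mathbf a$, which is where $p\ne 2$ enters, since for $p=2$ every norm equals $1$ and the sum collapses to $n\bmod 2$.
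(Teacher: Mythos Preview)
Your argument is correct and follows the same skeleton as the paper's proof: both construct a two-dimensional ACD code $P$ (the dual of the desired code) whose projection to every coordinate is all of $\mathbb{F}_{p^2}$, then take $C=P^M$ and observe that surjectivity of the projections kills weight-one words while the Singleton bound caps $d$ at $2$.

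Where you diverge is in the verification that $P$ is ACD. The paper picks $P=\langle(x_1,\dots,x_1),(x_2,\dots,x_2)\rangle$ and checks directly that $\chi_{G_1}(G_2)=\xi^n\ne 1$ when $p\nmid n$; when $p\mid n$ it swaps the last coordinates and recomputes. You instead restrict to $\mathbb{F}_{p^2}$-lines $P=\mathbb{F}_{p^2}\mathbf a$ and invoke the identity $B(ax,ay)=N(a)B(x,y)$ to reduce the ACD condition to the single scalar $\sum_i N(a_i)\in\mathbb{F}_p^\ast$, which you then arrange by choosing $a_1$ appropriately. This is a cleaner and more uniform treatment: the norm identity absorbs the case split on $n\bmod p$, and the choice of $\mathbf a$ is transparent. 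The trade-off is that your $P$ is always an $\mathbb{F}_{p^2}$-line, whereas the paper's second construction $\langle(x_1,\dots,x_1,x_2),(x_2,\dots,x_2,x_1)\rangle$ is not; but since an $\mathbb{F}_{p^2}$-line always suffices (as you show), nothing is lost. Your closing remark that $p=2$ forces $\sum N(a_i)=n\bmod 2$ is exactly right and matches the paper's separate handling of $\mathbb{F}_4$ in the theorem immediately following this one.
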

	\begin{proof}
		
		From the Singleton bound, we have $d\leq 2$. Let $\mathbb{F}_{p^2}=\left\langle x_1,x_2 \right\rangle$. If $n\neq mp$ for all  $m\in\mathbb{N}$, then take  $C^M=\langle(x_1,x_1,x_1,\dots,x_1),(x_2,x_2,x_2,\dots, x_2)\rangle$, otherwise   choose $C^M=\langle(x_1,x_1,x_1,\dots,x_1,x_2),(x_2,x_2,x_2,\dots,x_2, x_1)\rangle$. In both  cases, $C^M$ is an ACD code with distance $n$ for $M\in \mathcal{A}$ .  From the Theorem \ref{C-M-M- equl to C}, we have $(C^M)^M=C$.  Thus, it follows that  $C$ is an ACD code with parameter $[n,p^{2n-2}]$. If $C$ contains a vector of weight $1$ say $\bm{y}=(0,\dots,0,y_i,0,\dots,0)$, then $\chi_{y_i}(x_1)=1$ and $\chi_{y_i}(x_2)=1$. This is not possible because a non-zero element can not be orthogonal to the whole space. Hence, the result holds.
	\end{proof}

	\begin{theorem}\label{ACD bound for p eq 2}
		Let $q=2^{2e}, \ e> 1$, and $s, n\in \mathbb{N}$. If $s\leq n$ then $ACD_{q,M}[n,2s]\geq\lfloor\frac{n}{s}\rfloor$, for all dualities $M\in \mathcal{A}.$
	\end{theorem}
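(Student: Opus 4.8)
The plan is to adapt the block-diagonal construction of Theorem \ref{ACDbound} to characteristic two, reusing the same skeleton but with a different repair in the even case. Recall first that for $p=2$ the class $\mathcal{A}$ consists of symmetric dualities with $\chi_x(x)=1$ for all $x$ (see the Note following the definition of skew-symmetric dualities), and that the only primitive $p$-th root of unity is $\xi=-1$. I would take $C$ to be generated by a $2s\times n$ block-diagonal matrix $\mathcal{G}=\mathrm{diag}(B_1,\dots,B_s)$ in which each $B_i$ is $2\times\lfloor n/s\rfloor$ for $i<s$, and $B_s$ is $2\times(\lfloor n/s\rfloor+n\bmod s)$ with its last $n\bmod s$ columns zero. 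Because the blocks have disjoint supports, $\chi_{\mathcal{G}_i}(\mathcal{G}_j)=1$ whenever rows $i,j$ lie in different blocks, so $\mathcal{G}$ automatically fits the hypothesis of Corollary \ref{ACD2} as soon as each $B_i$ has its two rows satisfying $\chi_{(B_i)_1}((B_i)_2)=-1$.

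The minimum distance and the ACD property then reduce to a single-block computation. Using Lemma \ref{chi(x)(y) eq xi} I would fix $x,y\in\mathbb{F}_{2^{2e}}$ with $\chi_x(y)=-1$; note $x,y$ are necessarily $\mathbb{F}_2$-linearly independent, since $y=x$ would force $\chi_x(y)=\chi_x(x)=1$. If $\lfloor n/s\rfloor$ is odd (the analogue of Case~1, i.e. $\lfloor n/s\rfloor\neq mp$), take every column of each $B_i$ equal to $(x,y)^T$. Then the block inner product is $\chi_x(y)^{\lfloor n/s\rfloor}=(-1)^{\lfloor n/s\rfloor}=-1$, and the three nonzero $\mathbb{F}_2$-combinations of the two rows, namely $(x,\dots,x)$, $(y,\dots,y)$, and $(x+y,\dots,x+y)$, each have full weight $\lfloor n/s\rfloor$ since $x,y,x+y$ are all nonzero. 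Corollary \ref{ACD2} then certifies that $C$ is ACD with $d=\lfloor n/s\rfloor$.

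The main obstacle is the even case $\lfloor n/s\rfloor=mp$, where the uniform block gives inner product $(-1)^{\lfloor n/s\rfloor}=+1$. Here the repair used in Case~2 of Theorem \ref{ACDbound} fails: that argument swaps the last column to produce a factor $\chi_y(x)=\chi_x(y)^{-1}$, but in characteristic two the duality is symmetric, so $\chi_y(x)=\chi_x(y)$ and the product is unchanged. Instead I would invoke Lemma \ref{chi x y eq to 1}, available precisely because $e>1$, to obtain $\mathbb{F}_2$-linearly independent $u,v$ with $\chi_u(v)=1$, and set the last column of each $B_i$ to $(u,v)^T$ while keeping the remaining $\lfloor n/s\rfloor-1$ columns equal to $(x,y)^T$. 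The block inner product becomes $\chi_x(y)^{\lfloor n/s\rfloor-1}\chi_u(v)=(-1)^{\lfloor n/s\rfloor-1}=-1$, as required, while the three combinations $(x,\dots,x,u)$, $(y,\dots,y,v)$, and $(x+y,\dots,x+y,u+v)$ still have full weight because $u,v,u+v$ are all nonzero by independence. Applying Corollary \ref{ACD2} again yields an ACD code of distance $\lfloor n/s\rfloor$.

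In both cases the $2s$ rows are $\mathbb{F}_2$-linearly independent (disjoint supports, two independent rows per block), so $|C|=2^{2s}=p^{2s}$, and the zero padding in $B_s$ leaves the distance equal to $\lfloor n/s\rfloor$; hence $ACD_{q,M}[n,2s]\ge\lfloor n/s\rfloor$ for every $M\in\mathcal{A}$. It is worth flagging that the hypothesis $e>1$ is exactly what makes Lemma \ref{chi x y eq to 1} usable; for $e=1$, Lemma \ref{chi x y can not be 1 for e eq 1} shows that no such column $(u,v)$ exists, so the even case genuinely requires $e>1$.
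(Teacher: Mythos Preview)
Your proof is correct and follows essentially the same approach as the paper: the same block-diagonal skeleton, the same odd/even split, and the same repair in the even case via Lemma~\ref{chi x y eq to 1} (the paper writes $a,b$ for your $u,v$). Your write-up is in fact more careful than the paper's, since you explicitly verify the minimum distance by checking all three nonzero $\mathbb{F}_2$-combinations and spell out why the swap trick from Theorem~\ref{ACDbound} fails when $p=2$.
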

	\begin{proof}
		Let $C$ be an additive code generated by a matrix $\mathcal{G}$, where
		\setlength{\arraycolsep}{2pt}
		\renewcommand{\arraystretch}{0.9}
		$\mathcal{G}=\begin{bmatrix}
			B_1& & &&\\
			&B_2&&\bigzero &\\
			&&\ddots&&\\
			&\mbox{\Huge 0}&&B_{s-1}&\\
			&&&&B_s
		\end{bmatrix}_{2s\times n}$, $B_i$\textquotesingle s are block matrices of appropriate size. Now, by Lemma \ref{chi(x)(y) eq xi}, we have $x$ and $y\in \mathbb{F}_{2^{2e}}$ such that $\chi_{x}(y)=-1$, and by Lemma \ref{chi x y eq to 1}, we have $a$ and $b\in \mathbb{F}_{2^{2e}}$ such that $\chi_{a}(b)=1$.\\
		\textbf{Case 1:} If $\lfloor\frac{n}{s}\rfloor$ is an odd integer, then the  construction of an ACD code is the same as Case 1 of Theorem \ref{ACDbound}.  \\
		\textbf{Case 2:} If $\lfloor\frac{n}{s}\rfloor$ is an even integer, then choose $B_i=\begin{pmatrix}
			x&x&\cdots&x&a\\
			y & y &\cdots &y&b
		\end{pmatrix}_{2\times \lfloor\frac{n}{s}\rfloor}$ for $i=1,2,\cdots,s-1$, and  $B_s=\begin{pmatrix}
			x&x&\cdots&x&a&0&\cdots&0\\
			y & y &\cdots &y&b&0&\cdots&0
		\end{pmatrix}_{2\times\left( \lfloor\frac{n}{s}\rfloor+n\mod s\right)}$. Since $\chi_x(y)^{\lfloor\frac{n}{s}\rfloor-1}\chi_a(b)=-1$, therefore,  by Corollary \ref{ACD2}, $C$ is an  ACD code with $d=\lfloor\frac{n}{s}\rfloor.$ 
	\end{proof}
	\begin{corollary}\label{ACDBoundfor2}
		 For all dualities $M\in \mathcal{A}$ and $p\neq 2$, we have $ACD_{p^{2e},M}[n,2]=n$. Moreover, if $e>1$ then $ACD_{2^{2e},M}[n,2]=n$.
	\end{corollary}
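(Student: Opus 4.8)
\textbf{Proof plan for Corollary \ref{ACDBoundfor2}.}

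The plan is to treat this as a direct specialization of the two bounding theorems just proved, combined with the Singleton bound as an upper cap. First I would fix $s=1$ in Theorem \ref{ACDbound}, which immediately gives $ACD_{p^{2e},M}[n,2]\geq \lfloor n/1\rfloor = n$ for all $M\in\mathcal{A}$ when $p\neq 2$. For the case $q=2^{2e}$ with $e>1$, setting $s=1$ in Theorem \ref{ACD bound for p eq 2} yields the matching lower bound $ACD_{2^{2e},M}[n,2]\geq n$; here the restriction $e>1$ is inherited from that theorem, since the construction needs the auxiliary pair $a,b$ with $\chi_a(b)=1$ guaranteed by Lemma \ref{chi x y eq to 1}, which only applies when $e>1$.

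The second half is to supply the reverse inequality so that the bound is an equality rather than merely a lower bound. Here I would invoke the Singleton bound stated just before Lemma \ref{chi(x)(y) eq xi}: for an $[n,p^k,d]_A$ code over $\mathbb{F}_{p^{2e}}$ we have $d\leq n-\lceil k/(2e)\rceil+1$. With $k=2$ and $e\geq 1$ we get $\lceil 2/(2e)\rceil=1$, hence $d\leq n$. Combining $d\leq n$ with the lower bounds from the previous paragraph forces $ACD_{p^{2e},M}[n,2]=n$ in both the $p\neq 2$ and the $2^{2e}$ (with $e>1$) regimes.

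The only point requiring a little care, and the step I would flag as the main (mild) obstacle, is checking that the generator matrix produced by the $s=1$ specialization actually realizes a code of full distance $d=n$ rather than something smaller, and that its cardinality is exactly $p^2$. With $s=1$ the matrix $\mathcal{G}$ is the single block $B_1=\left(\begin{smallmatrix} x&x&\cdots&x\\ y&y&\cdots&y\end{smallmatrix}\right)$ (or its slightly perturbed last-column variant in the divisibility case), so every nonzero $\mathbb{F}_p$-combination $n_1\mathcal{G}_1+n_2\mathcal{G}_2$ has all coordinates equal to the fixed nonzero field element $n_1 x+n_2 y$, giving weight exactly $n$; and since $x,y$ are $p$-linearly independent the two rows generate $p^2$ distinct codewords. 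Thus the construction attains distance $n$ with size $p^2$, matching the Singleton cap, and the claimed equalities hold.
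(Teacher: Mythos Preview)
Your proposal is correct and follows the same approach as the paper, which simply cites Theorems \ref{ACDbound} and \ref{ACD bound for p eq 2} with $s=1$. You are just more explicit than the paper in spelling out the Singleton-bound upper cap and in verifying that the $s=1$ construction genuinely has distance $n$ and cardinality $p^2$ (using that $\chi_x(y)\neq 1$ forces $x,y$ to be $p$-linearly independent since $\chi_x(x)=1$ for $M\in\mathcal{A}$), all of which the paper leaves implicit.
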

	\begin{proof}
		The proof follows from  Theorems \ref{ACDbound} and \ref{ACD bound for p eq 2} for $ s=1.$
	\end{proof}

	\begin{theorem}
	 For all dualities $M\in\mathcal{A}$, we have	$ACD_{4,M}[n,2]$=$\begin{cases}
			n & \text{if $ n $ is an odd integer},\\
			n-1 & \text{if $ n $ is an even integer}
		\end{cases}$.
	\end{theorem}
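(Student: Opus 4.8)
The plan is to pin down both the upper bound (via the Singleton bound) and matching constructions, exploiting the extremely rigid structure that a skew-symmetric duality imposes on $\mathbb{F}_4=\mathbb{F}_{2^2}$. First I would record the relevant bound: here $p=2$, the field is $\mathbb{F}_{p^e}$ with $e=2$, and $k=2$, so the Singleton bound gives $d\le n-\lceil 2/2\rceil+1=n$. A code counted by $ACD_{4,M}[n,2]$ is generated by two $\mathbb{F}_2$-independent rows $\mathcal{G}_1,\mathcal{G}_2\in\mathbb{F}_4^n$, and its three nonzero codewords are $\mathcal{G}_1$, $\mathcal{G}_2$, and $\mathcal{G}_1+\mathcal{G}_2$. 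Applying Theorem \ref{iff condition on ACD code} to the $2\times 2$ matrix $\log_{\xi}(\mathcal{G}\odot_M\mathcal{G}^T)$, which has zero diagonal and is skew-symmetric because $M\in\mathcal{A}$, I would note that its determinant over $\mathbb{F}_2$ is the off-diagonal entry, so $C$ is ACD if and only if $\chi_{\mathcal{G}_1}(\mathcal{G}_2)=-1$.

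The structural lemma I would establish next is that for any $M\in\mathcal{A}$ on $\mathbb{F}_4$ one has $\chi_a(b)=-1$ for every pair of distinct nonzero $a,b$. Writing the three nonzero elements as $a,b,c$ with $a+b+c=0$, the relation $\chi_a(b)=\chi_a(a+b)=\chi_a(c)$ (using $\chi_a(a)=1$) together with the fact that $\chi_a$ is a nontrivial character (so $\chi_a\not\equiv 1$) forces $\chi_a(b)=\chi_a(c)=-1$. Combining this with $\chi_{\mathcal{G}_1}(\mathcal{G}_2)=\prod_{i=1}^n\chi_{(\mathcal{G}_1)_i}((\mathcal{G}_2)_i)$, the ACD condition translates into a parity condition: the number $t$ of coordinates where $\mathcal{G}_1$ and $\mathcal{G}_2$ are simultaneously nonzero and distinct must be odd.

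For odd $n$ I would exhibit $\mathcal{G}_1=(1,\dots,1)$ and $\mathcal{G}_2=(\omega,\dots,\omega)$; then every coordinate is a distinct nonzero pair, so $t=n$ is odd and $C$ is ACD, while all three codewords have full weight $n$, meeting the Singleton bound, hence $ACD_{4,M}[n,2]=n$. The main obstacle is the sharp upper bound in the even case: I would argue that $d=n$ would force all three codewords to have weight $n$, and in characteristic $2$ the condition $(\mathcal{G}_1)_i+(\mathcal{G}_2)_i\neq 0$ together with $(\mathcal{G}_1)_i,(\mathcal{G}_2)_i\neq 0$ means each coordinate is a distinct nonzero pair; thus $t=n$ is even and $C$ cannot be ACD, giving $d\le n-1$.

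Finally, to realize $d=n-1$ when $n$ is even I would modify a single coordinate, taking $\mathcal{G}_1=(1,\dots,1)$ and $\mathcal{G}_2=(\omega,\dots,\omega,1)$. Then the last coordinate contributes an equal nonzero pair rather than a distinct one, so $t=n-1$ is odd and $C$ is ACD; moreover $wt(\mathcal{G}_1)=wt(\mathcal{G}_2)=n$ while $\mathcal{G}_1+\mathcal{G}_2=(\omega^2,\dots,\omega^2,0)$ has weight $n-1$, so $d=n-1$. Together with the upper bound this yields $ACD_{4,M}[n,2]=n-1$ for even $n$, completing both cases for every $M\in\mathcal{A}$.
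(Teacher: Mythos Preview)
Your proposal is correct and follows essentially the same approach as the paper: both identify the unique skew-symmetric duality on $\mathbb{F}_4$, observe that $\chi_a(b)=-1$ for all distinct nonzero $a,b$, and use this to reduce the ACD condition to the parity of the number of coordinates where the two generators are distinct and nonzero, yielding the obstruction when $n$ is even. The only cosmetic difference is your even-$n$ construction $(1,\dots,1)$ and $(\omega,\dots,\omega,1)$ versus the paper's $(x,\dots,x,0)$ and $(y,\dots,y,0)$; both realize $d=n-1$, and your write-up is in fact slightly more explicit about why $d=n$ forces every coordinate pair to be distinct and nonzero.
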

	\begin{proof}
		If $n$ is an odd integer then the proof is similar to the Case 1 of Theorem \ref{ACDbound} for $s=1$. Let $n$ be an even integer. If possible, let  $ACD_{4,M}[n,2]=n,$ for all dualities $M\in\mathcal{A}$. Then there exists a code $C=\left\langle (x_1,x_2,\cdots,x_n),(y_1,y_2,\cdots,y_n) \right\rangle$ such that $x_i\neq 0,\ y_i\neq 0$ and $x_i\neq y_i$ for all $i$. We observe that there is exactly  one duality over $ \mathbb{F}_4,$ such that  $\chi_x(x)=1$, for all $x\in \mathbb{F}_4$. Note that for such duality we have $\chi_x(y)=-1$, for all $ x\neq y\in \mathbb{F}_4$. Now we have $\prod_{i=1}^{n}\chi_{x_i}(y_i)=(-1)^n=1$, this implies that $C$ can not be an ACD code. Hence, $ACD_{4,M}[n,2]<n$. Now, it is easy to see that  $C=\left\langle (x,x,\cdots,x,0),(y,y,\cdots,y,0)\right\rangle $ is an ACD code for $x\neq y$. Hence, $ACD_{4,M}[n,2]=n-1$.
	\end{proof}
	We conclude this section by summarising the above results in Table \ref{Table 1} for skew-symmetric dualities over $\mathbb{F}_{3^2}$.
	\begin{table*}[h]
		\caption{ACD bounds  for skew-symmetric dualiies over $\mathbb{F}_{3^2}$}
		\begin{center}
			\begin{tabular}{|c|c|c|c|}
				\hline
				& $ M_1 $ &  $ M_2 $& Reason \\
				\hline
				$ ACD_{9,M}[1,2] $	& $ 1 $ & $ 1 $  & By Corollary \ref{ACDBoundfor2} \\
				\hline
				$ ACD_{9,M}[2,2] $	& 2 & 2 & By Corollary \ref{ACDBoundfor2} \\
				\hline
				$ ACD_{9,M}[2,4] $	& 1 & 1 &  The code with these parameters will span the entire space $ \mathbb{F}_{3^2}^2 $. \\
				\hline
				$ ACD_{9,M}[3,2] $	& 3 & 3 & By Corollary \ref{ACDBoundfor2} \\
				\hline
				$ ACD_{9,M}[3,4] $	& 2 & 2 &\makecell*{ By Theorem \ref{ACD[n,2n-2] eq 2}, we have a code generated by $\mathcal{G}= \begin{pmatrix}
					1&1&0\\
					\omega&\omega&0\\
					1&0&1\\
					\omega&0&\omega
				\end{pmatrix} $ \\ is an MDS ACD code with $ d=2$.}\\
				\hline
				$ ACD_{9,M}[3,6] $	& 1 &  1&  The code with these parameters will span the entire space $ \mathbb{F}_{3^2}^3$. \\
				\hline
				$ ACD_{9,M}[4,2] $	& 4 &  4& By Corollary \ref{ACDBoundfor2} \\
				\hline
				$ ACD_{9,M}[4,4] $	&  3 & 3 &\makecell*{ The code generated by $\mathcal{G}= \begin{pmatrix}
					1&0&1&\omega+1\\
					\omega&0&\omega&\omega\\
					0&1&1&\omega\\
					0&\omega&\omega&1
				\end{pmatrix} $\\ is an MDS ACD code with $ d=3 $.}\\
				\hline
				$ ACD_{9,M}[4,6] $	& 2  &  2 & \makecell*{By Theorem \ref{ACD[n,2n-2] eq 2}, we have a code generated by $\mathcal{G}=\begin{pmatrix}
					1&1&1&0\\
					\omega&\omega&\omega&0\\
					1&1&0&1\\
					\omega&\omega&0&\omega\\
					1&0&1&1\\
					\omega&0&\omega&\omega
				\end{pmatrix} $ \\ is an MDS ACD code with $ d=2$.}  \\
				\hline
				$ ACD_{9,M}[4,8] $	& 1 &  1& The code with these parameters will span the entire space $ \mathbb{F}_{3^2}^4$. \\
				\hline
			\end{tabular} 
		\end{center}
		\label{Table 1}
	\end{table*}

	\section{Non-symmetric dualities over $\mathbb{F}_4$}\label{non-symmetric dualities} 
	In \cite{ACDgroupcharacter}, authors have given the bounds on the highest  minimum distance of ACD codes for symmetric dualities over $\mathbb{F}_4$. We find some new quaternary ACD codes over non-symmetric dualities  with better parameter  than the symmetric ones. This motivates us to study ACD codes over non-symmetric dualities. There are exactly two non-symmetric dualities, $ D_1 $ and $ D_2 $, over $\mathbb{F}_4= \left\langle 1,\nu \right\rangle $.
			\begin{center}
			\begin{tabular}{|c|c|c|c|c|}
				\hline
				$ D_1 $& 0 &1  &$ \nu $  &$ \nu+1 $  \\
				\hline
				0&1  & 1 & 1 & 1 \\
				\hline
				1& 1 & -1 & -1 & 1 \\
				\hline
				$ \nu$&  1 & 1 &-1  &  -1 \\
				\hline
				$ \nu+1 $& 1 &  -1&  1& -1 \\
				\hline
			\end{tabular}\hspace{0.5cm}
			\begin{tabular}{|c|c|c|c|c|}
				\hline
				$ D_2 $& 0 &1  &$ \nu  $ &$ \nu+1 $  \\
				\hline
				0& 1 & 1 &  1& 1 \\
				\hline
				1&1  & -1 & 1 & -1 \\
				\hline
				$ \nu $& 1 & -1&-1  & 1 \\
				\hline
				$\nu+1 $& 1 & 1 & -1 & -1 \\
				\hline
			\end{tabular}
		\end{center}
	
		\begin{theorem}\label{ACD_4[n,1]=2 for D1 and D2}
		If $M\in \{D_1,D_2\}$, then $ACD_{4,M}[n,1]$=$\begin{cases}
			n & \text{if $ n $ is an odd integer},\\
			n-1 & \text{if $ n $ is an even integer}
		\end{cases}$.
	\end{theorem}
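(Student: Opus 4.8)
The plan is to exploit the fact that an $[n,2^1,d]_A$ code has cardinality $2$, and is therefore generated over $\mathbb{F}_2$ by a single nonzero vector $\mathbf{c}=(c_1,\dots,c_n)\in\mathbb{F}_4^n$; thus $C=\{\mathbf{0},\mathbf{c}\}$, and its minimum distance equals the Hamming weight $wt(\mathbf{c})$. The problem then reduces to finding the largest weight of a vector $\mathbf{c}$ for which $C=\{\mathbf{0},\mathbf{c}\}$ is an ACD code with respect to $M\in\{D_1,D_2\}$.

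First I would reduce the ACD condition to a single scalar equation. Since the only nonzero codeword is $\mathbf{c}$, and $\mathbf{c}\in C^M$ exactly when $\chi_{\mathbf{c}}(\mathbf{c})=1$, we have $C\cap C^M=\{\mathbf{0}\}$ if and only if $\chi_{\mathbf{c}}(\mathbf{c})\neq1$. This is simply the $m=1$ instance of Theorem \ref{iff condition on ACD code}: here $\log_\xi(\mathcal{G}\odot_M\mathcal{G}^T)$ is the $1\times1$ matrix $[k_{11}]$, which is invertible over $\mathbb{F}_2$ precisely when $\chi_{\mathbf{c}}(\mathbf{c})=(-1)^{k_{11}}\neq1$.

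Next I would read off from the character tables of $D_1$ and $D_2$ the key fact that every nonzero element of $\mathbb{F}_4$ is non-self-orthogonal, namely $\chi_x(x)=-1$ for each $x\neq0$, while $\chi_0(0)=1$. Because $\chi_{\mathbf{c}}(\mathbf{c})=\prod_{i=1}^{n}\chi_{c_i}(c_i)$, every nonzero coordinate contributes a factor $-1$ and every zero coordinate a factor $1$, so $\chi_{\mathbf{c}}(\mathbf{c})=(-1)^{wt(\mathbf{c})}$. Hence $C$ is ACD if and only if $wt(\mathbf{c})$ is odd.

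Finally I would optimize over odd weights. The largest odd integer not exceeding $n$ is $n$ when $n$ is odd, realized by $\mathbf{c}=(1,1,\dots,1)$, and $n-1$ when $n$ is even, realized by $\mathbf{c}=(1,\dots,1,0)$; moreover no even-weight vector gives an ACD code, so a distance of $n$ is unattainable in the even case. This yields the stated values. The argument is otherwise routine: the one point to check carefully is that the identity $\chi_x(x)=-1$ for all nonzero $x$ holds uniformly for both $D_1$ and $D_2$, since it is exactly this fact that converts the ACD condition into the parity-of-weight criterion and thereby drives the whole computation.
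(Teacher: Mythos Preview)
Your proof is correct and follows essentially the same approach as the paper: both arguments rest on the observation that $\chi_x(x)=-1$ for every nonzero $x\in\mathbb{F}_4$ under $D_1$ and $D_2$, so that $\chi_{\mathbf{c}}(\mathbf{c})=(-1)^{wt(\mathbf{c})}$ and the ACD condition becomes the parity condition on the weight. Your write-up is in fact slightly more explicit than the paper's in spelling out why no weight-$n$ generator can work when $n$ is even, but the underlying idea is identical.
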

	\begin{proof}
		We observe that $\chi_x(x)=-1,\ \forall x\in \mathbb{F}^*_4$ for non-symmetric dualities $D_1$ and $D_2$.  Let $\textbf{u}=(u_1,u_2,\dots,u_n)\in\mathbb{F}^n_4$ with $u_i\neq 0$, for all $1\leq i\leq n$ then $\chi_{\bm{u}}(\bm{u})=\prod_{i=1}^{n}\chi_{u_i}(u_i)=(-1)^n$. If $n$ is an odd integer, then the vector $\textbf{u}$ generates an ACD codes with distance $n$.  If  $n$ is an even integer, then  any vector of weight $n$ can not generate an ACD code. Hence, $\bm{u}=(u_1,u_2,\dots,u_{n-1},0)$ generates an ACD code when $n$ is even. This completes the proof. 
	\end{proof}
	\begin{theorem}\label{ACD_4[n,2]=2 for D1 and D2}
		If $M\in \{D_1,D_2\}$, then $ACD_{4,M}[n,2]=n$.
	\end{theorem}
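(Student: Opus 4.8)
The plan is to pair the Singleton bound with an explicit construction. For the upper bound, the Singleton bound for an additive $[n,p^k,d]_A$ code over $\mathbb{F}_4$ (here $p^e=4$, $k=2$) gives $d\le n-\lceil 2/2\rceil+1=n$, so $ACD_{4,M}[n,2]\le n$. It then remains to produce, for every $n$ and each $M\in\{D_1,D_2\}$, a cardinality-$4$ ACD code of length $n$ with $d=n$.

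Such a code is the $\mathbb{F}_2$-span of two rows $\mathcal{G}_1=(u_1,\dots,u_n)$ and $\mathcal{G}_2=(v_1,\dots,v_n)$, with nonzero codewords $\mathcal{G}_1,\mathcal{G}_2,\mathcal{G}_1+\mathcal{G}_2$. I would first observe that $d=n$ is equivalent to all three having full support, i.e. for every coordinate $l$ the entries $u_l,v_l$ are nonzero and distinct (in characteristic $2$, $u_l+v_l=0\iff u_l=v_l$). Then, applying Theorem \ref{iff condition on ACD code} with $\xi=-1$, the code is ACD precisely when the $2\times 2$ integer matrix $\log_{-1}(\mathcal{G}\odot_M\mathcal{G}^T)=(k_{ij})$ is invertible over $\mathbb{F}_2$, that is, $k_{11}k_{22}+k_{12}k_{21}\equiv 1\pmod 2$.

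Since both $D_1$ and $D_2$ satisfy $\chi_x(x)=-1$ for all nonzero $x$ (as already used in Theorem \ref{ACD_4[n,1]=2 for D1 and D2}), full support forces $k_{11}\equiv k_{22}\equiv n\pmod 2$, so the determinant condition becomes $n+k_{12}k_{21}\equiv 1\pmod 2$. The argument then splits on the parity of $n$. For odd $n$ it is enough to make one off-diagonal entry even, which the constant generators $\mathcal{G}_1=(1,\dots,1)$, $\mathcal{G}_2=(\nu,\dots,\nu)$ achieve: reading the tables gives $k_{21}=0$ for $D_1$ and $k_{12}=0$ for $D_2$, so the matrix is triangular with unit diagonal. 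For even $n$ one instead needs $k_{12}$ and $k_{21}$ both odd; this is impossible with uniform coordinates (the constant choice yields the zero matrix), so I would deliberately mix coordinate types. Taking one coordinate that contributes $(1,0)$ to $(k_{12},k_{21})$ and the remaining $n-1$ contributing $(0,1)$ gives $k_{12}=1$ and $k_{21}=n-1$, both odd; explicitly, for $D_1$ set $\mathcal{G}_1=(1,\dots,1)$ and $\mathcal{G}_2=(\nu,\nu+1,\dots,\nu+1)$ (and the swap $\mathcal{G}_2=(\nu+1,\nu,\dots,\nu)$ for $D_2$), for which all three codewords have weight $n$ and the resulting matrix is $\begin{pmatrix}0&1\\1&0\end{pmatrix}$, invertible over $\mathbb{F}_2$.

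The routine portion is tabulating, from the character tables of $D_1$ and $D_2$, the $(k_{12},k_{21})$-contribution of each admissible pair $(u_l,v_l)$ of distinct nonzero entries. The one real obstacle is the even-$n$ case: here the diagonal contributions vanish modulo $2$, so invertibility rests entirely on forcing both off-diagonal sums to be odd, which is exactly where the non-symmetry of $D_1,D_2$ (allowing $\chi_{\mathcal{G}_1}(\mathcal{G}_2)\ne\chi_{\mathcal{G}_2}(\mathcal{G}_1)$) and the value $\chi_x(x)=-1$ are used, and is the feature that makes $d=n$ attainable here in contrast to the skew-symmetric case.
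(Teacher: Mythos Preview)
Your argument is correct and follows the same overall shape as the paper's: cite the Singleton bound for the upper bound, then exhibit an explicit $[n,2^2,n]$ ACD code, splitting on the parity of $n$ and using the constant generators $(1,\dots,1),(\nu,\dots,\nu)$ when $n$ is odd. The only substantive difference is in the even case: the paper uses the single code $\langle (1,\dots,1,\nu),(\nu,\dots,\nu,1)\rangle$, which works simultaneously for $D_1$ and $D_2$, whereas you give a duality-dependent construction (changing one coordinate of $\mathcal{G}_2$ to $\nu+1$ for $D_1$, and the mirror choice for $D_2$). Both constructions yield $\log_{-1}(\mathcal{G}\odot_M\mathcal{G}^T)\equiv\begin{pmatrix}0&1\\1&0\end{pmatrix}\pmod 2$, so the verification is identical; the paper's choice is slightly tidier in that a single code serves both dualities, while your write-up is more explicit in invoking Theorem~\ref{iff condition on ACD code} where the paper simply says ``clearly''.
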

	\begin{proof}
		If $n$ is an odd integer then choose an additive code $C=\left\langle (1,1,\dots,1),(\nu,\nu,\dots,\nu) \right\rangle $. Otherwise choose $C=\left\langle (1,1,\dots,1,\nu),(\nu,\nu,\dots,\nu,1)\right\rangle $.  Clearly, in the both cases, $C$ is an ACD code with $d=n$. 
	\end{proof}
	\begin{theorem}\label{ACD4[n,2n-2]>=2 for D1 and D2}
		If $M\in \{D_1,D_2\}$, then $ACD_{4,M}[n,2n-2]=2$.
	\end{theorem}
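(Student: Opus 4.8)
The plan is to mirror the proof of Theorem~\ref{ACD[n,2n-2] eq 2}, but to account for the fact that $D_1$ and $D_2$ are non-symmetric and do not lie in $\mathcal{A}$, so the clean involution $(C^M)^M=C$ is no longer available and must be replaced by the general identity $(C^{M^T})^M=C$ of \cite[Theorem~3.3]{Dualities_AbelianGroup}. First I would record the upper bound: for parameters $[n,2^{2n-2}]_A$ over $\mathbb{F}_4$ the Singleton bound gives $d\le n-\lceil (2n-2)/2\rceil+1=2$, so it suffices to exhibit one ACD code attaining $d=2$.

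For the construction, I would start from the optimal two-generator code supplied by Theorem~\ref{ACD_4[n,2]=2 for D1 and D2}: let $E$ be the code with $|E|=4$ and $d(E)=n$ that is ACD with respect to $M\in\{D_1,D_2\}$, namely $E=\langle(1,\dots,1),(\nu,\dots,\nu)\rangle$ when $n$ is odd and $E=\langle(1,\dots,1,\nu),(\nu,\dots,\nu,1)\rangle$ when $n$ is even. I then set $C:=E^{M^T}$. Three facts follow immediately: (i) by Proposition~\ref{C^{M^T} is ACD}, since $E$ is ACD with respect to $M$, its $M^T$-dual $C=E^{M^T}$ is again ACD with respect to $M$; (ii) the cardinality condition gives $|C|=|\mathbb{F}_4^n|/|E|=4^n/4=2^{2n-2}$, so $C$ has the required parameter $[n,2^{2n-2}]_A$; and (iii) by \cite[Theorem~3.3]{Dualities_AbelianGroup} one has $C^M=(E^{M^T})^M=E$.

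It remains to compute $d(C)=2$. The upper bound $d(C)\le 2$ is the Singleton bound above, so I only need to rule out codewords of weight $1$. Suppose $\mathbf{x}\in C=E^{M^T}$ has a single nonzero entry $x_j$. Because at every coordinate the two generators of $E$ take the values $1$ and $\nu$ (in some order), membership $\mathbf{x}\in E^{M^T}$ forces $\chi'_{x_j}(1)=\chi'_{x_j}(\nu)=1$, where $\chi'$ denotes the characters attached to $M^T$. Since $\{1,\nu\}$ generates $(\mathbb{F}_4,+)$ and $\chi'_{x_j}$ is a homomorphism, this means $\chi'_{x_j}$ is trivial on all of $\mathbb{F}_4$; as $\phi$ is an isomorphism, $x_j=0$, contradicting weight $1$. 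Hence $C$ has no weight-$1$ codeword and $d(C)=2$, giving $ACD_{4,M}[n,2n-2]\ge 2$; together with Singleton this yields equality.

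The main obstacle, and the only real departure from Theorem~\ref{ACD[n,2n-2] eq 2}, is the bookkeeping between $M$ and $M^T$: for $D_1,D_2$ the relation $(C^M)^M=C$ fails (indeed $\chi_x(x)=-1\neq 1$ for $x\neq 0$, so these dualities are outside $\mathcal{A}$), and one must consistently use $C=E^{M^T}$, the identity $(E^{M^T})^M=E$, and Proposition~\ref{C^{M^T} is ACD} (equivalently Proposition~\ref{C ACD M and MT}) to both realize the target cardinality and preserve the ACD property. Once the weight-$1$ exclusion is phrased as ``a nonzero element cannot be orthogonal to the whole space'' with respect to $M^T$, the argument closes exactly as in the symmetric/$\mathcal{A}$ case.
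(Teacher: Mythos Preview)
Your proof is correct and follows essentially the same approach as the paper: both start from the two-generator optimal code $E$ of Theorem~\ref{ACD_4[n,2]=2 for D1 and D2}, pass to its dual under the transposed duality to obtain a $[n,2^{2n-2}]$ code, and rule out weight-$1$ codewords by the ``no nonzero element is orthogonal to all of $\mathbb{F}_4$'' argument. The paper writes this slightly more tersely, naming the small code $C^{D_1}$ and setting $C=(C^{D_1})^{D_2}$ (using implicitly that $D_2=D_1^T$), whereas you make the invocation of Proposition~\ref{C^{M^T} is ACD} and of $(E^{M^T})^M=E$ explicit; but the construction and the logic are the same.
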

	\begin{proof}
		If $n$ is an odd integer, then choose $C^{D_1}=\left\langle (1,1,\dots,1),(\nu,\nu,\dots,\nu) \right\rangle$. Otherwise choose $C^{D_1}=\left\langle (1,1,\dots,1,\nu),(\nu,\nu,\dots,\nu,1) \right\rangle$.  We have $C=(C^{D_1})^{D_2}$. Let $C$ contains a vector $\textbf{x}$ of weight 1, without loss of generality, $\bm{x}=(x_1,0,0,\dots,0)$. Then $\chi_{x_1}(1)=1$ and $\chi_{x_1}(\nu)=1$, which is not possible in $D_2$. Thus, $C$ is an ACD code  for duality $D_1$ with $d\geq 2.$  Similarly, it can be proved for $D_2$. From the  Singleton bound for the additive code $[n,2^{(2n-2)}]$, we have $d\leq 2$. Hence, combining the results, we get $ACD_{4,M}[n,2n-2]=2$. 
	\end{proof}
	\begin{theorem}\label{ACD4[n,2n-1]=1 for D1 and D2}
		If $M\in \{D_1,D_2\}$ then $ACD_{4,M}[n,2n-1]=1$.
	\end{theorem}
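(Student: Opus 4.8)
The plan is to pin the value from both sides: the Singleton bound forces $ACD_{4,M}[n,2n-1]\le 1$, and a short construction obtained by dualizing a one\nobreakdash-dimensional ACD code shows that a code with these parameters (necessarily of distance $1$) exists. For the upper bound I would invoke the Singleton bound $d\le n-\lceil k/e\rceil+1$ for $[n,p^k,d]_A$ codes over $\mathbb{F}_{p^e}$ with $p=2$, $e=2$, $k=2n-1$. Since $\lceil (2n-1)/2\rceil=n$, this gives $d\le n-n+1=1$. As every nonzero additive code has distance at least $1$, it remains only to exhibit one ACD code of cardinality $2^{2n-1}$ for each $M\in\{D_1,D_2\}$.

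For existence, note that $D_1$ and $D_2$ are the two non-symmetric dualities over $\mathbb{F}_4$ and are transposes of one another, so for $M\in\{D_1,D_2\}$ the transpose $M^T$ again lies in $\{D_1,D_2\}$; let $\chi$ and $\chi'$ denote the characters of $M$ and $M^T$, so that $\chi'_x(y)=\chi_y(x)$. Recall from the proof of Theorem \ref{ACD_4[n,1]=2 for D1 and D2} that $\chi_x(x)=-1$ for every $x\in\mathbb{F}_4^{*}$. I would then take the weight\nobreakdash-one vector $v=(1,0,\dots,0)$ and set $D=\langle v\rangle=\{\mathbf{0},v\}$, a code of cardinality $2$. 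Membership $v\in D^{M^T}$ would force $\chi'_v(v)=1$, but $\chi'_v(v)=\chi_v(v)=-1\neq 1$, so $v\notin D^{M^T}$ and hence $D\cap D^{M^T}=\{\mathbf{0}\}$; thus $D$ is an ACD code with respect to $M^T$.

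Finally I would put $C=D^{M^T}$. By Proposition \ref{C^{M^T} is ACD}, applied with $M^T$ playing the role of $M$, the fact that $D$ is ACD with respect to $M^T$ makes $C=D^{M^T}$ an ACD code with respect to $(M^T)^T=M$. The cardinality condition $|D|\,|D^{M^T}|=|\mathbb{F}_4^n|=2^{2n}$ gives $|C|=2^{2n-1}$, so $C$ has parameters $[n,2^{2n-1}]$; being nonzero, its distance is at least $1$, whence $ACD_{4,M}[n,2n-1]\ge 1$. Combining with the Singleton bound yields $ACD_{4,M}[n,2n-1]=1$.

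The routine parts are the Singleton arithmetic and the cardinality bookkeeping; the only step needing care is the duality side, namely reading off $\chi'_v(v)=\chi_v(v)$ from $\chi'_x(y)=\chi_y(x)$ and invoking Proposition \ref{C^{M^T} is ACD} with the correct substitution $M\mapsto M^T$ so that $C=D^{M^T}$ comes out ACD with respect to $M$ rather than $M^T$. A cleaner alternative is to cite Theorem \ref{ACD_4[n,1]=2 for D1 and D2} for the existence of the one\nobreakdash-dimensional ACD code and Proposition \ref{C ACD M and MT} to transfer it from $M$ to $M^T$, but the explicit weight\nobreakdash-one vector keeps the argument self\nobreakdash-contained.
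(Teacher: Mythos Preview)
Your proposal is correct and follows essentially the same approach as the paper: bound above by Singleton, then dualize a one-generator ACD code via Proposition~\ref{C^{M^T} is ACD} to obtain the required $[n,2^{2n-1}]$ ACD code with respect to $M$. The paper is slightly terser, implicitly invoking Theorem~\ref{ACD_4[n,1]=2 for D1 and D2} for the small code rather than constructing $\langle(1,0,\dots,0)\rangle$ by hand, but your explicit verification is a harmless elaboration of the same idea (as you yourself note in the final paragraph).
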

	\begin{proof}
		According to the Singleton bound, we have $ACD_{4,M}[n,2n-1]\leq 1$. If $C$ is an $[n,2^1]$ ACD  code,  then 	 $C^{M^T}$ is an $[n,2^{2n-1}]$ ACD code  with respect to duality $M$ (as stated in Proposition \ref{C^{M^T} is ACD}). Consequently, we conclude that 	$ACD_{4,M}[n,2n-1]=1$, for all dualities $M$ over $\mathbb{F}_{4}$. 
	 
	\end{proof}
	We present the highest possible minimum distance of ACD codes on non-symmetric dualities over $\mathbb{F}_4$ in Table \ref{Table 2}. These ACD codes are either optimal or near optimal according to \cite{short_additive}. An $[n,p^k,d]$  additive code is said to be an optimal and near optimal ACD code, if $d=ACD[n,k]$ and $d=ACD[n,k]-1$, respectively, for a given $n$ and $k$. The Proposition \ref{C ACD M and MT} states that if C is an ACD code under duality $M$, then it is also an  ACD code under duality $M^T$.	This implies that the resulting table is the same for both dualities $D_1$ and $D_2$.
	
	 While MAGMA provides built-in functions for determining the dual space of additive codes with respect to standard dualities like (Euclidean, Hermitian and Trace), it lacks functions for the specific dualities  that we are utilizing. Consequently, we have developed an algorithm for non-symmetric dualities over $\mathbb{F}_4$ to assess whether a given additive code meets the criteria for an ACD code. This investigation was conducted using the MAGMA software package \cite{Magma}. For those interested, generator matrices for ACD codes under a non-symmetric duality can be accessed online at \href{https://drive.google.com/file/d/1FsLaVkt8K4k1uQuwVEt5i5iYEqPaIPUj/view?usp=sharing}{https://drive.google.com/file/d/1FsLaVkt8K4k1uQuwVEt5i5iYEqPaIPUj/view?usp=sharing}.
	
	In Table \ref{Table 2}, we identify some new  quaternary ACD codes with optimal parameters under non-symmetric dualities which were not optimal under any symmetric dualities (see \cite{ACDgroupcharacter}). For example, $[5,2^6,6]$, $[6,2^6,4]$, $[7,2^5,5]$, $[7,2^7,4]$, $[7,2^9,3]$, $[8,2^8,4]$, $[8,2^{11},3]$, $[9,2^{10},4]$, $[9,2^{15},2]$, $[10,2^4,8]$, $[10,2^5,7]$ and $[10,2^9,5]$. Moreover, quaternary ACD codes have better minimum distance than quaternary Hermitian LCD codes, see \cite{quaternaryHermitian}. The entries with boldface  represent optimal quaternary codes and $*$ represents the new quaternary codes with optimal parameter.
	\begin{table}[h]
		\caption{Highest minimum distance  for quaternary ACD codes with respect to non-symmetric dualities. The entries with $*$ represent improvements over the prior works.}
		\begin{tabular}{c||cccccccccccccccccccc}
			\hline
			n/k&1&2&3&4&5&6&7&8&9&10&11&12&13&14&15&16&17&18&19&20\\
			\hline
			1&\textbf{1}&\textbf{1}\\
			2&\textbf{1}&\textbf{2}&\textbf{1}&\textbf{1}\\
			3&\textbf{3}&\textbf{3}&\textbf{2}&\textbf{2}&\textbf{1}&\textbf{1}\\
			4&\textbf{3}&\textbf{4}&\textbf{3}&\textbf{3}&\textbf{2}&\textbf{2}&\textbf{1}&\textbf{1}\\
			5&\textbf{5}&\textbf{5}&\textbf{4}&\textbf{4}&\textbf{3}&\bm{$3^*$}&\textbf{2}&\textbf{2}&\textbf{1}&\textbf{1}\\
			6&\textbf{5}&\textbf{6}&\textbf{5}&\textbf{4}&\textbf{4}&\bm{$4^*$}&\textbf{3}&\textbf{2}&\textbf{2}&\textbf{2}&\textbf{1}&\textbf{1}\\
			7&\textbf{7}&\textbf{7}&\textbf{6}&\textbf{5}&\bm{$5^*$}&\textbf{4}&\bm{$4^*$}&\textbf{3}&\bm{$3^*$}&\textbf{2}&\textbf{2}&\textbf{2}&\textbf{1}&\textbf{1}\\
			8&\textbf{7}&\textbf{8}&\textbf{6}&\textbf{6}&$5$&\textbf{5}&\textbf{4}&\bm{$4^*$}&\textbf{3}&\textbf{3}&\bm{$3^*$}&\textbf{2}&\textbf{2}&\textbf{2}&\textbf{1}&\textbf{1}\\
			9&\textbf{9}&\textbf{9}&\textbf{7}&\textbf{7}&\textbf{6}&$5$&\textbf{5}&$4$ &\textbf{4}&\bm{$4^*$}&\textbf{3}&\textbf{3}&$2$&\textbf{2}&\bm{$2^*$}&\textbf{2}&\textbf{1}&\textbf{1}\\
			10&\textbf{9}&\textbf{10}&\textbf{8}&\bm{$8^*$}&\bm{$7^*$}&\textbf{6}&$5$&$5$&\bm{$5^*$}&$4$&\textbf{4}&$3$&\textbf{3}&\textbf{3}&\textbf{2}&\textbf{2}&\textbf{2}&\textbf{2}&\textbf{1}&\textbf{1}\\
			\hline
		\end{tabular}
		\label{Table 2}
	\end{table}
	\section{Conclusion}
	In this article, we have studied the additive codes over finite fields under skew-symmetric dualities. We have given the exact count of symmetric  and skew-symmetric dualities over $\mathbb{F}_q$. We have provided a necessary and sufficient condition for an additive code to be an ACD code. We have given some bounds on the highest minimum distance of an ACD code under skew-symmetric dualities. Moreover, we identified some new optimal codes over $\mathbb{F}_4$ under non-symmetric dualities. It indicates that further study over non-symmetric dualities is worthwhile. Also, one can study additive self-dual codes  over skew-symmetric dualities.
\section*{Acknowledgments}
 The first author is supported by UGC, New Delhi, Govt. of India under grant DEC18-417932. The second author is ConsenSys Blockchain chair professor. He thanks ConsenSys AG for that privilege.	
 \section*{Statements and Declarations}
\textbf{Author’s Contribution}: All authors contributed equally to this work.\\
\textbf{Conflict of interest:} The authors have no conflicts of interest in this paper.

	\bibliographystyle{abbrv}
	\bibliography{Reference}

\begin{thebibliography}{10}

\bibitem{codingtheory}
J.~Bierbrauer.
\newblock {\em Introduction to Coding Theory(2nd ed.)}.
\newblock Chapman and Hall/CRC., 2017.
\newblock doi: \url{https://doi.org/10.1201/9781315371993}.

\bibitem{short_additive}
J.~Bierbrauer, Y.~Edel, G.~Faina, S.~Marcugini, and F.~Pambianco.
\newblock Short additive quaternary codes.
\newblock {\em IEEE Transactions on Information Theory}, 55:952--954, 2009.
\newblock \url{https://api.semanticscholar.org/CorpusID:9975274}.

\bibitem{Magma}
W.~Bosma, J.~Cannon, and C.~Playoust.
\newblock The {M}agma algebra system. {I}. {T}he user language.
\newblock {\em J. Symbolic Comput.}, 24(3-4):235--265, 1997.
\newblock Computational algebra and number theory (London, 1993), doi:
  \url{https://doi.org/10.1006/jsco.1996.0125}.

\bibitem{quantum}
A.~Calderbank, E.~Rains, P.~Shor, and N.~Sloane.
\newblock Quantum error correction via codes over {G}{F}(4).
\newblock {\em IEEE Transactions on Information Theory}, 44(4):1369--1387,
  1998.
\newblock doi: \url{https://doi.org/10.1109/18.681315}.

\bibitem{algebraicapproach}
P.~Delsarte.
\newblock An algebraic approach to the association schemes of coding theory.
\newblock {\em Philips Res. Rep. Suppl.}, 10:vi+97, 1973.
\newblock \url{https://cir.nii.ac.jp/crid/1570291225098257152}.

\bibitem{associationschemes}
P.~Delsarte and V.~I. Levenshtein.
\newblock Association schemes and coding theory.
\newblock volume~44, pages 2477--2504. 1998.
\newblock Information theory: 1948--1998, doi:
  \url{https://doi.org/10.1109/18.720545}.

\bibitem{Algebraiccoding}
S.~T. Dougherty.
\newblock {\em Algebraic coding theory over finite commutative rings}.
\newblock SpringerBriefs in Mathematics. Springer, Cham, 2017.
\newblock doi: \url{https://doi.org/10.1007/978-3-319-59806-2}.

\bibitem{Dualities_AbelianGroup}
S.~T. Dougherty.
\newblock Dualities for codes over finite {A}belian groups.
\newblock {\em Advances in Mathematics of Communications}, 2023.
\newblock doi: \url{https://doi.org/10.3934/amc.2023023}.

\bibitem{ACDgroupcharacter}
S.~T. Dougherty, S.~Şahinkaya, and D.~Ustun.
\newblock Additive complementary dual codes from group characters.
\newblock {\em IEEE Transactions on Information Theory}, 68(7):4444--4452,
  2022.
\newblock doi: \url{https://doi.org/10.1109/TIT.2022.3162181}.

\bibitem{AdditiveGcode}
S.~T. Dougherty and C.~Fern\'{a}ndez-C\'{o}rdoba.
\newblock Additive {$G$}-codes over {$\mathbb{F}_q$} and their dualities.
\newblock {\em Finite Fields Appl.}, 72:Paper No. 101821, 21, 2021.
\newblock doi: \url{https://doi.org/10.1016/j.ffa.2021.101821}.

\bibitem{selfdualadditive}
S.~T. Dougherty, A.~Korban, and S.~\c{S}ahinkaya.
\newblock Self-dual additive codes.
\newblock {\em Appl. Algebra Engrg. Comm. Comput.}, 33(5):569--586, 2022.
\newblock doi: \url{https://doi.org/10.1007/s00200-020-00473-5}.

\bibitem{Orthogonality}
S.~T. Dougherty and S.~Myers.
\newblock Orthogonality from group characters.
\newblock {\em Involve}, 14(4):555--570, 2021.
\newblock doi: \url{https://doi.org/10.2140/involve.2021.14.555}.

\bibitem{AsymmetricQuantumCodes}
M.~F. Ezerman, S.~Ling, and P.~Sole.
\newblock Additive {A}symmetric {Q}uantum codes.
\newblock {\em IEEE Transactions on Information Theory}, 57(8):5536--5550,
  2011.
\newblock doi: \url{https://doi.org/10.1109/TIT.2011.2159040}.

\bibitem{quantumcodesfromselfdual}
M.~Harada.
\newblock New quantum codes constructed from some self-dual additive
  $\mathbb{F}_4$-codes.
\newblock {\em Information Processing Letters}, 138:35--38, 2018.
\newblock
  \url{https://www.sciencedirect.com/science/article/pii/S0020019018301157}.

\bibitem{invertibleskew-symmetric}
O.~Jones.
\newblock On the geometry of varieties of invertible symmetric and
  skew-symmetric matrices.
\newblock {\em Pacific J. Math.}, 180(1):89--100, 1997.
\newblock \url{https://doi.org/10.2140/pjm.1997.180.89}.

\bibitem{quaternaryHermitian}
L.~Lu, X.~Zhan, S.~Yang, and H.~Cao.
\newblock Optimal {Q}uaternary {H}ermitian {LCD} codes.
\newblock abs/2010.10166, 2020.
\newblock doi: \url{https://arxiv.org/abs/2010.10166}.

\bibitem{lcd}
J.~L. Massey.
\newblock Linear codes with complementary duals.
\newblock {\em Discrete Mathematics}, 106-107:337--342, 1992.
\newblock doi: \url{https://doi.org/10.1016/0012-365X(92)90563-U}.

\bibitem{Quantumstabilizercode}
D.~M. Nguyen and S.~Kim.
\newblock Quantum stabilizer codes construction from {H}ermitian
  self-orthogonal codes over {G}{F}(4).
\newblock {\em Journal of Communications and Networks}, 20(3):309--315, 2018.
\newblock doi: \url{https://doi.org/10.1109/JCN.2018.000043}.

\bibitem{invertiblesymmetric}
R.~Stanley and S.~Fomin.
\newblock {\em Enumerative Combinatorics: Volume 2}.
\newblock Cambridge Studies in Advanced Mathematics. Cambridge University
  Press, 1999.
\newblock \url{https://books.google.co.in/books?id=cWEhAwAAQBAJ}.

\end{thebibliography}

\end{document}